\documentclass[11pt,letterpaper]{article}

\usepackage{ramp}

\def\authornotes{1}

\ifnum\authornotes=1
    \newcommand{\tselil}[1]{\footnote{\color{ForestGreen}Tselil: {#1}}}
    \newcommand{\misha}[1]{\footnote{\color{Orange}Misha: {#1}}}

    \newcommand{\Tnote}[1]{{\color{ForestGreen}[Tselil: #1]}}
    \newcommand{\mnote}[1]{{\color{Orange}[Misha: #1]}}
	\newcommand{\todo}[1]{{\color{Red} (TODO: #1)}}
\else
    \newcommand{\tselil}[1]{}
    \newcommand{\misha}[1]{}

    \newcommand{\Tnote}[1]{}
    \newcommand{\mnote}[1]{}

	\newcommand{\todo}[1]{}
\fi

\newcommand{\Ber}{\mathsf{Ber}}

\usepackage{scalerel}

\DeclareMathOperator{\GOE}{GOE}

\newcommand{\vAMP}{v_{\mathrm{AMP}}}

\DeclareMathOperator{\op}{\mathsf{op}}
\DeclareMathOperator{\plim}{\operatornamewithlimits{p-lim}}

\DeclareMathOperator{\PL}{PL}

\newcommand{\ind}{\boldsymbol{1}}

\newcommand{\sign}{\mathsf{sign}}

\DeclareMathOperator{\ReLu}{ReLu}

\DeclareMathOperator{\deloc}{del}

\author{Misha Ivkov\thanks{Stanford University. \texttt{mishai@stanford.edu}. Supported by NSF Graduate Research Fellowship.} \and Tselil Schramm\thanks{Stanford University.  \texttt{tselil@stanford.edu}. Supported by NSF CAREER award \# 2143246.}}

\title{Easy, robust approximate message passing for planted spike models}
\date{\today}

\begin{document}
\maketitle

\begin{abstract}
We present a simple and efficient algorithm for robust approximate message passing (AMP) in the spiked matrix setting.
In particular, let $\eps$ be a sufficiently small constant, and suppose that $X \in \R^{n \times n}$ is a Gaussian matrix with a planted rank-$1$ spike, and $E \in \R^{n \times n}$ is an adversarially chosen matrix supported on an $\eps n \times \eps n$ principal minor.
Let $v_{\mathrm{AMP}}(X)$ be the output of an AMP iteration on the uncorrupted matrix $X$.
We give a procedure that, given access only to the corrupted matrix $Y = X + E$, computes a vector $v_{\mathrm{ALG}}(Y)$ which is $\tilde{O}(\sqrt{\eps})$-close to $v_{\mathrm{AMP}}(X)$, for any of a class of AMP iterations which includes sparse Principal Component Analysis (PCA), non-negative PCA, and $\Z_2$ synchronization.
Our algorithm consists of a spectral pre-processing step combined with a robust spectral initialization procedure; given these inputs, we prove that (perhaps surprisingly) AMP is robust out-of-the-box.

\end{abstract}
\thispagestyle{empty}

\clearpage

{ \hypersetup{hidelinks} \tableofcontents }
\thispagestyle{empty}
\clearpage
\setcounter{page}{1}

\section{Introduction}

Many algorithmic problems that arise in high-dimensional statistics can be simply modeled as follows: we observe a matrix $X \in \R^{n \times n}$, where $X$ is the sum of a low-rank ``signal'' and i.i.d. entrywise ``noise'' from a well-behaved distribution (such as zero-mean Gaussian); in order to recover the low-rank signal, we compute a maximizer of the quadratic form $u^\top X u$ over vectors $u$ in some constraint set $K$.
Typically $X$ is a stand-in for the covariance matrix of observed data, and optimizing over $u \in K$ reflects the fact that we are looking for relations between covariates with a given structure, e.g. we may ask that $u \ge 0$ if we wish to recover positive associations.

A popular algorithm for this task is Approximate Message Passing (AMP), a family of algorithmic methods which generalize matrix power iteration.
The basic AMP algorithm starts from some initialization $x^{(0)} \in \R^n$ and computes a series of iterates by setting $x^{(t+1)} \approx f(Xx^{(t)})$,\footnote{The $\approx$ relation hides an additive term, the ``Onsager correction,'' which depends on $x^{(t)}$, but for the sake of simplicity we ignore this in the present discussion.} 
where the ``denoiser'' $f$ is a function (of the algorithm designers' choosing) from $\R \to \R$ applied coordinate-wise.
The goal of the ``powering'' action, $X x^{(t)}$, is to increase the quadratic form, while the denoiser $f$ is chosen to bring $f(Xx^{(t)})$ close to the constraint set $K$.

Though worst-case quadratic optimization problems are hopelessly hard, AMP algorithms have enjoyed enormous success in high-dimensional statistics, i.e. when $X$ is of the form ``signal + noise''.
The field of high-dimensional statistics has studied this case and its cousins extensively over the past decade, from which a rich and principled theory has emerged.
AMP algorithms were initially introduced as a generalization of Belief Propagation through the lens of statistical physics~\cite{Bol14, DMM09, BM11} but are now used for a variety of tasks, such as compressed sensing~\cite{DMM09}, sparse Principal Component Analysis (PCA)~\cite{DM14}, compact group reconstruction~\cite{DAM17,PWBM18,LFW23}, non-negative PCA~\cite{MR15}, and linear regression~\cite{DMM09, BM11, KMSSZ12}.
The upshot is that in this ``signal + noise'' setting, it is often possible to design denoiser functions that either achieve the information-theoretically optimal reconstruction of the spike or, at the very least, the best known computational bound.
We refer the reader to~\cite{FVRS22, MV21} for further examples and generalizations.

Many AMP algorithms are not natively robust, in the sense that small perturbations to the input can dramatically affect the output.
This is easily seen to be true in the worst case (where it follows from the $\NP$-hardness of quadratic optimization), but AMP is fragile even on the ``average case'' inputs $X$ of the form ``signal + low rank noise'', for example if the noise matrix has non-zero mean~\cite{CZK14,RSFS19,BG19}, or under additive perturbation by a structured spike of large norm~\cite[Example 1.10]{IS24}.
Indeed, even the matrix power iteration algorithm, which is more-or-less an AMP iteration with identity denoisers, is not robust, as small perturbations to the input can dramatically affect the singular vectors.

Prior works (\cite{IS24, IS25}) have established that, for a certain class of adversarial corruptions, it is possible to efficiently (with practical algorithms) recover vectors close to the AMP iteration on the \emph{uncorrupted} $X$.
In~\cite{IS25}, we show that AMP can be made robust to a class of corruptions called \emph{principal minor corruptions} (definition below) via a combination of spectral pre-processing and per-iteration outlier removal.
But the results of~\cite{IS24, IS25} are limited to the ``null'' model, in which the prior distribution over $X$ has no planted signal, and is purely composed of entrywise i.i.d. noise.

Unfortunately, the analysis of~\cite{IS24,IS25} does not extend to settings with a nontrivial spike.
For problems inspired by high-dimensional statistics, allowing a signal component in $X$ is essential; otherwise we cannot understand the capabilities of the algorithm for extracting signal from noise.

The main contribution of this work is an efficient (and indeed \emph{practical}) algorithmic procedure that makes AMP algorithms with a broad class of denoiser functions robust to principal minor corruptions, when the input comes from the more canonical ``\emph{signal} + noise'' model.
The core of our algorithm is a new robust algorithm for eigenvector recovery, which may be of independent interest.

\subsection{Setup \& Definitions}

\subsubsection{Spiked matrix model}

\begin{definition}[Spiked Gaussian matrix model]
	The \emph{Gaussian Orthogonal Ensemble} $\GOE(n)$ is the distribution over symmetric matrices $G$ with diagonal entries $G_{i,i}\sim N(0,2/n)$ and off diagonal entries $(G_{i,j} = G_{j,i}) \sim N(0,1/n)$ independently.

	We call a matrix a \emph{Spiked Gaussian} matrix if it is of the form $X = \frac{\lambda}{n} vv^\top + G$ for some $v\in \R^n$ with $\|v\|_2^2 = (1 \pm o_n(1))n$ and $G\sim \GOE(n)$.
	In this setting, we call $v$ the \emph{spike}.
\end{definition}

The celebrated work of Baik, Ben Arous, and P\'{e}ch\'{e}~\cite{BBP05} establishes that the spiked matrix model undergoes a phase transition.
On the one hand, when $\lambda < 1$, the spike is information theoretically ``invisible'' to the top eigenvector $\varphi_1(X)$ of $X$: in other words, $\langle v, \varphi_1(X)\rangle \rightarrow 0$ and $\|X\|_{\op} = 2 + o_n(1)$.
On the other hand, when $\lambda > 1$, the spike induces an outlier eigenvalue, and the top eigenvector is nontrivially correlated with the spike.
In particular, with high probability we have that $\langle v, \varphi_1(X)\rangle^2 \rightarrow 1 - \lambda^{-2}$ and $\|X\|_{\op} = \lambda + \frac 1\lambda + \widetilde O(n^{-1/2})$.\footnote{The $\widetilde O$ hides logarithmic terms in $n$; the fluctuations of the top eigenvalue look approximately Gaussian.}
At the same time, the rest of the spectrum of $X$ has the same limiting law as the GOE (the \emph{semicircle law}) and remains between $-2-\widetilde O(n^{-2/3})$ and $2+\widetilde O(n^{-2/3})$.\footnote{The fluctuations of the bulk come from the \emph{Tracy Widom} distribution.}
{\bf From here on out, we will denote the typical value of the largest eigenvalue in the spiked matrix model by $\widetilde \lambda \triangleq \lambda + \frac 1\lambda$ to reduce notation.}

We will consider the case most common in the AMP literature (cf.~\cite[Equation (M1)]{FVRS22}) wherein the spike $v$ is sampled i.i.d. coordinate-wise from some prior distribution $\pi_V$ with $\E[V^2] = 1$.\footnote{We will be mainly interested in ``nice'' distributions $V$ where it is the case that $\|v\|_2^2 = (1 \pm o_n(1))n$ with high probability.}
Throughout we will use $V$ to refer to a random variable sampled from $\pi_V$.

Our results will allow for any $V$ with nice tail behavior.
In particular, we will deal with subgaussian $V$.
A mean-$0$ random variable $\bX$ is said to be \emph{$\sigma$-subgaussian} if for each integer $k \in \N$, $\E[|\bX|^k] \le \sigma^k k^{k/2}$.
For example, a mean-$0$ Gaussian with variance $\sigma^2$ is $\sigma$-subgaussian, and a uniformly random sign $\in \{\pm 1\}$ is $1$-subgaussian.
Note that rescaling a $\sigma$-subgaussian variable $\bX$ to $C \bX$ for constant $C$ rescales the subgaussian parameter to $C\sigma$.
When $\bX$ is not mean-$0$, we will call it $\sigma$-subgaussian if $\bX - \E[\bX]$ is $\sigma$-subgaussian.

\subsubsection{Approximate Message Passing and example algorithmic problems}

\begin{definition}[AMP algorithm]\label{def:amp}
An \emph{Approximate Message Passing algorithm} is specified by a sequence of denoiser functions $\calF = f_0,f_1,f_2,\ldots$, with $f_t : \R \to \R$ for each $t \in \N$.
It takes as input a symmetric matrix $X\in \R^{n\times n}$, an initialization $v^0(X)$, a number of iterations $T \in \N$, and produces a pair of sequences of iterates $\widetilde x^0 \triangleq v^0, \widetilde x^1, \ldots,\widetilde x^{T-1}$ and $x^1, x^2, \ldots, x^T$ defined by
\[
\widetilde x^t = Xx^t - \Delta_t(x^{t-1}), \qquad x^{t+1} = f_t(\widetilde x^t)
\]
where $f_{t}$ is applied coordinate-wise, and $\Delta_t$ is the \emph{Onsager correction term} for decreasing correlations between iterates and is fully determined by $\calF$ (see~\pref{def:amp-iteration}).
AMP algorithms often also come with a \emph{rounding} procedure which is applied to the final iterate $x^T$, in order to ensure it satisfies the optimization constraints.
\end{definition}

Note that the above definition does not capture a fully general AMP iteration due to the restriction that $f_t$ be applied coordinate-wise; however, the majority of the AMP literature does use coordinate-wise denoisers.
The initialization $v^0(X)$ is typically chosen to have non-negligible correlation with $v$: depending on the problem, the usual choices are $v^0(X) \propto \E[V]\vec 1$ or $v^0(X) = \lambda\sqrt n \varphi_1(X)$.

We have chosen a few representative examples below, which have been studied in the high-dimensional statistics literature and which represent a varied set of spike structures and different levels of theoretical understanding (non-negative PCA is fully characterized, sparse PCA is the subject of some open problems).
In each example, $X$ comes from the Spiked Gaussian Model, with a different prior over the spike.

\begin{example}[non-negative PCA (nnPCA)]\label{eg:nnpca}
In the non-negative principal component analysis (nnPCA) problem, the spike has a prior distribution $\pi_V$ which is non-negative almost surely, and one is asked to maximize $x^\top X x$ over non-negative unit vectors $x \ge 0$.
The AMP algorithm which starts from $v^0(X) = \vec 1$
and uniformly chooses the separable denoiser $f_t(x) = \max(x,0)$ achieves the optimal quadratic form and optimal correlation with the signal (\cite{MR15}).\footnote{Technically $x^{T}$ may not be a unit vector nor non-negative, but AMP algorithms such as this one usually include a final ``rounding'' step---in this case, the rounding is just applying $f(x) = \max(x,0)$ followed by projection to the unit ball.}
In this case, up to the Onsager correction, AMP coincides with projected gradient ascent with ``infinite'' step size.
\end{example}

\begin{example}[$\Z_2$ synchronization]\label{eg:sync}
In the $\Z_2$ synchronization problem, the spike has a prior distribution $\pi_V$ which is uniform over $\{\pm 1\}$, and one is asked to maximize $x^\top X x$ over vectors $x$ with $x_i^2 = 1$.
The AMP algorithm which starts from $v^0(X) = \lambda\sqrt{n}\varphi_1(X)$ and uniformly chooses the separable denoiser $f_t(x) = \tanh(\lambda x)$ achieves the Bayes optimal quadratic form and correlation (\cite{DAM17}).\footnote{The rounding procedure here is to return $\sign(x^T)$, applied entrywise.}
\end{example}

\begin{example}[Sparse PCA]\label{eg:sparse-pca}
	In the Sparse PCA problem, the spike has a prior distribution $\pi_V$ which satisfies $\pi_V(\{0\}) \ge 1-\delta$, and one is asked to maximize $x^\top X x$ over $\delta$-sparse unit vectors $x$ (in other words, $\|x\|_0 \le \delta n$).
We will study the canonical special case where $V = \delta^{-1/2} \Ber(\delta)$ and $\delta < 1$ is sufficiently large; in this setting, Bayes-optimal denoisers are known (\cite{DM14}).\footnote{We mention that even if $\pi_V$ is unknown, a reasonable AMP iteration for this setting starts from $v^0(X) = \lambda\sqrt{n}\varphi_1(X)$ and uses the denoisers $f_t(x) = \sign(x) \cdot (|x| - \tau_t)_+$, for appropriately chosen $\tau_t$ (\cite{MV21}). We will not address this setting, as establishing whether the iterates are expanding/controlled is difficult to do without an explicit choice for $\tau_t$, and the optimal choice for $\tau_t$ is difficult to characterize theoretically but can be computed analytically (cf. the discussion surrounding~\cite[Proposition 2.1]{MV21}).}
\end{example}

With these examples in mind, we will consider two restricted classes of AMP iterations which capture a wide variety of results.
The first is the class of ``expanding'' AMP iterations, which at a high level are iterations in which the norm of the iterates grows sufficiently quickly with each step. 
We remark that this property is mediated by the denoiser functions; for example if the denoisers satisfy $\|f\|_\infty \ll \lambda$, the AMP iteration will not be expanding, whereas we will later show that $f(x) = \max(x,0)$ does expand.

\begin{definition}[Expanding AMP Iteration]\label{def:expanding}
	We call an AMP iteration $(L,c_1,c_2)$-\emph{expanding} for parameters $L\ge 1$ and $c_1,c_2\le 1$ if the Lipschitz constant of $f_t$ is at most $L$ for all $0\le t < T$, $f_t(0) = 0$ for all $0\le t < T$\footnote{This is satisfied by the nnPCA iteration. It is also satisfied by any Bayes-optimal AMP iteration when $V$ is symmetric.}, and furthermore,
\[
\Pr_{X}\left[
\begin{aligned}
\text{ for all }0 \le t < T,\,\, \|x^{t+1}\|_2^2 &\ge c_1^2 \|\widetilde x^t\|_2^2\\
\text{ and, for all }1 \le t < T,\,\, \|x^{t+1}\|_2^2 &\ge (c_2\lambda)^2 \|x^t\|_2^2
 \end{aligned}
\right] \ge 1-o_n(1).
\]
\end{definition}

To capture AMP algorithms which are not expanding, we introduce the notion of a controlled AMP iteration.
Again at a high level, a controlled AMP iteration is one where the denoiser functions are very flat in a neighborhood of $\supp(V)$.
This is a desirable property when the support of $V$ is discrete, because the denoiser $f$ is trying to softly ``round'' a continuous input to $\supp(V)$, so it is reasonable to map values in an interval around each $v \in \supp(V)$ very close to the value $v$.

\begin{definition}[Controlled AMP Iteration]\label{def:controlled}
	We call an AMP iteration $(B, c_1, c_2)$-\emph{controlled} for parameters $B,c_1,c_2 \ge 1$ with respect to $\lambda$ and $V$ if, for every $0\le t < T$,
	\begin{itemize}
		\item Each denoiser is totally bounded: $\sup_x |f_t(x)|\le B$.
		\item Each denoiser's Lipschitz constant is bounded: $\sup_{x,y} \frac{|f_t(x) - f_t(y)|}{|x - y|} \le B^2\lambda$.
		\item Each denoiser's \emph{local Lipschitz constant} is inverse-polynomially bounded in $\lambda$ whenever $\lambda \ge 4$: for each $v\in \supp(V)$, define the interval $I_v = \mu_t v \pm \frac {\mu_t} 5$; here $\mu_t$ is the expected correlation of $x^t$ with $v$. Typically it is true that $\mu_t$ is a constant proportional to $\lambda$ and must be computed as part of the AMP analysis; we define it formally in \pref{def:state-evolution}. 
We then require that for all $v \in \supp(V)$,
		\[\sup_{x,y\in I_v} \frac{|f_t(x) - f_t(y)|}{|x - y|} \le \frac{c_1}{\lambda}.\]
		\item When $\lambda \ge 4$, the signal correlation $\mu_t$ is large for all iterations: $\mu_t \ge \frac{\lambda}{2}$.
\item The final iterate has nontrivial norm:
		\[\Pr_{X}\left[ c_2\|x^{T}\|_2^2 \ge n\right] \ge 1 - o_n(1).\]
	\end{itemize}

\end{definition}

To our knowledge, these classes of iterates have not been considered in the past. 
Indeed, the conditions are not exactly self-evident, but they turn out to both (1) suffice to ensure that AMP, in combination with a spectral pre-processing step, is robust out-of-the-box, and (2) capture many canonical AMP-amenable spiked matrix models.
As we will show in appendix~\ref{sec:appendix-state-evolution}, each of the above examples is either expanding or controlled:

\begin{proposition}[Classification of AMP iterations]\torestate{
\label{prop:amp-classification}
	The following are true.
	\begin{itemize}
		\item The nnPCA AMP iteration is $(1,1/\sqrt 2,\E[V]/\sqrt{\pi})$-expanding.
		\item The $\Z_2$ synchronization AMP iteration is $(1,4,5^T)$-controlled.
		\item The Bayes-optimal Sparse PCA AMP iteration when $V = \delta^{-1/2} \Ber(\delta)$ is $(\frac{1}{\sqrt{\delta}}, 28, \frac{1}{\delta})$-controlled.
	\end{itemize}
}
\end{proposition}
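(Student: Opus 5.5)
Each of the three claims is a state-evolution computation. I would first recall the standard state evolution for AMP on the spiked model (\pref{def:state-evolution}): for each fixed $t$, the empirical distribution of the coordinates of $\widetilde x^t$ converges to that of $\mu_t V + \sigma_t Z$ with $Z\sim N(0,1)$ independent of $V$, where $\mu_{t+1} = \lambda\E[V f_t(\mu_t V + \sigma_t Z)]$ and $\sigma_{t+1}^2 = \E[f_t(\mu_t V+\sigma_t Z)^2]$. Consequently $\frac1n\|x^{t+1}\|_2^2 \to \E[f_t(\mu_tV+\sigma_tZ)^2]$ and $\frac1n\|\widetilde x^t\|_2^2\to \mu_t^2+\sigma_t^2$ in probability. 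This lets me verify each norm inequality, with a small constant slack, on the limiting scalar quantities. The worst-case conditions (Lipschitz constants, boundedness, $f_t(0)=0$) are checked by hand on the explicit denoisers.

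\textbf{nnPCA.} Here $f(x)=\max(x,0)$, which is $1$-Lipschitz with $f(0)=0$.
\begin{itemize}
\item For the first expansion condition, I use $V\ge 0$: for $W=\mu V+\sigma Z$ with $\mu\ge 0$ I claim $\E[W_+^2]\ge \frac12\E[W^2]$. This holds because, conditionally on $V$, the law of $W$ is a Gaussian with nonnegative mean, and such a Gaussian puts at least half of its second moment on the positive side (compare $W$ with $-W$). This gives $c_1 = 1/\sqrt2$.
\item For the second condition, I need $\E[(\mu_tV+\sigma_tZ)_+^2]\ge (c_2\lambda)^2\E[(\mu_{t-1}V+\sigma_{t-1}Z)_+^2]$. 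I would lower bound the left side by $\mu_t^2\E[V]^2$ by Jensen ($x_+^2$ is convex and $Z$ has mean zero). Then I substitute $\mu_t = \lambda\E[V x^t_i]$ in the limit and bound $\E[V\,(\cdot)_+]$ from below against $\E[(\cdot)_+^2]^{1/2}$.
\item I expect this correlation-versus-norm comparison to produce the $1/\sqrt\pi$ factor: $\E[Z_+] = 1/\sqrt{2\pi}$ and $\E[Z_+^2]=1/2$. It is the main obstacle. If the direct route fails, I would fall back on the cruder bound $\sigma_{t+1}^2\ge \sigma_t^2/2$ together with $\mu_{t+1}\ge \lambda\E[V]\,\E[(\cdot)_+]$ and track constants.
\end{itemize}

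\textbf{$\Z_2$ synchronization.} Here $f_t=\tanh(\lambda\cdot)$.
\begin{itemize}
\item It is bounded by $B=1$, has $f(0)=0$, and has Lipschitz constant $\lambda\le B^2\lambda$.
\item For the local Lipschitz bound, on $I_{\pm1}=\pm\mu_t\pm\mu_t/5$ every point satisfies $|x|\ge 4\mu_t/5\ge 2\lambda/5$, using $\mu_t\ge\lambda/2$. There $f'(x)=\lambda\,\mathrm{sech}^2(\lambda x)\le 4\lambda e^{-4\lambda^2/5}$, which is at most $4/\lambda$ once $\lambda\ge4$.
\item The condition $\mu_t\ge\lambda/2$ I would prove by induction. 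The base case comes from the spectral initialization, whose correlation is $\lambda\sqrt{1-\lambda^{-2}}\ge\lambda/2$. For the inductive step, $\mu_{t+1}=\lambda\E[\tanh(\lambda(\mu_t+\sigma_tZ))]$ with $\sigma_t\le1$, and $\E\tanh\ge 1-2\Pr[Z<-\mu_t]$, which is at least $1/2$.
\item For the final norm, $\|x^T\|^2/n\to\E\tanh^2(\cdot)$, which is bounded below by a constant. The generous $5^T$ suggests the authors lose a factor $5$ per step through some induction, so for small $\lambda$ I would allow a per-step constant-factor loss.
\end{itemize}

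\textbf{Sparse PCA.} The Bayes-optimal denoiser is $f_t(y)=\E[V\mid \mu_tV+\sigma_tZ=y]$. For Bayes-optimal AMP one has $\mu_{t+1}=\lambda\sigma_{t+1}^2$, and this posterior-mean formula is what I would use.
\begin{itemize}
\item This denoiser is a logistic function in $y$ with range $[0,\delta^{-1/2}]$, so $B=\delta^{-1/2}$.
\item Its derivative equals $\mathrm{Var}(V\mid y)\cdot\mu_t/\sigma_t^2$. This is at most $\frac{1}{4\delta}\cdot\lambda$, which gives the bound $B^2\lambda$.
\item Near the support points $0$ and $\mu_t\delta^{-1/2}$, the posterior variance is exponentially small in $\mu_t^2/\sigma_t^2$. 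This yields a local Lipschitz constant at most $28/\lambda$ for $\lambda\ge4$; I would use a crude estimate there.
\item The conditions $\mu_t\ge\lambda/2$ and $\|x^T\|^2\ge\delta n$ follow from the monotone increase of $\mu_t$ under Bayes-optimal state evolution, starting from the spectral initialization.
\end{itemize}
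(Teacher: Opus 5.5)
There are two genuine gaps, both in the norm-growth conditions, and these carry the real content of the statement. For nnPCA you correctly reduce the second expanding condition to a uniform lower bound on $\E[VW_+]^2/\E[W_+^2]$ with $W=\mu_{t-1}V+\sigma_{t-1}Z$. But you leave this as ``the main obstacle,'' and it is essentially the whole of $c_2$. The missing idea, which is how the paper proceeds, is to pass to $\rho_t=\mu_t/\sigma_t$. With $a(\rho)=\E[V(\rho V+Z)_+]$ and $b(\rho)=\E[(\rho V+Z)_+^2]$, state evolution gives $\rho_{t+1}^2=\lambda^2a(\rho_t)^2/b(\rho_t)$. Since $b'(\rho)=2a(\rho)$ and $a'(\rho)=\E[V^2\ind[\rho V+Z\ge 0]]$, Cauchy--Schwarz gives $a^2\le a'b$, so $a^2/b$ is nondecreasing on $\rho\ge 0$. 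Hence $\rho_t^2\ge\lambda^2a(0)^2/b(0)=\lambda^2\E[V]^2/\pi$ for all $t\ge 1$; this is where $\E[Z_+]=1/\sqrt{2\pi}$ and $\E[Z_+^2]=1/2$ enter. Your Jensen step is also too lossy as written: $\E[(\mu_tV+\sigma_tZ)_+^2]\ge\mu_t^2\E[V]^2$ would only give $c_2=\E[V]^2/\sqrt{\pi}$. Applying Jensen in $Z$ conditionally on $V\ge 0$ instead gives $\E[(\mu_tV+\sigma_tZ)_+^2]\ge\mu_t^2\E[V^2]=\mu_t^2$. Then $\sigma_{t+1}^2/\sigma_t^2\ge\rho_t^2\ge\lambda^2\E[V]^2/\pi$, which is exactly the stated constant (the paper uses the weaker $\sigma_{t+1}^2\ge\tfrac12(\mu_t^2+\sigma_t^2)$ at this point). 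Your fallback does not get around this: it needs an unproved reverse-Jensen bound $\E[W_+]^2\ge c\,\E[W_+^2]$ and would cost a further power of $\E[V]$.

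For $\Z_2$ synchronization, the condition $5^T\|x^T\|_2^2\ge n$ is not proved: ``bounded below by a constant'' is only asserted, and reading a per-step loss of $5$ off the statement is not an argument. The paper's idea is that $g(x)=\E[\tanh(x+\sigma Z)^2]$ is nondecreasing on $x\ge 0$. Indeed $g'(x)=\int_0^\infty(\tanh^2)'(u)\bigl(\varphi_\sigma(u-x)-\varphi_\sigma(u+x)\bigr)\,\mathrm{d}u$, where $\varphi_\sigma$ is the $N(0,\sigma^2)$ density and $(\tanh^2)'$ is odd and positive on $(0,\infty)$. By symmetry of $V$ this lets you discard the signal. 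Then $\sigma_{t+1}^2\ge\E[\tanh(\lambda\sigma_tZ)^2]\ge\E[\tanh(\sigma_tZ)^2]\ge\sigma_t^2\,\E[\tanh(Z)^2]\ge\sigma_t^2/5$, using $\lambda\ge 1$, $\sigma_t\le 1$ and concavity of $\tanh$ on $[0,\infty)$; induct from $\sigma_0=1$. There are also two smaller points. First, your inequality $\E[\tanh(\lambda(\mu_t+\sigma_tZ))]\ge 1-2\Pr[Z<-\mu_t]$ is not valid as stated, since $\tanh<\sign$ on $(0,\infty)$. For $\lambda\ge 4$, $\mu_t\ge 2$, $\sigma_t\le 1$ the bound $\tanh(\lambda\mu_t/2)\Pr[\sigma_tZ\ge-\mu_t/2]-\Pr[\sigma_tZ<-\mu_t/2]\ge\tfrac12$ repairs it. Second, for sparse PCA, $\|x^T\|_2^2\ge\delta n$ does not follow from monotonicity of state evolution when $\lambda$ is near $1$. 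What is monotone under Bayes-AMP is $\mu_t^2/\sigma_t^2$, and this only yields $\sigma_T^2\ge 1-\lambda^{-2}$. By contrast, Jensen and the tower property give $\E[f_{T-1}(W)^2]\ge\E[f_{T-1}(W)]^2=\E[V]^2=\delta$ for $W=\mu_{T-1}V+\sigma_{T-1}Z$ directly, as in the paper. The rest of your proposal is sound and matches or slightly streamlines the paper, which derives the $\Z_2$ and sparse-PCA Lipschitz conditions from its general Bayes-AMP proposition. This includes your stochastic-dominance proof of $c_1=1/\sqrt2$, the direct $\mathrm{sech}^2$ bound giving local constant $4/\lambda$ for $\Z_2$, and the posterior-variance estimate for sparse PCA.
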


In~\pref{prop:bayes-controlled}, we will further show that so long as the support of $V$ is finite, bounded, and well-separated, the Bayes-optimal AMP iteration from spectral initialization is controlled.

\subsubsection{Corruption model}

We will allow adversarially chosen corruptions in the \emph{$\eps$-principal minor} corruption model.

\begin{definition}[$\eps$-principal minor corruption]
Given matrices $X,Y \in \R^{n\times n}$, we say $Y$ is an \emph{$\eps$-principal minor corruption} of $X$ if $E\triangleq Y-X$ is supported on an $\eps n \times \eps n$-principal minor. We will often refer to the set of rows in this principal minor as $S^\ast$.

This is the corruption model considered in our prior works on robust AMP~\cite{IS24,IS25} as well as in prior works on robust community recovery in the stochastic block model~\cite{BMR21,MRW24}.
We will take the (mild) assumption throughout that $\eps = \widetilde \omega(n^{-1/2})$.
We discuss this model further in section~\ref{sec:discussion} below.

\end{definition}

\subsection{Results}

Our main theorem is the following.
We assume that $\lambda \ge \frac 1{\sqrt 2}$ (this specific value is the threshold for nnPCA).

\begin{theorem}[Informal combination of Theorems~\ref{thm:amp-expanding} and~\ref{thm:amp-controlled}]
\label{thm:informal-main}
	Suppose that $\calA$ is a $T$-step AMP algorithm on a Spiked Gaussian matrix $X$ where the signal $V$ is $\sigma$-subgaussian, and let $\vAMP(X)$ be the output of $\calA$ on $X$.
	Further, suppose that $Y$ is an $\eps$-principal minor corruption of $X$.
	\begin{itemize}
	\item If $\calA$ is a $(L,c_1,c_2)$-expanding iteration, there exists an algorithm taking $Y$ as input and producing in time $O(\eps n^4)$ a vector $\hat{v}(Y)$ such that
	\[\| \hat{v}(Y) - \vAMP(X)\|_2^2 \le O\left(\frac{L^3}{c_1c_2^2}\right)^{2T}\cdot \sigma^2 \eps \log \tfrac 1\eps \|\vAMP(X)\|_2^2.\]
	\item If $\calA$ is a $(B,c_1,c_2)$-controlled iteration with respect to $\lambda$ and $V$, then there exists an algorithm taking $Y$ as input and producing in time $O(\eps n^4)$ a vector $\hat{v}(Y)$ such that
	\[\| \hat{v}(Y) - \vAMP(X)\|_2^2 \le O(B^5c_1)^{2T} \cdot c_2\sigma^2 \eps \log^{2T + 1}\left(\tfrac 1\eps\right)\|\vAMP(X)\|_2^2.\]
	\end{itemize}
\end{theorem}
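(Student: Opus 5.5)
The algorithm has three stages, and the proof runs the AMP error analysis on top of them.

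\emph{Stage 1: spectral pre-processing.} Starting from $Y$, we remove rows and columns greedily, as in \cite{IS25}, until the top eigenvalue drops to $\widetilde\lambda + O(\sqrt{\eps\log(1/\eps)})$ and all remaining large-norm directions are gone. We use the known fact that the spike contributes only its single outlier $\widetilde\lambda$. We also use that any $\eps n$-sized set of rows of $X$ has operator norm $O(\sqrt{\eps\log(1/\eps)})$ plus a spike contribution, which by subgaussianity of $V$ is $O(\lambda\sigma\sqrt{\eps\log(1/\eps)})$. The claim to prove is that this removes only $O(\eps n)$ coordinates and leaves a matrix $\hat Y$. On the removed set we zero out; off it, $\hat Y - X$ is supported on an $O(\eps n)$ minor and $\|\hat Y - X\|_{\op} = O(\sigma\sqrt{\eps\log(1/\eps)})$. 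Each removal costs $O(n^3)$ for an eigendecomposition, and there are $O(\eps n)$ removals, giving the $O(\eps n^4)$ running time.

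\emph{Stage 2: robust initialization.} For the spectral initialization we take the top eigenvector of $\hat Y$ and apply Davis--Kahan. The eigengap between $\widetilde\lambda$ and the bulk edge $2$ is a constant for $\lambda>1$, so $\|\varphi_1(\hat Y) - \varphi_1(X)\|_2 = O(\sqrt{\eps\log(1/\eps)})$. This matches $v^0$ to relative error $\widetilde O(\sqrt\eps)$. The constant initialization $\vec 1$ is trivially robust.

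\emph{Stage 3: AMP on $\hat Y$ with the same denoisers.} Write $e^t = \hat x^t - x^t$. The recursion is
\[
\widetilde{\hat x}^t - \widetilde x^t = X e^t + (\hat Y - X)\hat x^t - (\Delta_t(\hat x^{t-1}) - \Delta_t(x^{t-1})).
\]
The Onsager term is a scalar (an average derivative) times the previous iterate, so it is stable up to Lipschitz factors. The term $X e^t$ is bounded by $\widetilde\lambda\|e^t\|$. The key term is $(\hat Y - X)\hat x^t$. Since $\hat Y - X$ lives on an $O(\eps n)$ minor $S$, its size is controlled by $\|\hat Y-X\|_{\op}\cdot\|\hat x^t_S\|$, and we need the iterates to be delocalized: $\|x^t_S\|_2^2 \le O(\sigma^2\eps\log(1/\eps))\|x^t\|^2$ for all small sets $S$. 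This follows from state evolution, since the coordinates of $x^t$ are approximately $f(\mu_t V + \tau_t Z)$, which are subgaussian. Induction then yields a per-step amplification factor.

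In the \emph{expanding} case, relative error is what matters. The error grows by at most $L\widetilde\lambda$ per step while the norm grows by at least $c_1 c_2\lambda$. This gives a relative-error multiplier of roughly $O(L^3/(c_1c_2^2))$ per step, where the extra powers come from comparing $\widetilde x$ to $x$ and from the Onsager term, and the first bullet follows.

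In the \emph{controlled} case, the global Lipschitz constant $B^2\lambda$ is too large. Instead, we split coordinates into \emph{good} ones, where both $\widetilde x^t_i$ and $\widetilde{\hat x}^t_i$ lie in $I_{v_i}$, and \emph{bad} ones. On good coordinates the local Lipschitz constant $c_1/\lambda$ cancels the $\widetilde\lambda$ amplification. Bad coordinates come in two kinds: those where the Gaussian noise pushes outside $I_v$, and those where the error itself is at least $\mu_t/10 \ge \lambda/20$. A Markov argument shows there are only few bad coordinates of the second kind, and each contributes at most $2B$. Iterating this costs an extra $\log(1/\eps)$ factor per step, which is the source of $\log^{2T+1}$. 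The final normalization uses $c_2\|x^T\|^2\ge n$.

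\emph{Main obstacle.} The hardest step is delocalization of the \emph{corrupted} iterates $\hat x^t$ on the adversarial set, not just of the clean $x^t$. The plan is to bound $\|\hat x^t_S\| \le \|x^t_S\| + \|e^t\|$ and absorb $\|e^t\|$ into the induction. This closes because $\|\hat Y - X\|_{\op}$ is already $\widetilde O(\sqrt\eps)$ small. A secondary obstacle is the Stage 1 guarantee that filtering does not remove too much of the spike; here the subgaussian tail of $V$ is what keeps the spike from concentrating on an $\eps n$ set.
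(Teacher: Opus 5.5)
\textbf{There is a genuine gap in Stages 1--2, and it is where the paper's main new ingredient sits.}

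The Stage 1 guarantee $\|\hat Y - X\|_{\op}=O(\sigma\sqrt{\eps\log(1/\eps)})$ cannot hold. First, once any row $i$ is zeroed, $\hat Y - X$ contains $-X_{i,\cdot}$, whose norm is $\approx 1$. Only \emph{principal} $\eps n\times\eps n$ blocks of $X$ are $O(\sqrt{\eps\log(1/\eps)})$; sets of $\eps n$ rows are not.

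More seriously, the adversary can take $E=\theta\,uu^\top$, with $u$ the normalized indicator of $S^\ast$ and $\theta\approx\lambda$. This plants a second outlier eigenvalue at $\theta+1/\theta\approx\widetilde\lambda$, whose eigenvector carries a constant fraction of its mass on $S^\ast$. No test on extreme eigenvalues sees it. So it survives your Stage 1, and also the paper's cleaning, which only ensures $\|\hat Y\|_{\op}\le 5\|X\|_{\op}$. This leaves $\|\hat Y - X\|_{\op}\ge\lambda$. Taking $\theta$ slightly above $\lambda$ (below your threshold, but above the $\widetilde O(n^{-1/2})$ fluctuations) makes $\varphi_1(\hat Y)$ essentially this localized vector, nearly orthogonal to $\varphi_1(X)$. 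The Davis--Kahan step of Stage 2 therefore fails outright; the paper stresses exactly that the top eigenvector is not robust.

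Your greedy ``sample $\propto w_i^2$'' removal also has no guarantee at your threshold. It only provably hits $S^\ast$ when the eigenvalue $|\mu|$ exceeds a constant multiple of $\|X\|_{\op}$, since off $S^\ast$ one only has $|\mu|\,\|w_{\overline{S^\ast}}\|\le\|X\|_{\op}$. At a threshold of $\widetilde\lambda+o(1)$, the top eigenvector can be essentially $\varphi_1(X)$, and you would keep deleting clean coordinates with no $O(\eps n)$ bound.

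\textbf{The missing idea is to certify the initialization by \emph{delocalization} rather than by operator norm.} The paper repeatedly takes $w=\varphi_1(\hat Y)$. While $w$ is not $\bigl(O(\sigma^2\eps\log(1/\eps)\,(\widetilde\lambda/(\widetilde\lambda-2))^2),\eps\bigr)$-delocalized, it deletes a coordinate sampled $\propto w_i^2$ from the top $\eps n$ coordinates of $w$.
\begin{itemize}
\item A resolvent-plus-net argument shows that, for every large surviving minor, the top eigenvector has little mass on any $\eps n$ coordinates outside $S^\ast$.
\item Hence each deletion hits $S^\ast$ with probability at least $1/2$, and the loop ends after $O(\eps n)$ rounds. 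This loop, not the cleaning, is the source of the $O(\eps n^4)$ running time.
\item A delocalized $w$ picks up little from the surviving corruption, so $w^\top Xw\ge\widetilde\lambda(1-O(\cdot))$.
\item The gap $\lambda_2(X)\le 2+o(1)$ then gives $\langle w,\varphi_1(X)\rangle^2\ge 1-O\bigl(\sigma^2\eps\log(1/\eps)\,(\widetilde\lambda/(\widetilde\lambda-2))^3\bigr)$.
\end{itemize}
An SDP with the constraint $\max_{|S|\le\eps n}\sum_{i\in S}M_{ii}\le 64\sigma^2\eps\log(1/\eps)$ achieves a similar guarantee.

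\textbf{Stage 3 is repairable.} It also leans on the false smallness of $\hat Y - X$, but your ``main obstacle'' disappears with a different split. Write $\hat Y\hat x^t - Xx^t=\hat Y e^t+(\hat Y - X)x^t$. Then:
\begin{itemize}
\item the corruption only multiplies the \emph{clean} iterate, which is delocalized by state evolution;
\item $e^t$ is multiplied by $\hat Y$, with $\|\hat Y\|_{\op}\le 5\widetilde\lambda$, and the resulting $\widetilde\lambda^2$ is absorbed by expansion;
\item the zeroed clean rows $R$ are handled via $(Xx^t)_R=(\widetilde x^t+B_{t-1}x^{t-1})_R$.
\end{itemize}
In the controlled case, the paper simply runs AMP on $Y$. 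It uses that rows of $Y$ outside $S^\ast$ agree with those of $X$, and that the bounded denoiser limits the damage on $S^\ast$ to $4B^2\eps n$. Your good/bad coordinate split there matches the paper.
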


To elaborate, this theorem shows that, given a principal minor corruption $Y$ of $X$, it is still possible to recover a vector close to the AMP output on the uncorrupted $X$.
We recall from~\cite{IS24} that just running AMP directly on $Y$ need not always return something close to $X$ (as an example, consider nnPCA with an added large-norm sparse nonnegative spike).

\begin{remark}
As an immediate corollary, we obtain robust results for nnPCA, $\Z_2$ synchronization, and Sparse PCA.
It may be possible to improve the dependence on any of the parameters.
\end{remark}

Perhaps surprisingly, our analysis proceeds by showing that any expanding or controlled AMP iteration is \emph{already} robust, so long as the starting iterate is close to the starting iterate in the uncorrupted case.
For AMP problems in which $\E[V] = 0$, such as $\Z_2$ synchronization, it is common to choose the initialization $v^0(X) = \varphi_1(X)$.
Note that the top eigenvector is not a priori robust to principal minor corruptions; indeed, adding a huge number to a diagonal entry will cause the top eigenvector to become quite correlated with the corresponding indicator vector.
The workhorse of our robustification algorithm is a simple spectral algorithm for robust Principal Component Analysis (PCA) in the principal minor corruption model, which may be of independent interest.

\begin{theorem}[Principal-minor robust PCA]\torestate{
\label{thm:fast-robust-pca}
	Suppose that $Y$ is an $\eps$-principal minor corruption of $X$ with $\sigma$-subgaussian and $0$-mean signal distribution $V$.
	Then we can, in time $O(\eps n^4)$, compute a unit vector $w(Y)$ which satisfies
	\[\langle w(Y), \varphi_1(X)\rangle^2 \ge 1 - C\sigma^2 \eps \log \frac 1\eps \cdot  \left(\frac{\widetilde \lambda}{\widetilde \lambda - 2}\right)^3.\]
}
\end{theorem}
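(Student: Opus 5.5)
The plan is to remove a small set of rows/columns from $Y$ by iterative spectral filtering, and then take the top eigenvector of the remaining principal submatrix. The running time $O(\eps n^4)$ suggests about $\eps n$ rounds, each dominated by an $O(n^3)$ eigendecomposition, with one index removed per round. Concretely, we maintain a set $S \subseteq [n]$, initially empty, and write $Y_{\bar S}$ for $Y$ with the rows and columns in $S$ zeroed out. In each round we compute the top eigenpair $(\theta, u)$ of $Y_{\bar S}$ (or of $Y_{\bar S}^2$, to handle large negative eigenvalues). If $\theta \le \widetilde\lambda + \eta$ for a slack $\eta$ of order $(\widetilde\lambda-2)$ times a small constant, we stop and output $w(Y)=u$. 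Otherwise we remove an index $i$ sampled with probability proportional to $u_i^2$, or deterministically the largest $|u_i|$ outside a trimmed set. The standard filtering argument then shows that most of the removed mass lies in $S^*$.

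\textbf{Step 1: deterministic regularity of the clean matrix.} With high probability, every principal submatrix $X_{\bar T}$ with $|T| \le 2\eps n$ satisfies the following. Its top eigenvalue lies in $\widetilde\lambda \pm O(\sigma\sqrt{\eps\log(1/\eps)})$, and its second eigenvalue is at most $2+o(1)$. For every unit $x$ and every set $T$ with $|T|\le 2\eps n$, we have $\|G_T x\| \lesssim \sqrt{\eps\log(1/\eps)}$. Finally, $\|v_T\|_2^2 \lesssim \sigma^2 \eps\log(1/\eps) n$. This last bound is where subgaussianity of $V$ enters, via a union bound over sets of size $\eps n$ and a tail bound on the sum of the largest $\eps n$ values $v_i^2$. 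Combining these with Davis--Kahan, and using the eigengap $\widetilde\lambda - 2$, the top eigenvector of $X_{\bar T}$ is $O\bigl(\sigma^2\eps\log\frac1\eps/(\widetilde\lambda-2)^2\bigr)$-close in squared angle to $\varphi_1(X)$.

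\textbf{Step 2: progress of the filter.} Suppose $\theta > \widetilde\lambda+\eta$. Then $u^\top E_{\bar S} u$ must be large, since $u^\top X_{\bar S}u \le \widetilde\lambda + o(1)$. Because $E$ is supported on $S^*\times S^*$, this forces $\|u_{S^*}\|^2$ to be bounded below in terms of $\eta$. Meanwhile, Step 1 bounds $\|u_{\bar{S^*}}\|$ restricted to any small set, so the weights $u_i^2$ concentrate on $S^*$ in expectation relative to the removed clean indices. A martingale/potential argument then shows that with constant probability the loop terminates after at most $O(\eps n)$ removals, with $|S\setminus S^*| \le O(\eps n)$.

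\textbf{Step 3: at termination, which is the main obstacle.} At termination, $Y_{\bar S} = X_{\bar S} + E'$, where $E'$ is supported on $S^*\setminus S$ and $\lambda_{\max}(Y_{\bar S}) \le \widetilde\lambda+\eta$. We cannot bound $\|E'\|_{op}$, since $E'$ may have large negative or intermediate spectrum, so Davis--Kahan does not apply directly. Instead I would argue variationally. Let $\psi=\varphi_1(X_{\bar S})$. Then $\psi^\top Y_{\bar S}\psi \ge \widetilde\lambda - O(\sqrt{\eps\log})- |\psi_{S^*}^\top E'\psi_{S^*}|$, and $\psi$ is delocalized, with $\|\psi_{S^*}\|^2\lesssim\sigma^2\eps\log(1/\eps)$ by Step 1. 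Writing $u = \alpha\psi + \beta\psi^\perp$ and expanding $u^\top Y u \ge \psi^\top Y\psi$, the bulk part contributes at most $2$ on $\psi^\perp$. The cross terms involving $E'$ are controlled by $\|E'\psi_{S^*}\|$, which is itself bounded because a top eigenvalue capped at $\widetilde\lambda+\eta$ limits the positive part of $E'$ on delocalized directions. Together these force $\beta^2 \lesssim \sigma^2\eps\log\frac1\eps\cdot(\widetilde\lambda/(\widetilde\lambda-2))^3$. The cubic dependence arises from one factor of the gap in the quadratic-form comparison and additional factors from bounding the cross terms. Handling a possibly large negative part of $E'$ is the delicate point; I expect to resolve it by running the filter on $Y_{\bar S}^2$, or equivalently by thresholding on $\|Y_{\bar S}\|_{op}$, so that $|E'|$ is controlled on delocalized vectors.
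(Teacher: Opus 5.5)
Your eigenvalue stopping rule is a genuine gap, and it cannot be patched inside Step 3. The bound $\lambda_{\max}(Y_{\bar S})\le\widetilde\lambda+\eta$ does not tell you where the top eigenvector lives.

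Here is a concrete failure. Take $S^*=\{1\}$ and let the adversary set $Y_{11}=\gamma$, where $\gamma+\gamma^{-1}=\widetilde\lambda+\eta/2$. Write $x=X_{\{2,\dots,n\},1}$ and $X'$ for the complementary minor. The secular equation $\theta-\gamma=x^\top(\theta-X')^{-1}x\approx\frac{1}{2}(\theta-\sqrt{\theta^2-4})$ gives an outlier eigenvalue $\theta\approx\widetilde\lambda+\eta/2$. Its eigenvector has squared mass $\approx1-\gamma^{-2}$ on coordinate $1$. Its overlap with $\varphi_1(X)$ is only $\widetilde O(n^{-1/2})$, because the coupling $\langle\varphi_1(X'),x\rangle$ is $\widetilde O(n^{-1/2})$ while the eigenvalue separation is $\eta/2$. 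Here $E\succeq0$, so thresholding $Y_{\bar S}^2$ or $\|Y_{\bar S}\|_{\op}$ also passes. Your loop therefore stops at $S=\varnothing$ and outputs a vector essentially orthogonal to $\varphi_1(X)$.

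Shrinking $\eta$ does not help. An adversary who sees $X$ can place the planted eigenvalue within the $\widetilde O(n^{-1/2})$ coupling scale of $\lambda_1(X)$. The top eigenvector then becomes a nontrivial mixture of $\varphi_1(X)$ and $e_1$, while the top eigenvalue moves only on the scale of its natural fluctuations.

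In Step 3 you bound the bulk term ($\le 2$) and the cross terms on $\psi^\perp$, but you leave $(\psi^\perp)^\top E'\psi^\perp$ uncontrolled. For $\psi^\perp$ supported on $S^*\setminus S$, this term can be essentially $\widetilde\lambda+\eta$. Your cap on $\lambda_{\max}$ limits $E'$ on delocalized directions, but $u$ itself need not be delocalized.

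The missing idea is to certify the eigenvector rather than the eigenvalue. The paper first applies spectral cleaning (Lemma~\ref{lem:spectral-cleaning}), so the surviving corruption satisfies $\|E\|_{\op}\le 6\widetilde\lambda$. It then loops while $w=\varphi_1(\hat Y)$ is not $(2\delta,\eps)$-delocalized, removing an index sampled from the top $\eps n$ coordinates of $w$ with probability $\propto w_i^2$.

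Progress rests on Proposition~\ref{prop:deloc-off-corruption}: the top eigenvector of the \emph{corrupted} minor is $\delta$-delocalized off $S^*$, with $\delta\asymp\sigma^2\eps\log\frac1\eps\cdot(\widetilde\lambda/(\widetilde\lambda-2))^2$.
\begin{itemize}
\item Your Step 1 does not give this, since it only concerns eigenvectors of clean minors $X_{\bar T}$. The leakage of $w_{S^*}$ into clean coordinates has to be controlled through the eigen-equation $u^\perp=\Pi_{u^\perp}(\mu I-B_{T,T})^{-1}\Pi_{u^\perp}B_{T,S}w_S$, plus a union bound over $(S,T,\mu)$ and nets.
\item With this lemma, a non-delocalized $w$ has at least half of its top-$\eps n$ mass on $S^*$, so each removal hits $S^*$ with probability at least $1/2$.
\item By contrast, your rule of sampling $\propto u_i^2$ over all of $[n]$ hits $S^*$ only with probability $\|u_{S^*}\|^2$. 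Your argument bounds this below only by about $\eta/\|E\|_{\op}$.
\end{itemize}

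At termination, $\|w_{S^*}\|_2^2\le 2\delta$ gives $|w^\top Ew|\le 12\widetilde\lambda\delta$. Combined with $w^\top\hat Yw=\lambda_1(\hat Y)\ge\varphi_1(X)^\top\hat Y\varphi_1(X)\ge\widetilde\lambda(1-O(\sigma^2\eps\log\frac1\eps))$, this yields $w^\top Xw\ge\widetilde\lambda(1-O(\delta))$. The eigengap comparison with $\lambda_2(X)\le 2+o_n(1)$ then costs one more factor of $\widetilde\lambda/(\widetilde\lambda-2)$, which is where the cube comes from.
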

The algorithm is quite practical, as it consists only of eigenvector computations, sorting, and zeroing out matrix entries.
In the main body of the paper, we will also present a simple semidefinite programming algorithm which enjoys slightly better error guarantees.

\subsection{Discussion}
\label{sec:discussion}

We give a fast, spectral algorithm for many common planted AMP iterations under principal minor corruptions, thus extending the work of~\cite{IS25} beyond the null ``pure-noise'' model to the more statistically interesting ``signal+noise'' setting.
Our running time guarantees are precisely the same.
Essentially, the ``expanding'' and ``controlled'' conditions for AMP iterates allow us to prove that in the planted case, AMP is almost robust \emph{as-is}: one must simply first pre-process the input matrix $Y$ to remove excessively large eigenvalues, and make sure to initialize at an iterate which is $\eps$-close to the initialization on the uncorrupted matrix.

For expanding iterations with initialization $x^0 \propto \vec 1$ (e.g. nnPCA) our \emph{analysis} closely mirrors the null-case analysis of~\cite{IS25}, even though the algorithm is a bit different.
For controlled iterations ($\Z_2$ synchronization, sparse PCA), the analysis departs from that of~\cite{IS25} and makes much more careful use of properties of the denoiser functions and of state evolution.
Furthermore, in these cases the initialization is a function of the input.
We introduce a robust algorithm for computing top eigenvectors to ensure that the initial iterate is $\eps$-close to the top eigenvector of the uncorrupted matrix.
Perhaps this robust PCA algorithm will find additional uses.

Here we have considered principal minor corruption, a corruption model that has appeared in prior literature, both for robust AMP in the null model~\cite{IS24,IS25} and for robust community recovery in the stochastic block model~\cite{BMR21,MRW24}, a planted ``signal + noise'' model.
However, there are many other possible corruption models, and hence many natural directions for future research.

In particular, it is not clear whether AMP can be made robust against stronger corruption models in the \emph{planted} setting.
\cite[Observation 1.11]{IS24} showed that, in the null model, it is possible to replace $o(n^2)$ entries in $X$ such that it is information theoretically impossible to recover a vector close to the AMP output, and hence the principal minor restriction (or something like it) is necessary.
However, their lower bound does not apply to the planted setting, so it is possible that a different algorithm can handle the more general case of strong contamination.

Another interesting case to consider is that of an \emph{oblivious} adversary which can corrupt a constant fraction of entries, located anywhere in $X$ (not restricted to a principal minor).
The power of an oblivious adversary is not even understood in the null model, but the planted model may be easier to study because of the presence of the signal.

\subsection{Technical Overview}

The proof of theorem~\ref{thm:informal-main} is, at its core, a proof by induction.
We begin by producing $\hat{Y}$, a spectrally pre-processed version of the corrupted matrix $Y$, using the spectral cleaning procedure of~\cite{IS25}.
Given this, we show that as long as the AMP iteration is either expanding (definition~\ref{def:expanding}) or controlled (definition~\ref{def:controlled}), as long as we have access to an iterate $y^t$ which is close in $\ell_2$ distance to the iterate $x^t$ on the uncorrupted matrix, then computing the AMP iteration on $y^t$ with $\hat{Y}$ produces an iterate $y^{t+1}$ which is close in $\ell_2$ to $x^{t+1}$.
Each condition (expanding vs. controlled) requires a different inductive proof.

This is sufficient to finish the argument as long as we can compute an initialization $y^0$ which is close to $x^0$ in $\ell_2$.
This is immediate when $x^0$ does not depend on $X$, as is the case when $x^0 \propto \E[V]\vec 1$.
However, for many problems $\E[V] = 0$, and in these cases it is customary to choose $x^0 = \lambda \sqrt{n}\varphi_1(X)$.
Hence, we will need to develop a robust algorithm for computing the top eigenvector: given $Y$, we will be able to find a vector $y^0$ which is close to the top eigenvector of the \emph{uncorrupted} matrix $X$.
We call this task ``robust principal component analysis'' or ``robust PCA.''

In this overview, we will first discuss a simplified algorithm for principal-minor robust PCA and then show some of the high-level details behind the inductive argument for expanding and controlled iterations.

\subsubsection{Principal Minor Robust PCA}

In this section, we will discuss the proof of the following result:

\restatetheorem{thm:fast-robust-pca}

At a high level, we use the fact that the top eigenvector of $X$ is \emph{delocalized}.
In particular, we say a vector $x$ is $(\beta,\eps)$-delocalized if every restriction $x_S$ of $x$ to a set $S\subset[n]$ with $|S| \le \eps n$ satisfies $\|x_S\|_2^2\le \beta$.
For example, the vector $\frac 1{\sqrt n} \vec 1$ (which is entirely flat) is $\eps$-delocalized, and the same holds true for a $\Z_2$ spike.
In fact, any vector of independent $\sigma$-subgaussian random variables is $O(\sigma^2\eps \log \frac 1\eps)$-delocalized with high probability.

With this in mind, the algorithm and analysis are as follows:

\begin{itemize}
	\item While the top eigenvector $w$ of $Y$ is not sufficiently delocalized, we \emph{clean} $Y$.
That is, let $S$ be the collection of the largest-magnitude $\eps n$ coordinates (which must be localized by assumption).
We sample $i\in S$ with probability proportional to $w_i^2$ and delete the $i$th row and column from $Y$.
Then we continue.

		To analyze this step, suppose that the adversarial perturbation $E$ is supported on some collection of coordinates $S^\ast$.
		Using the spectral properties of random matrices, we prove that, outside of the coordinates in $S^\ast$, $w$ is with very high probability $(\delta,\eps)$-delocalized (for $\delta \asymp \sigma^2 \eps \log \frac 1\eps \left(\frac{\widetilde \lambda}{\widetilde \lambda - 2}\right)^2$).
		Then, if $w$ is not $2\delta$-delocalized, it follows that
		\[\frac{\sum_{i\in S\cap S^\ast} w_i^2}{\sum_{i\in S} w_i^2} = 1 - \frac{\sum_{i\in S - S^\ast}w_i^2}{\sum_{i\in S} w_i^2}\ge 1 - \frac{\delta}{\sum_{i\in S} w_i^2} \ge \frac 12\]
		since $|S - S^\ast| \le \eps n$.
		This implies that, with probability at least $\frac 12$ at each iteration, we remove some $i\in S^\ast$.
		Thus, this while loop terminates after at most $4\eps n$ iterations with high probability (if $E$ has been entirely removed, then $w$ is $\delta$-delocalized).

	\item After completing the while loop, we have a $2\delta$-delocalized vector $w$ which has a large quadratic form, at least $\widetilde \lambda (1-O(\eps))$, with a $(1-O(\eps))$-measure principal minor of $Y$.
We show that because it is delocalized, this vector cannot pick up too much in a quadratic form with any surviving portion of $E$ and additionally satisfies $w^\top X w \ge \widetilde \lambda (1 - O(\delta))$.
	Since $X$ has an eigenvalue gap, in the sense that $\lambda_2(X) \le 2 + o_n(1)$, we may use this large quadratic form to show that $\langle w, \varphi_1(X)\rangle^2 \ge 1 - O(\delta) \cdot \frac{\widetilde \lambda}{\widetilde \lambda - 2}$.
This completes the proof.

\end{itemize}

\subsubsection{Closeness of expanding and controlled AMP iterations}

For $(L,c_1,c_2)$-expanding AMP iterations, we can essentially adapt the inductive argument of~\cite{IS25}; in fact, because of the expanding condition, we can simplify the algorithm of~\cite{IS25} and avoid the ``clipping'' step for the iterates.
To elaborate, let us suppose for simplicity that $\|Y\|_{\op} \le 5\widetilde \lambda$ (modulo some additional details, we can run the spectral cleaning algorithm of~\cite{IS25} in order to ensure that is the case) which then implies that $\|E\|_{\op} = \|X-Y\|_{\op} \le 6 \widetilde{\lambda}$.
Then, let $y^t$ denote the $t^{\text{th}}$ iterate resulting from running AMP on $Y$.
We aim to show inductively that $\|y^{t+1} - x^{t+1}\|_2^2 \le \left(\frac{C\cdot L}{c_1c_2}\right)^{2(t+1)} \sigma^2 \eps \log \frac{1}{\eps} \|x^{t+1}\|_2^2$, for $C$ a large universal constant.
Ignoring the Onsager correction, we may begin to unroll the expressions for $x^{t+1},y^{t+1}$:
\begin{align*}
\|y^{t+1} - x^{t+1}\|_2^2 &= \|f_t(Yy^t) - f_t(Xx^t)\|_2^2\\
&\le L^2 \|Yy^t - Xx^t\|_2^2 & \text{(applying Lipschitzness of $f_t$)}\\
&\le 2L^2 \left(\|Y(y^t - x^t)\|_2^2 + \|(Y - X)x^t\|_2^2 \right) &\text{(using the Almost-Triangle Inequality)}
\end{align*}
Let's discuss these two terms separately.
Since $\|Y\|_{\op} \le 5\widetilde \lambda$, we can apply the inductive hypothesis on the first term:
\[\|Y(y^t - x^t)\|_2^2 \le \|Y\|_{\op}^2 \cdot \|y^t - x^t\|_2^2 \le (5\widetilde \lambda)^2\cdot \left(\frac{C L}{c_1c_2}\right)^{2t} \sigma^2 \eps \log \frac 1\eps \|x^t\|_2^2.\]
For the second term, recall that $Y - X = E$, and $E$ is supported on a principal minor consisting of at most $\eps n$ rows and columns; we will let $S$ be that collection of indices.
Then, it follows that
\[\|(Y - X)x^t\|_2^2 = \|Ex^t\|_2^2 \le \|E\|_{\op}^2 \cdot \|x^t_S\|_2^2 \le (6\widetilde \lambda)^2 \cdot \|x^t_S\|_2^2.\]
To handle this latter term, we make use of \emph{state evolution}, which is a characterization of AMP iterates at the core of much of the theory of AMP algorithms.
In particular, state evolution implies that the AMP iterates on the uncorrupted matrix coarsely behave like the unknown spike with additive Gaussian noise.
In particular, for any univariate function $h:\R\to\R$, denote by $\bar{h}:\R^n \to \R$ the average of $h$ on its input, $\bar{h}(u) = \frac{1}{n}\sum_{i=1}^n h(u_i)$.
State evolution guarantees that for any sufficiently well-behaved $h$, with high probability $\bar{h}(X x^t) \approx \bar{h}(\mu_t \cdot v + \sigma_t g)$, where $v$ is the planted spike, $g \sim \cN(0,\Id)$, and $\mu_t,\sigma_t$ are scalar parameters (which depend on the choice of AMP iteration) satisfying $\mu_t = \Theta(\lambda)$ and $\sigma_t = \Theta(1)$.

When $f_t$ is sufficiently nice (as is the case for a $(L,c_1,c_2)$-expanding iteration), we can prove that $x^t = f_t(X x^t)$ is \emph{delocalized} in a similar sense to the previous section: no subset of at most $\eps n$ coordinates can capture a large fraction of $x^t$'s mass.
More precisely, by setting $h = f_t \cdot \Ind_{I_t}$ for a well-chosen interval $I_t \subset \R$, we can show that, for all candidate sets $R$ with $|R| \le \eps n$,
\[\|x^t_R\|_2^2 \le \frac{32}{c_1^2}\sigma^2 \eps \log \frac 1\eps \cdot \|x^t\|_2^2\]
with probability $1 - o_n(1)$ over the original choice of $X$ (and in particular this holds for $R = S$).
Putting everything together, it follows that
\begin{align*}
\|y^{t+1} - x^{t+1}\|_2^2 &\le 2L^2 \left(25\widetilde \lambda^2 \cdot \left(\frac{C L}{c_1c_2}\right)^{2t} \sigma^2 \eps \log \frac 1\eps\|x^t\|_2^2 + 36\widetilde \lambda^2\cdot \frac{32}{c_1^2}\sigma^2 \eps \log \frac 1\eps \cdot \|x^t\|_2^2 \right)	\\
&= \left(2L^2 \left[25 \left(\frac{C L}{c_1c_2}\right)^{2t} + \frac{36\cdot 32}{c_1^2}\right]\right) \left(\sigma^2 \eps \log \frac 1\eps\right)\cdot \biggl(\widetilde \lambda^2 \cdot \|x^t\|_2^2\biggr)\\
&\le \frac{CL^2}{c_1^2}\cdot \left(\frac{C L}{c_1c_2}\right)^{2t}\left(\sigma^2 \eps \log \frac 1\eps\right)\cdot \biggl(\widetilde \lambda^2 \cdot \|x^t\|_2^2\biggr), \\
\intertext{as long as $C$ is a sufficiently large constant. Now we would be done, except that we have an undesirable factor of $\widetilde{\lambda}^2$ on the right hand side. But because our iteration is expanding, by definition, $\widetilde\lambda\|x^t\|$ is comparable to $\|x^{t+1}\|$, and so we can finish the proof:}
&\le \frac{C L^2}{c_1^2}\cdot \left(\frac{CL}{c_1c_2}\right)^{2t}\left(\sigma^2 \eps \log \frac 1\eps\right)\cdot \left(\frac 1{c_2^2}\|x^{t+1}\|_2^2\right) \\
&= \left(\frac{C L^2}{c_1c_2}\right)^{2(t+1)}\sigma^2 \eps \log \frac 1\eps \|x^{t+1}\|_2^2.
\end{align*}
This is exactly what we set out to prove, so the inductive step is complete.
We can extend this analysis, similarly to~\cite{IS25}, to work with the Onsager correction (and to accommodate the fact that we must work with a matrix $\hat Y$ that is the output of a spectral cleaning procedure on $Y$).\\

Unfortunately, this particular analysis can fail when an iteration is not expanding, because 1) we may have $\|x^{t+1}\| \le \|x^t\|$, and suffer factors polynomial in $\lambda$ and 2) $x^t$ may not be sufficiently delocalized.
However, we can still say something nontrivial for the class of $(B,c_1,c_2)$-\emph{controlled} iterations; in effect, these iterations are ones which have a very small Lipschitz constant in the neighborhood around the support of $V$, and we will have to use state evolution to show that the majority of each iterate's coordinates are within this neighborhood.

For controlled iterations, our goal will be to prove the guarantee
$$
\|y^{t+1} - x^{t+1}\|_2^2 \le (C B^4 c_1)^{2(t+1)} c_2\sigma^2 \eps \log^{2(t+1)+1} \frac 1\eps \|x^{t+1}\|_2^2.
$$
By nontriviality of controlled iterates, it will be sufficient to prove the statement
\[\frac 1n \|y^{t+1} - x^{t+1}\|_2^2 \le (C B^4 c_1)^{2(t+1)} \sigma^2 \eps \log^{2(t+1)+1} \frac 1\eps\]
(note the removal of $c_2$), by directly applying $c_2\|x^{t+1}\|_2^2 \ge n$.

As before the proof is by induction; the first step is to remove the effect of $E$ on the current iteration (incurring an $O(\eps)$ loss in accuracy).
Recalling that $S$ is the collection of rows in $E$'s support, we begin by splitting into these ``bad'' coordinates:
\begin{align*}
	\|y^{t+1} - x^{t+1}\|_2^2 &= \|\left(y^{t+1} - x^{t+1}\right)_S\|_2^2 + \|\left(y^{t+1} - x^{t+1}\right)_{\overline S}\|_2^2\\
	&= \sum_{i\in S} (f_t((X x^t)_i) - f_t((Yy^t)_i))^2 + \|\left(y^{t+1} - x^{t+1}\right)_{\overline S}\|_2^2\\
	&\le |S| \cdot (2B)^2 + \|\left(y^{t+1} - x^{t+1}\right)_{\overline S}\|_2^2,\\
	&= 4B^2 \eps n + \|\left(y^{t+1} - x^{t+1}\right)_{\overline S}\|_2^2,
\end{align*}
where we have used that $f_t$ is bounded.
Now it remains to control the coordinates outside of the support of $E$.
For the coordinates $i \in \bar{S}$ we have $(Yy^t)_i = (Xy^t)_i$ because $X,Y$ agree on each such row $i$.
Ignoring the Onsager correction once more and recalling that $\|X\|_{\op} = \widetilde \lambda \le 4\lambda$ when $\lambda \ge \frac 1{\sqrt 2}$, by induction
\[\|(Yy^t - Xx^t)_{\overline{S}}\|_2^2
= \|(X(y^t - x^t))_{\overline{S}}\|_2^2
 \le \|X\|_{\op}^2 \cdot \|y^t - x^t\|_2^2
\le 16\lambda^2 \delta_t^2 n\]
where we have used the shorthand $\delta_t^2 = (C B^2 c_1)^{2t} \sigma^2 \eps \log^{2t+1} \frac 1\eps$.

Now we consider a case analysis.
If $\lambda \le 10^2 B\sqrt{\log \eps^{-1}}$, we can immediately complete the proof using the Lipschitzness of $f_t$:
\[
\|y^{t+1}_{\overline S} - x^{t+1}_{\overline S}\|_2^2
= \|\left(f_t(Y y^t) - f_t(Xx^t)\right)_{\overline{S}}\|_2^2
	\le (\lambda B^2)^2 \| \left(Yy^t- X x^t\right)_{\overline{S}}\|_2^2\le \lambda^4  B^4 \delta_t^2 \le 4\cdot 10^8 B^8 \log^2 \frac 1\eps \cdot \delta_t^2 n,\]
thus verifying the induction hypothesis for $t+1$ when $C$ is a sufficiently large constant.

Otherwise, $\lambda \gg B\sqrt{\log\eps^{-1}}$.
In this scenario, we invoke state evolution with the function $h$ an indicator of the neighborhood around $\mu_t v$ to show that all but at most $2\eps n$ coordinates of $Xx^t$ satisfy the closeness condition $|(Xx^t)_i - \mu_t v_i| \lesssim \sigma_t\sqrt{\log \frac 1\eps} \ll \mu_t$, with probability $1 - o_n(1)$ over $X$.
We can then argue for coordinates $i$ satisfying the closeness condition, $(Y y^t)_i$ and $(Xx^t)_i$ lie within the \emph{small Lipschitz constant} regime guaranteed by being a controlled iteration.
Indeed, if $R$ is the set of coordinates in $\overline{S}$ that do not satisfy the closeness condition, since $R$ contains at most $2\eps n$ coordinates, we can incur the $B$-Lipschitz constant cost on it in the same way that we did with $S$:
\[\|\left(y^{t+1} - x^{t+1}\right)_{\overline{S}}\|_2^2 \le 8B^2 \eps n + \|\left(y^{t+1} - x^{t+1}\right)_{\overline R}\|_2^2.\]
Now, there may be coordinates $i \in \overline{R}$ such that $(Xx^t)_i$ is in the small Lipschitz constant regime whereas $(Yy^t)_i$ is not; however, since we have shown $\|\left(Yy^t - Xx^t\right)_{\overline{S}}\|_2^2 \le 4\lambda^2 \delta_t^2 n$, by an averaging argument the number of coordinates $i$ where $((Yy^t)_i - (Xx^t)_i)^2 > \left(\frac{\lambda}{20}\right)^2$ is at most $20^2\cdot 4\delta_t^2 n$, and hence the error contribution of these coordinates is also bounded.

In summary, we can restrict our attention to coordinates $i$ such that $| (Xx^t)_i - \mu_t v_i| \ll \mu_t$, and by the triangle inequality, the same holds for $y$: $|(Yy^t)_i - \mu_t v_i| \ll \mu_t$.
Thus, both terms lie within the same interval $I_v$ and enjoy the inverse-polynomial local Lipschitz constant.
We may now apply this small Lipschitz constant:
\begin{align*}
\|y^{t+1} - x^{t+1}\|_2^2
&= \sum_{i=1}^{n} (f_t((Y y^t)_i) - f_t((Xx^t)_i))^2 \\
&\le \sum_{i=1}^{n} \left(\frac{c_1}{\lambda}\right)^2 ((Y y^t)_i - (Xx^t)_i)^2 \\
&\le \left(\frac{c_1}{\lambda}\right)^2\cdot \|\left(Yy^t - Xx^t\right)_{\overline{S}}\|_2^2  \\
&\le 16c_1^2 \delta_t^2 n
\end{align*}
which completes the argument.

\section{Preliminaries}

\subsection{Approximate Message Passing}

\begin{definition}[AMP Iteration]
\label{def:amp-iteration}
Let $\{f_t\}_{t\ge 0}$ be a sequence of Lipschitz functions on $\R$ with weak derivatives $\{f'_t\}_{t \ge 0}$.
Given an input matrix $X$ and with the initial values $x^{-1} = \widetilde x^{-1} = \vec 0$ and $\widetilde x^0(X)$ with $\|\widetilde x^0(X)\|_2^2 = \lambda^2 n$, we define the AMP iteration via 
\[x^{t+1} = f_t(\widetilde x^t),\qquad b_t = \frac 1n \sum_{i=1}^{n} f'_t(\widetilde x^t), \qquad \widetilde x^t = Xx^t - b_{t-1} x^{t-1}.\]
The term $b_{t-1} x^{t-1}$ is known as the \emph{Onsager correction}, and ensures that the iterates remain uncorrelated.
\end{definition}

AMP iterates obey \emph{State Evolution}, which precisely characterizes the limiting distribution of the average of any function over the iterates' coordinates.

\begin{definition}[State Evolution (cf.~\cite{FVRS22}, Theorem 3.1)]
\label{def:state-evolution}
	Suppose that $X = \frac \lambda n vv^\top + G$, where $G\sim \GOE(n)$ and $v_i \sim V$ for some univariate distribution with $\E[V^2] = 1$.
	For an AMP iteration $\{f_t\}$, we define the two mutually recursive quantities
	\[\mu_{t+1} = \lambda \E[V f_t(\mu_t V + \sigma_t Z)], \qquad \sigma_{t+1}^2 = \E[f_t(\mu_t V + \sigma_t Z)^2].\]
	If, for every $\PL(2)$ function $\psi: \R^2 \rightarrow \R$, the initial iterate obeys the constraint 
	\[\plim_{n\rightarrow\infty}\frac 1n \sum_{i=1}^{n} \psi(\widetilde x^0_i; v_i) \rightarrow \E[\psi(\mu_0 V + \sigma_0 Z; V)],\]
	then the same holds for all future iterations:
	\[\plim_{n\rightarrow \infty} \frac 1n \sum_{i=1}^{n} \psi(\widetilde x^t_i; v_i) \rightarrow \E[\psi(\mu_t V + \sigma_t Z; V)]\]
\end{definition}

It is well known that, by State Evolution, whenever $f_t'$ can be approximated by $\PL(2)$ functions, we can substitute $B_{t-1} = \E[f_t'(\mu_t V + \sigma_t Z)]$ for $b_{t-1}$ in the AMP iteration.
We also define the \emph{Bayes-AMP} iteration.

\begin{definition}[Bayes-AMP (cf.~\cite{FVRS22}, Corollary 3.9 and Theorem 3.10)]
	The \emph{Bayes-AMP} iteration for a given signal distribution $\pi_V$ is the iteration which chooses denoisers
	\[f_t(y) = \E[V \mid \mu_t V + \sigma_t Z = y].\]
\end{definition}

It turns out that Bayes-AMP iterations are controlled whenever $\supp(V)$ is finite and well-separated; we prove this proposition in appendix~\ref{sec:appendix-state-evolution}.
\begin{proposition}[Bayes-AMP for finite and well-separated $V$ is controlled]
\torestate{
\label{prop:bayes-controlled}
	Suppose that $\supp(V) = \{v_1, v_2, \ldots, v_m\}$ is finite, bounded by some $B$, and satisfies $\Delta \triangleq \min_{i\neq j} |v_i - v_j| \ge 1$. Then, the Bayes-AMP iteration on $V$ is $(B, \frac{28B^2}{\Delta^4}\cdot \max \frac{1}{p_i}, 1/\E[V]^2)$-controlled, if started at a spectral initialization.
}
\end{proposition}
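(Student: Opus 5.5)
We verify the five bullets of Definition~\ref{def:controlled} one at a time for the Bayes denoiser $f_t(y)=\E[V\mid \mu_t V+\sigma_t Z=y]$. Throughout we write $p_i=\Pr[V=v_i]$.

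\emph{Boundedness and the Lipschitz bound.} Since $f_t(y)$ is a posterior mean of a variable supported in $[-B,B]$, we get $|f_t|\le B$ immediately. Differentiating the posterior gives the standard identity
\[
f_t'(y)=\frac{\mu_t}{\sigma_t^2}\operatorname{Var}(V\mid \mu_tV+\sigma_tZ=y).
\]
To bound this we use state evolution. For Bayes-AMP, $\sigma_{t+1}^2=\E[f_t^2]$ and $\mu_{t+1}=\lambda\E[Vf_t]=\lambda\E[f_t^2]$, so $\mu_t=\lambda\sigma_t^2$. Hence $f_t'=\lambda\operatorname{Var}(\cdot)\le\lambda B^2$, which is the required global Lipschitz bound. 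For $t=0$ we need the same relation $\mu_0=\lambda\sigma_0^2$ to hold. For spectral initialization $\widetilde x^0=\lambda\sqrt n\varphi_1$, the BBP facts give $\mu_0^2=\lambda^2(1-\lambda^{-2})$ and $\sigma_0^2=1$, up to the normalization convention. Here I would simply adopt the standard convention in which the spectral initialization is rescaled so that $\mu_0=\lambda\sigma_0^2$, or else absorb the resulting constant into $B$.

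\emph{Signal correlation $\mu_t\ge\lambda/2$ for $\lambda\ge4$.} We have $\mu_{t+1}=\lambda\E[f_t^2]=\lambda(1-\mathrm{mmse}_t)$, where $\mathrm{mmse}_t=\E[\operatorname{Var}(V\mid\cdot)]$ and $\E V^2=1$. The effective SNR is $\mu_t^2/\sigma_t^2=\lambda\mu_t$. For separated atoms, the MMSE at SNR $s$ is at most $B^2\sum_{i\ne j}\exp(-s\Delta^2/8)$ by a union bound on misclassification. We then induct: if $\mu_t\ge\lambda/2$, the SNR is at least $\lambda^2/2\ge8$, so $\mathrm{mmse}_t\le1/2$ and hence $\mu_{t+1}\ge\lambda/2$. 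The base case uses $\mu_0^2=\lambda^2-1\ge\lambda^2/4$. This step may require the constants to be tuned, possibly through the factor $1/p_i$.

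\emph{Local Lipschitz constant.} This is the main obstacle. Fix $y\in I_{v_i}$, so $|y-\mu_tv_i|\le\mu_t/5$. The posterior weight of a competitor $v_j$ relative to $v_i$ is
\[
\frac{p_j}{p_i}\exp\!\Big(-\frac{(y-\mu_tv_j)^2-(y-\mu_tv_i)^2}{2\sigma_t^2}\Big)\le\frac{1}{p_i}\exp\!\Big(-c\frac{\mu_t^2\Delta^2}{\sigma_t^2}\Big)=\frac{1}{p_i}e^{-c\lambda\mu_t\Delta^2}.
\]
The gap $(y-\mu_tv_j)^2-(y-\mu_tv_i)^2\ge\mu_t^2(|v_j-v_i|-1/5)^2-\mu_t^2/25$ stays bounded below by a constant times $\mu_t^2\Delta^2$ because $\Delta\ge1$. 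Therefore
\[
\operatorname{Var}(V\mid y)\le 4B^2\sum_{j\ne i}\frac{1}{p_i}e^{-c\lambda^2\Delta^2/2}.
\]
It follows that $f_t'(y)\le\lambda\cdot4B^2 m\,p_i^{-1}e^{-c\lambda^2\Delta^2}$. Using $m\le 2B/\Delta+1$ and $xe^{-cx^2}\lesssim x^{-3}$, this is at most $\frac{28B^2}{\Delta^4 p_i}\cdot\frac1\lambda$. Getting exactly the constant $28$ and the power $\Delta^4$ is the tedious part. I expect to use $e^{-u}\le 2/u^2$ with $u\asymp\lambda^2\Delta^2$, since $\lambda e^{-c\lambda^2\Delta^2}\le\frac{C}{\lambda^3\Delta^4}\le\frac{C}{\lambda\Delta^4}$.

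\emph{Nontrivial final norm.} By state evolution, $\frac1n\|x^T\|^2\to\E[f_{T-1}^2]=\mu_T/\lambda$. Alternatively, by Cauchy--Schwarz, $\E[f^2]\ge\E[Vf]^2/\E V^2$. To get a bound independent of $\lambda$, I would use $\E[f^2]\ge(\E f)^2=\E[V]^2$, since the posterior mean is unbiased on average. Hence $\|x^T\|^2\ge(1-o(1))\E[V]^2 n$ with high probability, which gives $c_2=1/\E[V]^2$ after absorbing the $o(1)$ slack.
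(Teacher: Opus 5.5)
Your plan tracks the paper's proof closely: the same five checks, the same posterior-ratio computation with gap $\Delta(\Delta-0.4)\ge\frac35\Delta^2$, and the same Jensen step for the final norm. However, the step $\mu_t\ge\lambda/2$ does not go through as written. From the inductive hypothesis you only know $s=\lambda\mu_t\ge 8$ when $\lambda=4$. At $s=8$, each summand of your misclassification bound $\mathrm{mmse}_t\le B^2\sum_{i\ne j}e^{-s\Delta^2/8}$ is about $B^2e^{-1}$ when $\Delta$ is near $1$. Since $B\ge 1$ (forced by $\E[V^2]=1$), three atoms already give roughly $3e^{-1}>\frac12$, and the bound only worsens as $B$ and $m$ grow. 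The requirement $\mu_t\ge\lambda/2$ in Definition~\ref{def:controlled} has no free constant, so there is nothing to tune through $1/p_i$.

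The missing idea is to compare with the best \emph{linear} estimator rather than a classifier. For $W=\mu_tV+\sigma_tZ$ one has $\E[Wf_t(W)]=\E[WV]=\mu_t$, so Cauchy--Schwarz gives $\mu_t^2\le(\mu_t^2+\sigma_t^2)\,\sigma_{t+1}^2$. Equivalently, $\mathrm{mmse}_t\le\sigma_t^2/(\mu_t^2+\sigma_t^2)$, which equals $1/(1+\lambda\mu_t)\le\frac19$ for $t\ge1$ and $1/\lambda^2$ at $t=0$. Hence $\mu_{t+1}=\lambda\sigma_{t+1}^2\ge\lambda/2$. This is the paper's argument, and it is uniform in $B$, $m$, $\Delta$ and the $p_i$.

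A smaller problem sits in the local Lipschitz step. Replacing each prior ratio $p_j/p_i$ by $1/p_i$ and counting competitors introduces a factor $m$. That factor cannot be absorbed into $28B^2/(\Delta^4p_i)$: using $m\le 2B/\Delta+1$ turns $B^2$ into $B^3$. Keep the weights instead, $\sum_{j\ne i}p_j/p_i=(1-p_i)/p_i\le 1/p_i$, as the paper does. With that change, your route via $e^{-u}\le 2/u^2$, with $u\ge 0.15\lambda^2\Delta^2$ and $\lambda\ge4$, gives a constant of roughly $22$, within $28$.

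Finally, no normalization convention is needed at $t=0$. The bound $\mu_0/\sigma_0^2=\sqrt{\lambda^2-1}\le\lambda$ gives the global Lipschitz bound directly. Absorbing a constant into $B$ would not be legitimate anyway, since $B$ is pinned as the support bound.
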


\begin{remark}
	When $\E[V] = 0$, such as in $\Z_2$-synchronization, this result gives a trivial bound on $c_2$. In this scenario, we can instead derive a more problem-specific bound which depends directly on $\mu_T$ and $\sigma_T$.
\end{remark}

\subsection{Delocalized functions}

A vector is said to be \emph{delocalized} if it is relatively ``flat,'' in the sense that no small subset of coordinates contains too much of the $\ell_2$ mass. 
There are several distinct notions of delocalization; here, we make use of the following definition.

\begin{definition}[Delocalized Vectors]
\label{def:deloc-vec}
	We say a vector $w$ is $(\beta,\eps)$-delocalized if
	\[\max_{|S|\le\eps n} \sum_{i\in S} w_i^2 \le \beta\cdot \|w\|_2^2.\]
	
	We will define the \emph{delocalization bound} of a vector $w$ via $\deloc_\eps(w) = \max\limits_{|S|\le\eps n} \sum_{i\in S} w_i^2$.
\end{definition}

Note that $\deloc_\eps$ satisfies the Almost-Triangle Inequality.
It turns out that as long as the spike's distribution $\pi_V$ is subgaussian, all AMP iterates are delocalized.

\begin{proposition}[AMP iterates are delocalized]
\torestate{
\label{prop:deloc-amp}
	Suppose that $V$ is $\sigma$-subgaussian. 
	Then the AMP iterate $\widetilde x^t$ is $(32\sigma^2 \eps \log \frac 1\eps,\eps)$-delocalized.
	Furthermore, if the AMP iteration is $(L,c_1,c_2)$-expanding, then $x^t$ is also $(\frac{32L^2}{c_1^2}\sigma^2 \eps \log \frac 1\eps,\eps)$-delocalized.
	}
\end{proposition}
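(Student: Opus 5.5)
The plan is to use state evolution (\pref{def:state-evolution}) to reduce delocalization of $\widetilde x^t$ to a tail estimate for the scalar random variable $\mu_t V + \sigma_t Z$, and then to transfer the bound from $\widetilde x^t$ to $x^{t+1} = f_t(\widetilde x^t)$ via the expanding condition.

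\textbf{Reducing to a tail estimate.} Fix a threshold $\tau$. For any set $R$ with $|R|\le \eps n$, split $\sum_{i\in R} (\widetilde x^t_i)^2$ according to whether $|\widetilde x^t_i|\le \tau$. The small coordinates contribute at most $\eps n\tau^2$. The large coordinates contribute at most
\[
\sum_{i=1}^n (\widetilde x^t_i)^2\,\ind\bigl[|\widetilde x^t_i|>\tau\bigr],
\]
which does not depend on $R$, so a single estimate handles all candidate sets simultaneously. This is the average of $h(u)=u^2\ind[|u|>\tau]$ over coordinates. Since $h$ is not continuous, I would sandwich it between $\PL(2)$ functions, for example $u^2$ times a piecewise-linear cutoff that ramps up on $[\tau/2,\tau]$. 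State evolution then gives, with probability $1-o_n(1)$,
\[
\frac1n \sum_i h(\widetilde x^t_i) \le \E\bigl[W^2\ind[|W|>\tau/2]\bigr] + o(1), \qquad W=\mu_tV+\sigma_tZ.
\]
The $o(1)$ term is harmless because $\eps=\widetilde\omega(n^{-1/2})$ is fixed relative to the limit.

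\textbf{Choosing $\tau$.} The variable $W$ has mean $\mu_t\E V$, and its centered part is $\sqrt{\mu_t^2\sigma^2+\sigma_t^2}$-subgaussian. Its second moment is $s^2 \triangleq \E W^2=\mu_t^2+\sigma_t^2$, and state evolution also gives $\|\widetilde x^t\|_2^2=(1\pm o(1))\,s^2 n$. Take $\tau^2 \asymp \sigma^2 s^2\log\frac1\eps$ with a suitable constant. The subgaussian tail bound then gives
\[
\E\bigl[W^2\ind[|W|>\tau/2]\bigr] \lesssim \sigma^2 s^2\,\eps\log\tfrac1\eps,
\]
by integrating the tail $\Pr[|W|>u]\le 2e^{-u^2/(c\sigma^2 s^2)}$ against $2u\,du$ from $\tau/2$ onward. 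The small-coordinate term $\eps n\tau^2$ is of the same order. Summing both contributions and dividing by $\|\widetilde x^t\|_2^2$ should give $32\sigma^2\eps\log\frac1\eps$ once the constants are tracked.

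\textbf{Main obstacle.} The delicate step is the subgaussian bookkeeping for $W$. The Gaussian part contributes variance $\sigma_t^2$ but should not be charged a factor $\sigma^2$; this uses the normalization $\sigma\ge1$, which holds because $\E V^2=1$ forces the subgaussian parameter to be at least a constant. If $\E V\neq 0$, the mean $\mu_t\E V$ is itself bounded by $\mu_t$, so the shift can be absorbed into the threshold, since $\tau \gg \mu_t|\E V|$ once $\log\frac1\eps$ is large.

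\textbf{Transferring to $x^{t+1}$.} For the expanding case, $f_t$ is $L$-Lipschitz with $f_t(0)=0$, so $|x^{t+1}_i|\le L|\widetilde x^t_i|$ coordinatewise. For any $R$ with $|R|\le\eps n$, this gives
\[
\|x^{t+1}_R\|_2^2 \le L^2\|\widetilde x^t_R\|_2^2 \le L^2\cdot 32\sigma^2\eps\log\tfrac1\eps\,\|\widetilde x^t\|_2^2 \le \frac{32L^2}{c_1^2}\sigma^2\eps\log\tfrac1\eps\,\|x^{t+1}\|_2^2,
\]
where the last inequality uses $\|x^{t+1}\|_2^2\ge c_1^2\|\widetilde x^t\|_2^2$ from \pref{def:expanding}. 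This holds with probability $1-o_n(1)$, and a union bound over the finitely many $t\le T$ completes the argument.
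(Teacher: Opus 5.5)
Your proposal is correct and follows the paper's proof essentially step for step. Both arguments threshold the coordinates, bound the small ones by $\eps n\tau^2$, and bound the large ones by the state-evolution limit of a $\PL(2)$-approximated truncated square. Both then integrate the subgaussian tail of $W=\mu_tV+\sigma_tZ$ with $\tau^2\asymp\sigma^2(\mu_t^2+\sigma_t^2)\log\frac1\eps$, absorbing the mean shift via $|\E V|\le 1$, and transfer the bound to $x^{t+1}$ using $|f_t(u)|\le L|u|$ and $\|x^{t+1}\|_2^2\ge c_1^2\|\widetilde x^t\|_2^2$.

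One small caveat concerns your claim that $\E V^2=1$ forces $\sigma$ to be bounded below by a constant. This only holds for centered $V$, since for non-centered $V$ the variable $V-\E V$ can have tiny variance. The paper relies on the same implicit assumption when it asserts that $W-\mu_t\E V$ is $\sigma\sqrt{\mu_t^2+\sigma_t^2}$-subgaussian.
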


We prove this fact in appendix~\ref{sec:amp-deloc}.

\subsection{Semidefinite Programming}

Let $A\bullet B = \tr(A^\top B) = \sum_{i,j} A_{i,j} B_{i,j}$.
A semidefinite program (SDP) over a variable $X \in \R^{n\times n}$ and with constraint matrices $A_1, A_2, \ldots, A_k$ is a program of the form
\begin{align*}
	\max \qquad & C\bullet X\\
	s.t. \qquad & X\succeq 0\\
	& A_i \bullet X \le b_i \qquad \forall 1\le i\le k
\end{align*}
We will assume that $\{A_i\}_{1\le i\le k}$ and $C$ are symmetric matrices.

As is standard, by binary search over objective values $c = C \bullet X$ one may remove the maximization objective and replace it with a linear constraint of the form $- C \bullet X \le - c$.
The problem of optimizing then reduces to the problem of finding a feasible solution to the set of linear constraints.

\begin{definition}[Separation Oracle]
	A separation oracle for a particular SDP is a polynomial time procedure $\calO$ which satisfies the following two properties for an input $Y\in \R^{n\times n}$: 
	\begin{itemize}[noitemsep]
		\item If $Y$ is feasible for the SDP, $\calO(Y) = \bot$.
		\item Else, $\calO(Y)$ returns some hyperplane $(A, b)$ such that $A\bullet Y > b$ but every feasible solution $X$ of the SDP satisfies $A\bullet X \le b$.
	\end{itemize}
\end{definition}

Given a separation oracle, the SDP can be approximately optimized in polynomial time (cf.~\cite[Theorem 3.10]{FKP19})\footnote{There are some technical details which we skip over: the runtime of the optimization depends on $\log(R/r\delta)$, where $\delta$ is accuracy to which the optimization is computed, $R$ is an upper bound on the radius of the feasible domain for $X$, and $r$ is the largest size of a sphere contained within our feasible region. 
In our setting, these will be polynomially (resp. inverse polynomially) bounded.}. 
We will make use of SDPs with a possibly exponential number $k$ of linear constraints, but which nonetheless admit a polynomial-time separation oracle and are thus polynomial time solvable.
For more on SDPs, we refer the reader to~\cite[Section 3]{FKP19}.

\section{Making Planted AMP Robust to Principal Minor Corruptions}

In this section, we will prove two main theorems, considering the robustness of planted AMP under either expanding or controlled iterations.
In both of the below theorems, we will write $X = \frac \lambda n vv^\top + G$ and $Y = X + E$ where $E$ is an $\eps$-principal minor corruption of $X$, and $v_i \sim \pi_V$ i.i.d for some $\sigma$-subgaussian distribution $\pi_V$ with $\E[V^2] = 1$.
Furthermore, recall that $\|X\|_{\op} = \lambda + \frac 1{\lambda} \pm o_n(1)$.
We will write $\widetilde \lambda = \lambda + \frac 1\lambda$ for this top eigenvalue, and let $C$ be a constant.

\begin{theorem}[Robust Planted AMP under Expanding Iterations]\torestate{
\label{thm:amp-expanding}
	Take the definitions of $X$ and $Y$ from above. 
	Suppose that the AMP iteration $f^0, f^1, \ldots,f^T$ is $(L,c_1,c_2)$-expanding and that there is some initialization $\widetilde y^0$ such that $\|\widetilde y^0 - \widetilde x^0\|_2^2 \le \eta \|\widetilde x^0\|_2^2$. 
	Then, there is an algorithm $\calA$ running in time $O(\eps n^4)$, taking as input $Y$, and returning a vector $\calA(Y)$ satisfying
	\[\|\calA(Y) - \vAMP(X)\|_2^2 \le O\left(\frac{L^3}{c_1c_2^2}\right)^{2T} \max\left(\sigma^2 \eps \log \frac 1\eps,\eta\right) \cdot \|\vAMP(X)\|_2^2.\]
}
\end{theorem}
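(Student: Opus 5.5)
We will follow the inductive strategy sketched in the technical overview, now keeping track of the Onsager correction and of the fact that we run AMP on a cleaned matrix rather than on $Y$. First, we apply the spectral cleaning procedure of~\cite{IS25} to $Y$. This produces, in time $O(\eps n^4)$, a matrix $\hat Y$ with $\|\hat Y\|_{\op} \le C\widetilde\lambda$ and $\hat Y - X$ still supported on rows/columns in a set $S$ with $|S| = O(\eps n)$. Here $S$ is the union of $S^\ast$ with the deleted indices, and zeroing out rows and columns keeps the difference a principal minor. Consequently $\|\hat Y - X\|_{\op} \le C'\widetilde\lambda$. The algorithm $\calA$ then runs the AMP iteration on $\hat Y$ from $\widetilde y^0$:
\[
y^{t+1} = f_t(\widetilde y^t), \qquad \widetilde y^t = \hat Y y^t - \hat b_{t-1} y^{t-1},
\]
with $\hat b_{t-1} = \frac1n\sum_i f'_{t-1}(\widetilde y^{t-1}_i)$. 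Alternatively, we may use the deterministic state-evolution value $B_{t-1}$, which avoids having to control the perturbation of $b$; we plan to use $B_{t-1}$, and note $|B_{t-1}| \le L$. The output is $\calA(Y) = y^T$.

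We prove two coupled inductive claims:
\[
\|\widetilde y^t - \widetilde x^t\|_2^2 \le \alpha_t \|\widetilde x^t\|_2^2, \qquad \|y^{t+1} - x^{t+1}\|_2^2 \le L^2\alpha_t\|\widetilde x^t\|_2^2 \le \tfrac{L^2}{c_1^2}\alpha_t \|x^{t+1}\|_2^2,
\]
where the last step uses the expansion condition $\|x^{t+1}\| \ge c_1\|\widetilde x^t\|$ together with Lipschitzness. For the step $t \to t+1$, write
\[
\widetilde y^{t+1} - \widetilde x^{t+1} = \hat Y(y^{t+1} - x^{t+1}) + (\hat Y - X)x^{t+1} - B_t(y^t - x^t).
\]
We bound the three terms as follows. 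The first term is at most $C\widetilde\lambda\|y^{t+1}-x^{t+1}\|$. The second is at most $C'\widetilde\lambda\|x^{t+1}_S\|$, which by \pref{prop:deloc-amp} is $\le C'\widetilde\lambda\cdot \frac{L}{c_1}\sqrt{32\sigma^2\eps\log\frac1\eps}\,\|x^{t+1}\|$; here $|S| = O(\eps n)$ only costs a constant. The third is at most $L\|y^t-x^t\|$.

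To convert back to a multiple of $\|\widetilde x^{t+1}\|$, we need $\widetilde\lambda\|x^{t+1}\| \lesssim \|\widetilde x^{t+1}\|/(c_1c_2)$. Using $\|\widetilde x^{t+1}\| \ge \|x^{t+2}\|/L \ge c_2\lambda\|x^{t+1}\|/L$ and $\widetilde\lambda \le 2\lambda$ (true for $\lambda\ge 1$; constant-factor otherwise in the admissible range), we get $\widetilde\lambda\|x^{t+1}\| \le \frac{2L}{c_2}\|\widetilde x^{t+1}\|$. Similarly, $\|x^t\| \le \|x^{t+1}\|/(c_2\lambda)$ handles the Onsager term. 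Each step therefore multiplies $\alpha$ by $O(L^3/(c_1c_2^2))^2$ and adds $O(L^4/(c_1^2c_2^2))\sigma^2\eps\log\frac1\eps$. The base case is $\alpha_0 = \eta$ plus the $E$ contribution, and $\widetilde x^0$ is delocalized by \pref{prop:deloc-amp}. Unrolling over $T$ steps gives the claimed bound with $\max(\sigma^2\eps\log\frac1\eps,\eta)$, and $\vAMP(X) = x^T$.

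The main obstacle we expect is the norm bookkeeping at the boundary steps. The expansion condition only relates $\|x^{t+1}\|$ to $\|x^t\|$ for $t\ge1$, so the final iterate and the step from $\widetilde x^0$ must be handled using $\|\widetilde x^0\|^2 = \lambda^2 n$ and $c_1$ directly. A second concern is that the delocalization from \pref{prop:deloc-amp} must hold simultaneously for the data-dependent set $S$ produced by cleaning. This is fine because the statement is uniform over all sets of size $\le \eps n$, applied with $O(\eps)$ in place of $\eps$, and it holds with probability $1-o_n(1)$ for all $t\le T$ by a union bound.
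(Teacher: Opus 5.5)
\textbf{There is a genuine gap in how you bound $(\hat Y - X)x^{t+1}$.} Spectral cleaning zeroes out the rows and columns indexed by $\overline T$. On each such row, $\hat Y - X$ therefore equals $-X$ across \emph{all} columns. So $\hat Y - X$ is supported on the full rows and columns indexed by $\overline T$, together with the block $(S^\ast\cap T)\times(S^\ast\cap T)$. This is a ``cross,'' not the principal minor $S\times S$ with $S = S^\ast\cup\overline T$. Your statement that zeroing out rows and columns keeps the difference a principal minor is false, and so is the bound $\|(\hat Y - X)x^{t+1}\|_2 \le C'\widetilde\lambda\,\|x^{t+1}_S\|_2$ that rests on it.

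The rows in $T$ are harmless. They contribute $\hat E\, x^{t+1}_{S^\ast\cap T} - X_{T,\overline T}\, x^{t+1}_{\overline T}$, which only sees $x^{t+1}_S$. The problem is the rows in $\overline T$, where you get $-(Xx^{t+1})_{\overline T}$. This depends on every coordinate of $x^{t+1}$. An operator-norm bound only gives $\|X\|_{\op}\|x^{t+1}\|_2$ there, with no $\eps$ factor, so the induction does not close. Note also that $\overline T$ may contain uncorrupted rows that cleaning removed.

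\textbf{How to repair it.} This is exactly the $F_1$ term in the paper's proof. Use the AMP recursion $Xx^{t+1} = \widetilde x^{t+1} + B_t x^t$ to get
\[
\|(Xx^{t+1})_{\overline T}\|_2^2 \le 2\|\widetilde x^{t+1}_{\overline T}\|_2^2 + 2L^2\|x^t_{\overline T}\|_2^2 .
\]
Then invoke the \emph{first} part of \pref{prop:deloc-amp}, the delocalization of the pre-denoiser iterate $\widetilde x^{t+1}$, applied uniformly over all sets of size at most $4\eps n$. Combine it with the delocalization of $x^t$ from the second part.

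In your bookkeeping this fits cleanly. The bound $\|\widetilde x^{t+1}_{\overline T}\|_2^2 \le O(\sigma^2\eps\log\frac1\eps)\,\|\widetilde x^{t+1}\|_2^2$ is already in the units you track. The $x^t$ piece converts to $\|\widetilde x^{t+1}\|_2$ through the expansion condition, just as you handle the Onsager term.

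With this fix, the rest of your argument is essentially the paper's: the Lipschitz step, absorbing $\widetilde\lambda$ via $c_2\lambda$-expansion, and $|B_t|\le L$. Tracking $\widetilde y^t - \widetilde x^t$ instead of $y^t - x^t$ is only a bookkeeping difference.
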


\begin{theorem}[Robust Planted AMP under Controlled Iterations]\torestate{
\label{thm:amp-controlled}
	Take the definitions of $X$ and $Y$ from above. 
	Suppose that the AMP iteration $f^0, f^1, \ldots,f^T$ is $(B, c_1,c_2)$-controlled and that there is some initialization $\widetilde y^0$ such that $\|\widetilde y^0 - \widetilde x^0\|_2^2 \le \eta \|\widetilde x^0\|_2^2$.
	Then,
	\[\|\vAMP(Y) - \vAMP(X)\|_2^2 \le O(B^5c_1)^{2T} c_2\max(\eps,\eta)  \log^{2T} \frac 1\eps \cdot \|\vAMP(X)\|_2^2.\]
}
\end{theorem}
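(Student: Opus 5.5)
We prove the statement by induction on $t$, following the sketch in the Technical Overview but now keeping the Onsager correction. The goal is the normalized bound
\[
\tfrac 1n\|y^{t}-x^{t}\|_2^2 \le \delta_t^2, \qquad \delta_t^2 \triangleq (CB^5c_1)^{2t}\max(\eps,\eta)\log^{2t}\tfrac1\eps,
\]
together with a companion bound $\frac1n\|\widetilde y^{t}-\widetilde x^{t}\|_2^2 \lesssim \lambda^2\delta_t^2$ on the pre-denoiser iterates. Here $y^t,\widetilde y^t$ are the AMP iterates run on $Y$ from $\widetilde y^0$, and we use the same deterministic Onsager coefficients $B_{t-1}=\E[f_t'(\mu_tV+\sigma_tZ)]$ for both runs; this substitution is justified by the remark after \pref{def:state-evolution}.

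At the end we convert to the stated relative form using the last controlled condition, $c_2\|x^T\|_2^2\ge n$. The base case comes from the hypothesis on $\widetilde y^0$ together with $\|\widetilde x^0\|_2^2=\lambda^2 n$.

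\textbf{Inductive step.} We split the coordinates of $y^{t+1}-x^{t+1}=f_t(\widetilde y^t)-f_t(\widetilde x^t)$ into four groups.

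1. The corrupted rows $S^\ast$. Boundedness $|f_t|\le B$ gives a contribution of at most $4B^2\eps n$.

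2. The coordinates $R_1\subset\overline{S^\ast}$ where $|\widetilde x^t_i-\mu_tv_i|>\mu_t/10$. We bound $|R_1|$ using state evolution with $\psi$ a (PL-smoothed) indicator of this event. By subgaussian Gaussian tails and $\mu_t\ge\lambda/2$, this gives $|R_1|\lesssim n\exp(-c\lambda^2/\sigma_t^2)$. Here $\sigma_t\le B$ because $\sigma_t^2=\E f_{t-1}^2\le B^2$. When $\lambda\ge C'B\sqrt{\log(1/\eps)}$ this is at most $\eps n$, so the contribution is $4B^2\eps n$.

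3. The coordinates $R_2$ where $|\widetilde y^t_i-\widetilde x^t_i|>\mu_t/10$. By Markov, $|R_2|\le 100\|\widetilde y^t-\widetilde x^t\|^2/\mu_t^2$, so the contribution is $4B^2|R_2|$.

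4. The remaining coordinates. Both $\widetilde x^t_i$ and $\widetilde y^t_i$ lie in $I_{v_i}$, so the local Lipschitz bound $c_1/\lambda$ applies.

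\textbf{The recurrence.} On $\overline{S^\ast}$ the rows of $Y$ and $X$ agree, so
\[
(\widetilde y^t-\widetilde x^t)_{\overline{S^\ast}} = \bigl(X(y^t-x^t)\bigr)_{\overline{S^\ast}} - B_{t-1}(y^{t-1}-x^{t-1})_{\overline{S^\ast}}.
\]
Using $\|X\|_{\op}\le 4\lambda$ and $|B_{t-1}|\le B^2\lambda$ (from the Lipschitz bound), the right-hand side has squared norm $O(B^4\lambda^2)(\delta_t^2+\delta_{t-1}^2)n$. Only these off-$S^\ast$ coordinates are ever needed: groups 3 and 4 lie in $\overline{S^\ast}$, so the uncontrolled coordinates of $Y$ never enter.

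Combining the four groups, for large $\lambda$:
\begin{align*}
\tfrac1n\|y^{t+1}-x^{t+1}\|_2^2 &\le 8B^2\eps + \frac{400B^2}{\mu_t^2}\cdot O(B^4\lambda^2)\delta_t^2 + \frac{c_1^2}{\lambda^2}\,O(B^4\lambda^2)\delta_t^2 \\
&\le O(B^6 + B^4c_1^2)\,\delta_t^2 + 8B^2\eps.
\end{align*}
This closes the induction with room for the $\log$ factors.

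\textbf{Small-$\lambda$ case.} When $\lambda< C'B\sqrt{\log(1/\eps)}$, we skip the local analysis and use the global Lipschitz constant $B^2\lambda$:
\[
\tfrac1n\|y^{t+1}-x^{t+1}\|^2 \le 4B^2\eps + B^4\lambda^2\cdot O(B^4\lambda^2)\delta_t^2.
\]
Since $\lambda^4\lesssim B^4\log^2(1/\eps)$, each step costs a factor of $O(B^{6})\log^2(1/\eps)$ beyond $\delta_t^2$; this is where the $\log^{2T}$ comes from. (In the paper's accounting the two cases together give $O(B^5c_1)^2\log^2$ per step.)

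\textbf{Main obstacle.} The delicate step is group 2. State evolution is only an in-probability limit for $\PL(2)$ test functions, so the indicator has to be sandwiched between Lipschitz approximations. One must also check that the tail bound $\Pr[|\sigma_tZ|>\mu_t/10]\le\eps$ uses the Gaussian part only, since $V$ enters exactly through $\mu_tV$.

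A second subtlety is the Onsager term. It carries the factor $B^2\lambda$ and depends on $y^{t-1}-x^{t-1}$, so the induction has to carry two consecutive steps. To keep this manageable I would use a deterministic $B_{t-1}$ for both runs, rather than the empirical $b_{t-1}$ computed from $\widetilde y^t$.
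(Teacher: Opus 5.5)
Your proposal follows the paper's proof essentially step for step: the four-way split ($S^\ast$, state-evolution outliers, Markov outliers, and the remainder where both points lie in $I_{v_i}$) is exactly the paper's claim that controlled denoisers are not too expanding, applied to the vector that equals $\widetilde y^t$ off $S^\ast$ and $\widetilde x^t$ on $S^\ast$. The rest also matches: the same fallback to the global Lipschitz constant $B^2\lambda$ when $\lambda\lesssim B\sqrt{\log(1/\eps)}$, and the same final conversion via $c_2\|x^T\|_2^2\ge n$.

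One bookkeeping slip, which the paper's own arithmetic shares: in the small-$\lambda$ case the per-step factor is $(B^2\lambda)^2\cdot O(B^4\lambda^2)=O(B^{12})\log^2\frac1\eps$, not $O(B^6)\log^2\frac1\eps$. So the induction as written closes with $O(B^6c_1)^{2T}$ rather than the stated $O(B^5c_1)^{2T}$.
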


We will prove both of these results in the upcoming sections. 
Before doing so, we require one additional primitive to ensure access to some $\widetilde{y}^0$ which is close to $\widetilde{x}^0$ in the case that $\E[V] = 0$.
When $\E[V] = 0$, AMP is typically initialized from an input-dependent warm start before running the iteration: in particular, it starts at $x^0 = \lambda \sqrt{n}\varphi_1(X)$, for $\varphi_1$ the unit eigenvector of the largest eigenvalue.
The next theorem proves that we can efficiently compute an approximation to $\varphi_1(X)$ when only given $Y$.

\begin{theorem}[Robust PCA]\torestate{
\label{thm:robust-pca}
	Given a $\eps$-principal minor corruption $Y$ of $X$ with $\sigma$-subgaussian $0$-mean signal distribution $V$, we can in polynomial time compute a unit vector $w(Y)$ which satisfies 
	\[\langle w(Y), \varphi_1(X)\rangle^2 \ge 1 - C\sigma^2 \eps \log \frac 1\eps  \frac{\widetilde \lambda}{\widetilde \lambda - 2}.\]
}
\end{theorem}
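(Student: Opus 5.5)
We will use the SDP route, since the statement only requires polynomial time and has the better error factor $\frac{\widetilde\lambda}{\widetilde\lambda-2}$ (rather than its cube). Write $\delta \triangleq C_0\sigma^2\eps\log\frac1\eps$.

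\textbf{Delocalization of the uncorrupted eigenvector.} First we show that $\varphi_1(X)$ is $(\delta,2\eps)$-delocalized with probability $1-o_n(1)$. By the eigenvalue equation,
\[
\varphi_1 = \frac{1}{\widetilde\lambda}\left(\frac{\lambda}{n}\langle v,\varphi_1\rangle v + G\varphi_1\right).
\]
The first term is delocalized because $v$ has i.i.d. $\sigma$-subgaussian coordinates. Concretely, the sum of the $\eps n$ largest $v_i^2$ is $O(\sigma^2\eps\log\frac1\eps)n$, by a union bound over sets together with subgaussian tail bounds. For the second term we use rotational invariance of $G$ in the orthogonal complement of $v$; alternatively, we note that $\varphi_1$ is (up to scaling) a spectral-initialized AMP iterate, and invoke \pref{prop:deloc-amp}.

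\textbf{SDP relaxation.} We solve
\[
\max\ Y\bullet W \quad \text{s.t.}\quad W\succeq 0,\ \ \tr W = 1,\ \ \sum_{i\in S}W_{ii}\le \delta \ \text{ for all } |S|\le 2\eps n,
\]
and additionally impose $W_{ij}=0$ wherever needed. The delocalization constraints have a polynomial-time separation oracle: sort the diagonal of $W$ and check the top $2\eps n$ entries. To handle the adversarial corruption $E$, we first apply the spectral cleaning of \cite{IS25}, which yields a matrix $\hat Y$ agreeing with $X$ outside an $O(\eps)$ principal minor and satisfying $\|\hat Y - X\|_{\op}\le O(\widetilde\lambda)$. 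We then run the SDP on $\hat Y$. The rank-one point $W^\ast=\varphi_1\varphi_1^\top$ is feasible, and it attains
\[
\hat Y\bullet W^\ast \ge \widetilde\lambda - |E'\bullet W^\ast|,
\]
where $E'=\hat Y - X$ is supported on $S^\ast\times S^\ast$. By Cauchy–Schwarz, $|E'\bullet W|\le \|E'\|_{\op}\tr(W_{S^\ast S^\ast})\le O(\widetilde\lambda\delta)$ for every feasible $W$. Hence the SDP optimum $W$ satisfies
\[
X\bullet W\ge \widetilde\lambda(1-O(\delta)).
\]

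\textbf{Rounding.} Write $X=\widetilde\lambda\varphi_1\varphi_1^\top + R$ with $R\preceq (2+o(1))I$. Then
\[
X\bullet W\le \widetilde\lambda\,\varphi_1^\top W\varphi_1 + 2\bigl(1-\varphi_1^\top W\varphi_1\bigr),
\]
so $\varphi_1^\top W\varphi_1\ge 1-O(\delta)\frac{\widetilde\lambda}{\widetilde\lambda-2}$. We output $w$, the top eigenvector of $W$. Since $\tr W=1$ and $\varphi_1^\top W\varphi_1\ge 1-\gamma$, the top eigenvalue of $W$ is at least $1-\gamma$. Therefore $\langle w,\varphi_1\rangle^2\ge 1-2\gamma$, which is the claimed bound.

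\textbf{Main obstacle.} The hard step is the delocalization of $\varphi_1(X)$ at scale $\sigma^2\eps\log\frac1\eps$, uniformly over all small sets. The noise component $G\varphi_1$ is not independent of $\varphi_1$, so this is not immediate. We would first try to derive it from \pref{prop:deloc-amp} by viewing power iteration as a degenerate AMP iteration. Failing that, we would use the resolvent representation
\[
\varphi_1\propto (\widetilde\lambda - G)^{-1}v,
\]
which is valid because $\widetilde\lambda>\|G\|_{\op}$. Expanding in a Neumann series in $G/\widetilde\lambda$, we would bound each term's mass on small sets using Gaussian concentration.
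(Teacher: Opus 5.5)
\textbf{There is a genuine gap:} your proposal treats $E'=\hat Y-X$ as supported on $S^\ast\times S^\ast$, and that is false. Spectral cleaning does not leave $\hat Y$ agreeing with $X$ off a small principal minor. It zeroes out entire rows and columns indexed by $\overline T$, which has up to $4\eps n$ indices that need not lie inside $S^\ast$. Hence $\hat Y-X=E_{T,T}-(X-X_{T,T})$, where $X-X_{T,T}$ is $X$ on the cross $(\overline T\times[n])\cup([n]\times\overline T)$.

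On this cross part, your bound $|E'\bullet W|\le\|E'\|_{\op}\tr(W_{S^\ast S^\ast})$ is unavailable. Cauchy--Schwarz only gives $|X_{T,\overline T}\bullet W_{T,\overline T}|\le\|X\|_{\op}\sqrt{\tr(W_{\overline T,\overline T})}=O(\widetilde\lambda\sqrt\delta)$. This can genuinely be of order $\sqrt\delta$ for feasible $W$, e.g.\ $W=ww^\top$ with $w_{\overline T}$ spread with mass $\Theta(\delta)$ and $w_T\propto X_{T,\overline T}w_{\overline T}$. Moreover, $\hat Y\bullet W$ does not depend on $W_{T,\overline T}$ at all, so nothing in your argument controls this term at the optimizer. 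Your chain therefore only yields $X\bullet W\ge\widetilde\lambda(1-O(\sqrt\delta))$, and a final error of order $\sqrt\delta\,\frac{\widetilde\lambda}{\widetilde\lambda-2}$ rather than $\delta\,\frac{\widetilde\lambda}{\widetilde\lambda-2}$. The same square-root loss appears in your lower bound on $\hat Y\bullet\varphi_1\varphi_1^\top$ if done naively. It equals $x_T^\top Xx_T+x^\top E_{T,T}x$ with $x=\varphi_1(X)$, and $x_T^\top Xx_T$ contains the cross term $x_T^\top Xx_{\overline T}$.

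The paper's proof supplies two ingredients that you are missing.

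\emph{Restrict to $T$.} Since $\hat Y$ vanishes outside $T\times T$, the paper works with an SDP solution $M$ supported on $T$. You could instead keep your unrestricted SDP, analyze $W_{T,T}$, and use
\[
|\varphi_1^\top W\varphi_1-\varphi_1^\top W_{T,T}\varphi_1|\le 2\sqrt{\tr(W_{\overline T,\overline T})\,\|\varphi_{1,\overline T}\|_2^2}+\tr(W_{\overline T,\overline T})\,\|\varphi_{1,\overline T}\|_2^2=O(\delta).
\]
Either way, $\hat Y\bullet M-X\bullet M=E_{T,T}\bullet M$, and $E_{T,T}$ really is supported on the principal minor $(S^\ast\cap T)^2$ with $\|E_{T,T}\|_{\op}\le 6\widetilde\lambda$. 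Your trace argument then gives the desired $O(\widetilde\lambda\delta)$ loss.

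\emph{Use the eigenvector equation.} Whichever comparison point you use ($\varphi_1\varphi_1^\top$ in the unrestricted SDP, or $x_Tx_T^\top/\|x_T\|_2^2$ as in the paper), lower-bounding its objective requires the identity
\[
x_T^\top Xx_{\overline T}=\widetilde\lambda\|x_{\overline T}\|_2^2-x_{\overline T}^\top Xx_{\overline T}=O(\widetilde\lambda\delta).
\]
This gives $x_T^\top Xx_T\ge\widetilde\lambda(1-O(\delta))$ with a linear rather than square-root loss (Propositions~\ref{prop:deloc-top-eig} and~\ref{prop:deloc-eig}).

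Your phrase ``impose $W_{ij}=0$ wherever needed'' gestures at the first ingredient, but it is inconsistent with then using $\varphi_1\varphi_1^\top$ as the feasible point, and the second ingredient is absent. The remaining pieces match the paper: delocalization of $\varphi_1(X)$ via the state-evolution view and Proposition~\ref{prop:deloc-amp}, the spectral-gap bound on $\varphi_1^\top W\varphi_1$, and the rounding step $\langle w,\varphi_1\rangle^2\ge 1-2\gamma$.
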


The running time of the above algorithm is not practically feasible ($\widetilde O(n^8)$, due to having to solve an SDP with the ellipsoid method), so we also give a fast spectral version of this algorithm with slightly worse error guarantees.

\restatetheorem{thm:fast-robust-pca}

	Note that the spectral algorithm's error has a worse dependence on $\widetilde{\lambda}$ than the SDP-based algorithm; it is possible that this is an artifact of our analysis.

\subsection{Principal-minor robust PCA}

Our goal in this section is to prove~\pref{thm:robust-pca} and~\pref{thm:fast-robust-pca}. {\bf For simplicity throughout this section, let $\alpha = 32\sigma^2 \eps \log \frac 1\eps$.}

We will make use of a spectral algorithm from~\cite{IS25} which returns a large principal minor of bounded operator norm.

\begin{lemma}[Spectral cleaning of principal minor corruptions (cf. Lemma 3.5 of~\cite{IS25})]
\label{lem:spectral-cleaning}
	Given $Y = X + E$ as above, there is an algorithm running in time $\widetilde O(n^3)$ and with high probability returning a symmetric matrix $\hat{Y}$ and collection of indices $T\subseteq [n]$ such that
	\begin{itemize}[noitemsep]
		\item $|T| \ge (1 - 4\eps)n$ and $\hat{Y}_{T,T} = Y_{T,T}$. Furthermore, $\hat{Y}$ is zero outside of this principal minor.
		\item The norm of $\hat{Y}$ is controlled: $\|\hat{Y}\|_{\op} \le 5\|X\|_{\op}$.
	\end{itemize}
\end{lemma}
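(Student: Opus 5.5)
The plan is an iterative ``filter by eigenvector mass'' procedure. We maintain an index set $T$, initialized to $[n]$, and set $\hat Y$ to be $Y_{T,T}$ padded with zeros. As long as $\|\hat Y\|_{\op} > 5\tau$, where $\tau$ is an upper estimate of $\|X\|_{\op}$, we do the following. We compute a unit eigenvector $w$ of $\hat Y$ for its eigenvalue $\mu$ of largest magnitude, sample an index $i\in T$ with probability $w_i^2$, and delete $i$ from $T$. For the threshold we take $\tau = \widetilde\lambda + o_n(1)$, which is computable since $\lambda$ is a parameter of the AMP problem and $\|X\|_{\op} = \widetilde\lambda \pm o_n(1)$ with high probability. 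Replacing $5\|X\|_{\op}$ by $5\tau$ only changes lower-order terms; if needed, one can run with a slightly smaller constant such as $4.9$. When the loop halts, $\hat Y_{T,T} = Y_{T,T}$ and $\hat Y$ vanishes elsewhere by construction, and $\|\hat Y\|_{\op}\le 5\|X\|_{\op}$. So the content of the proof is to bound the number of iterations, which gives both $|T|\ge (1-4\eps)n$ and the running time.

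The key structural step comes first. Let $S^\ast$ be the support of $E$, and suppose $\hat Y w = \mu w$ with $|\mu| > 5\|X\|_{\op}$. For $i\in T\setminus S^\ast$, the $i$th row of $\hat Y$ coincides with the $i$th row of $X_{T,T}$, because $E$ vanishes on row $i$. Hence
\[
|\mu|\,\|w_{T\setminus S^\ast}\|_2 = \|(X_{T,T}w)_{T\setminus S^\ast}\|_2 \le \|X\|_{\op},
\]
and therefore $\|w_{T\setminus S^\ast}\|_2^2 \le 1/25$. Consequently each sampling step deletes an index of $S^\ast$ with probability at least $24/25$. This holds whatever corrupted rows remain and whichever sign $\mu$ has. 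Moreover, once $T\cap S^\ast=\emptyset$ we have $\hat Y = X_{T,T}$ padded with zeros, so $\|\hat Y\|_{\op}\le\|X\|_{\op}$ and the loop must halt. Thus the loop can run for at most as long as it takes to delete all $|S^\ast|\le\eps n$ corrupted indices.

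Next comes a concentration step. Each iteration succeeds, in the sense of removing an index of $S^\ast$, with conditional probability at least $24/25$ given the past. Hence the number of successes in the first $4\eps n$ iterations stochastically dominates $\mathrm{Bin}(4\eps n, 24/25)$. A Chernoff bound shows that this count is at least $\eps n$ except with probability $\exp(-\Omega(\eps n))$, which is $o_n(1)$ because $\eps = \widetilde\omega(n^{-1/2})$. So with high probability the loop terminates within $4\eps n$ iterations, which gives $|T|\ge(1-4\eps)n$. Each iteration needs one top-eigenvector computation; done by power or Lanczos iteration to sufficient accuracy, this costs $\widetilde O(n^2)$. That gives total time $\widetilde O(\eps n^3)\subseteq \widetilde O(n^3)$; even a full eigendecomposition per step stays within the stated $O(\eps n^4)$ budget.

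I expect the main obstacle to be the approximation issues rather than the combinatorics. First, the threshold must be tied to a computable quantity. Second, if $w$ is only an approximate eigenvector, the row-agreement argument has to be redone with a residual: for $\|\hat Y w - \mu w\|_2\le\gamma$, it yields $\|w_{T\setminus S^\ast}\|_2 \le (\|X\|_{\op}+\gamma)/|\mu|$, which still leaves the success probability bounded below by a constant larger than $1/4$, and that is all the Chernoff step needs. I would handle both issues by slightly lowering the constant $5$ in the loop condition so that there is slack for these errors.
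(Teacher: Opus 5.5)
The paper does not reprove this lemma; it defers to \cite{IS25}. Your argument is correct and is the same eigenvector-mass filter the paper alludes to when it reuses spectral cleaning in its robust-PCA proof: rows outside $S^\ast$ agree with $X_{T,T}$, so a top-magnitude eigenvector of $\hat Y$ with $|\mu|>5\|X\|_{\op}$ puts at most $1/25$ of its squared mass off $S^\ast$, the loop must stop once $T\cap S^\ast=\varnothing$, and a martingale/Chernoff count caps the deletions at $4\eps n$. For the $\widetilde O(n^3)$ bound, your residual estimate only needs $\gamma\le c|\mu|$ for a small constant $c$, and any unit $w$ with Rayleigh quotient $|\theta|\ge(1-\eta)\|\hat Y\|_{\op}$ satisfies $\|\hat Y w-\theta w\|_2^2=\|\hat Y w\|_2^2-\theta^2\le 2\eta\|\hat Y\|_{\op}^2$, which gap-free Lanczos from a random start achieves with $O(\log n)$ matrix--vector products per round.
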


\begin{algorithmSELF}[SDP reconstruction of top eigenvector]
\label{alg:robust-pca}
	\textbf{Input: } Matrix $Y$ guaranteed to be of the form $Y = X + E$ for a principal minor corruption $E$.
	
	\noindent \textbf{Operation: }
	\begin{itemize}[noitemsep]
		\item Using the algorithm referenced in~\pref{lem:spectral-cleaning}, compute the spectral cleaning $(\hat{Y}, T)$ of $Y$, satisfying $|T| \ge (1-4\eps n)$ and $\|\hat Y \|_{\op} \le 5\|X\|_{\op}$.
		\item Solve the SDP
		\begin{align*}
		M^* = \argmax_M \qquad & \hat{Y} \bullet M	\\
		\text{s.t} \qquad & M \succeq 0, \quad \tr(M) = 1, \quad \max_{|S|\le\eps n} \sum_{i\in S} M_{i,i} \le 2\alpha.
		\end{align*}

		\item Compute $w$, the top unit eigenvector of $M^*$.
	\end{itemize}
	
	\noindent \textbf{Output: } The unit vector $w$, which satisfies $\langle w, \varphi_1(X) \rangle^2 \ge 1 - 30\alpha \cdot \frac{\widetilde \lambda}{\widetilde \lambda - 2}$ (\pref{thm:fast-robust-pca}).
\end{algorithmSELF}

We show in Appendix~\ref{sec:amp-deloc} that $\varphi_1(X)$ is delocalized and achieves high quadratic form with $\hat{Y}$ (which will allow us to show feasibility of the above system). 

\begin{proposition}[$X$ correlates well with $\hat{Y}$]\torestate{
\label{prop:deloc-eig}
	Suppose that $\hat{Y}$ is a principal minor of $Y = X + E$ with at least $(1 - c\eps)n$ rows and satisfying $\|\hat{Y}\|_{\op} \le 5\widetilde \lambda$. Then, $\varphi_1(X)^\top \hat{Y} \varphi_1(X) \ge \widetilde \lambda \left(1 - (6 + 3c)\alpha\right)$.
}
\end{proposition}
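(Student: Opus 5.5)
Let $u = \varphi_1(X)$ and let $T$ be the index set of the principal minor, so that $\hat Y = Y_{T,T}$ (zero elsewhere) and $|\overline T| \le c\eps n$. The plan is to decompose the quadratic form as
\[
u^\top \hat Y u = u_T^\top X_{T,T} u_T + u_T^\top E_{T,T} u_T ,
\]
and to bound each piece from below by removing the contribution of the small sets $\overline T$ and $S^\ast$. Throughout we use that $u$ is delocalized. This is the eigenvector analogue of \pref{prop:deloc-amp}: since $u$ is close to (a rescaling of) a spike direction plus a GOE-resolvent correction, $u$ is $(\alpha,\eps)$-delocalized with high probability, and also $(c\alpha,c\eps)$-delocalized up to constant factors. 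This fact is what the appendix establishes, and I will take it as the main input. Concretely, I would write $u \propto (\widetilde\lambda I - G)^{-1} v$ and use subgaussianity of $v$ together with standard delocalization of $G$'s resolvent acting on a random vector.

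\textbf{Handling $X_{T,T}$.} Write $u = u_T + u_{\overline T}$. Then
\[
u^\top X u = u_T^\top X u_T + 2 u_T^\top X u_{\overline T} + u_{\overline T}^\top X u_{\overline T} = \widetilde\lambda .
\]
Since $u$ is an eigenvector, $u^\top X u_{\overline T} = \widetilde\lambda \|u_{\overline T}\|^2$, and therefore
\[
u_T^\top X u_T = \widetilde\lambda - 2\widetilde\lambda\|u_{\overline T}\|^2 + u_{\overline T}^\top X u_{\overline T} \ge \widetilde\lambda\bigl(1 - 3\|u_{\overline T}\|^2\bigr).
\]
The last step uses $u_{\overline T}^\top X u_{\overline T} \ge -\|X\|_{\op}\|u_{\overline T}\|^2$. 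By delocalization on sets of size $c\eps n$ (splitting into $\lceil c\rceil$ pieces of size $\eps n$), $\|u_{\overline T}\|^2 \le c\alpha$. This gives the $3c\alpha$ term.

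\textbf{Handling $E_{T,T}$.} Here $E_{T,T}$ is supported on $S^\ast\cap T$, which has size at most $\eps n$. We have $E_{T,T} = \hat Y_{T,T} - X_{T,T}$ restricted to $S^\ast$, so
\[
\|E_{T\cap S^\ast,T\cap S^\ast}\|_{\op} \le \|\hat Y\|_{\op} + \|X\|_{\op} \le 6\widetilde\lambda .
\]
Hence $u_T^\top E u_T \ge -6\widetilde\lambda\|u_{S^\ast}\|^2 \ge -6\widetilde\lambda\alpha$, again by delocalization. Adding the two bounds gives $u^\top\hat Y u \ge \widetilde\lambda(1 - (6+3c)\alpha)$.

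\textbf{Main obstacle.} The key step is the delocalization of $\varphi_1(X)$ at level $\alpha = 32\sigma^2\eps\log\frac1\eps$. This is needed uniformly over all sets of size $\eps n$, and in particular for $S^\ast$, which is adversarially chosen but could equally be taken after seeing $X$, so a union bound or max-over-sets argument is required. My first attempt would be via the resolvent representation: bound $\|((\widetilde\lambda - G)^{-1}v)_S\|^2$ by comparing to $\|v_S\|^2/(\widetilde\lambda-2)^2$ plus a fluctuation term, and then use that the top $\eps n$ squared entries of an i.i.d. $\sigma$-subgaussian vector sum to $O(\sigma^2\eps\log\frac1\eps)\,n$. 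If the constants do not match exactly, a factor depending on $\widetilde\lambda/(\widetilde\lambda-2)$ may need to be absorbed.
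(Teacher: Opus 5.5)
Your proposal is correct and follows the paper's proof. The paper also splits the form into the $X_{T,T}$ and $E_{T,T}$ parts, uses the eigenvector identity $x_T^\top X x_T = 2\widetilde\lambda\|x_T\|_2^2 - x^\top X x + x_{\overline T}^\top X x_{\overline T}$ to get $\widetilde\lambda(1-3c\alpha)$ (first bullet of \pref{prop:deloc-top-eig}), and bounds the corruption term by $\|E_{T,T}\|_{\op}\|x_{S^\ast\cap T}\|_2^2 \le 6\widetilde\lambda\alpha$. The only difference is where the delocalization input comes from: instead of a resolvent computation, the paper uses state evolution for $\sqrt n\,\varphi_1(X)$ (\cite[Proposition 3.4]{FVRS22}) together with \pref{prop:deloc-amp}, which bounds the top-$\eps n$ mass directly (so no union bound over $S^\ast$ is needed) and gives level exactly $\alpha$ without the $\widetilde\lambda/(\widetilde\lambda-2)$ loss you were worried about.
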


\begin{proof}[Proof of Theorem~\ref{thm:robust-pca}]
We will run~\pref{alg:robust-pca} on $Y$ and output the result.
Note that if $T$ is the set of nonzero rows/columns of $\hat Y$, the trace constraint ensures that $M^*$ is supported on $T$.

\textbf{Runtime. }
We begin by noting that the above SDP has a polynomial time separation oracle, even though there are exponentially many constraints. 
Indeed, to check the last condition, it suffices to sort the diagonal entries of $M$ in $O(n\log n)$ time, and if the sum of the largest $\eps n$ diagonal coordinates exceeds $2\alpha$, return the hyperplane just consisting of the $1$s on those diagonal coordinates.
Therefore, we can solve the above SDP in polynomial time with the ellipsoid method.

\textbf{SDP feasibility. } 
Let $x = \varphi_1(X)$.
Recall that $|\overline T| \le 4\eps n$ with high probability, and thus if a vector $y$ is $(\alpha,\eps)$-delocalized, it must be the case that $\|y_T\|_2^2 \ge 1 - 4\alpha$.
Take  $M = x_T x_T^{\top}$, appropriately renormalized to have $\tr(M) = 1$.
Then, by~\pref{prop:deloc-eig}, it follows that $\max_{|S|=\eps n} \sum_{i\in S} M_{i,i} \le \frac{\alpha}{1 - 4\alpha} \le 2\alpha$.
This implies that the system is feasible, and has objective value at least $x^\top \hat{Y} x \ge \widetilde \lambda (1 - 18\alpha)$.

 \textbf{Output guarantee. }
Letting $\lambda_1(M)$ be the top eigenvalue of $M$, it follows that $1\ge \lambda_1(M) \ge x^\top M x$.
	Therefore,
	\[\langle x,\varphi_1(M)\rangle^2 = \frac 1{\lambda_1}\left(x^\top M x - \sum_{i > 1} \lambda_i(M) \langle x, \varphi_i(M)\rangle^2\right) \ge x^\top M x - (1 - \lambda_1(M)) \ge 2x^\top M x - 1.\]
	This implies that we just need to show that $x^\top Mx$ is large (equivalently, this shows that $M$ is approximately rank $1$, with top eigenvector near $x$).
	
	Let $M'$ be the restriction of $M$ to the rows and columns in $S^\ast \cap T$, and note that $M'$ is still PSD. 
	Furthermore, by the SDP constraints we know that $\tr(M') \le 2\alpha$ since $|S^\ast| \le \eps n$.
	This implies that for any matrix $A$, $M'\bullet A \le \|A\|_{\op} \cdot 2\alpha$ (by taking the spectral decomposition of $M'$ and using the trace constraint).
	We can thus apply this to $A = E_{T,T}$: note that $\|E_{T,T}\|_{\op} \le \|\hat Y\|_{\op} + \|X\|_{\op}\le 6\widetilde \lambda + o_n(1)$.
	Therefore,
	\[M\bullet X = M\bullet \hat Y - M'\bullet E_{T,T} \ge \widetilde \lambda (1 - 18\alpha) - \|E_{T,T}\|_{\op} \cdot \alpha \ge \widetilde \lambda (1 - 30\alpha) - o_n(1).\]
	Next, define $\beta_i = \varphi_i(X)^\top M \varphi_i(X)$ and note that each $\beta_i$ is nonnegative. We have that
	\begin{align*}
		\widetilde \lambda (1 - 30\alpha) - o_n(1) &\le M\bullet X\\
		& = \sum \beta_i \lambda_i(X) \\
		&= \beta_1 \widetilde \lambda + \sum_{i > 1} \beta_i \lambda_i (X) \\
		&\le \beta_1 (\widetilde \lambda - o_n(1)) - (1 - \beta_1)\cdot (2 + o_n(1))\\
		& = \beta_1(\widetilde \lambda - 2) + 2 + o_n(1)
	\end{align*}
	Here, we used the fact that, other than the spike eigenvalue, all of $X$'s eigenvalues are at most $2 + o_n(1)$ with high probability, and $\|X\|_{\op} \le \widetilde \lambda + o_n(1)$. Rearranging, it follows that
	\[\beta_1 = x^\top M x \ge \frac{\widetilde \lambda (1 - 30\alpha) - 2 - o_n(1)}{\widetilde \lambda - 2 - o_n(1)} \ge 1 - 31\alpha\frac{\widetilde \lambda}{\widetilde \lambda - 2}\]
	where we used that $\eps = \widetilde \omega(n^{-1/2})$ and that the fluctuations of the top eigenvector of $X$ are $\widetilde O(n^{-1/2})$ with high probability.
	\end{proof}
	
The above algorithm requires us to solve an SDP with the ellipsoid method (which is polynomial time), but the running time is still prohibitively slow.
We next give a spectral algorithm whose analysis takes advantage of similar phenomena, but incurs a slightly larger error in the small-$\lambda$ regime.
In the below analysis, we will not worry about the $o_n(1)$ terms: similar discussion as in the SDP case show that they do not affect the final result.

\restatetheorem{thm:fast-robust-pca}

\begin{algorithmSELF}[Spectral reconstruction of top eigenvector]
\label{alg:fast-robust-pca}
\textbf{Input: } Matrix $Y$ guaranteed to be of the form $Y = X + E$ for a principal minor corruption $E$.

\noindent \textbf{Operation: }
\begin{itemize}[noitemsep]
	\item Using~\pref{lem:spectral-cleaning}, compute the spectral cleaning $\hat Y$ of $Y$.
	\item While $w\triangleq \varphi_1(\hat Y)$ is not $\left(2C\alpha \left(\frac{\widetilde \lambda}{\widetilde \lambda - 2}\right)^2, \eps\right)$-delocalized:
	\begin{itemize}[noitemsep]
		\item Let $S$ be the top $\eps n$ coordinates of $w$. Sample $i\in S$ with probability proportional to $w_i^2$, and remove both row and column $i$ from $\hat{Y}$.
	\end{itemize}
	\item Return $w$.
\end{itemize}

\noindent \textbf{Output: } Unit vector $w$ such that $\langle w,\varphi_1(X)\rangle^2 \ge 1 - O\left(\alpha \cdot \left(\frac{\widetilde\lambda}{\widetilde \lambda - 2}\right)^3\right)$.
\end{algorithmSELF}

We will require the following result (recalling that $\alpha = 32\sigma^2 \eps \log \frac 1\eps$), which will be proven in \pref{sec:amp-deloc}.

\begin{proposition}[Delocalization of off-corruption coordinates]
\torestate{
\label{prop:deloc-off-corruption}
	Suppose that $X$ is a spiked Gaussian matrix with $\sigma$-subgaussian signal distribution $V$, and let $A$ be a principal minor of $X$ with at least $(1 - c\eps)n$ rows. Further, let $B = A + E$, where $E$ is an $\eps$-principal minor corruption of $A$ supported on the rows $S$ and having $\|E\|_{\op} \le 6\widetilde \lambda$, and let $w$ be the top eigenvector of $B$. Then,
	\[\max_{|U|\le \eps n, U\cap S = \varnothing} \sum_{i\in U} w_i^2 \le Cc\alpha \left(\frac{\widetilde \lambda}{\widetilde \lambda - 2}\right)^2.\]
}
\end{proposition}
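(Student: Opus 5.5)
The plan is to exploit the eigenvector equation on the rows outside $S$, where $B$ agrees with $A$ and hence with $X$ restricted to the surviving minor. Write $\theta = \lambda_1(B) = w^\top B w$ and let $P$ denote the index set of $A$ (so $|P|\ge(1-c\eps)n$, $S\subseteq P$). For any $i\in P\setminus S$ we have $\theta w_i = (Aw)_i = \frac{\lambda}{n} v_i \langle v_P, w\rangle + (G_{P,P} w)_i$. The first step is to lower bound $\theta$ away from the bulk edge: testing $B$ against $\varphi_1(X)$ restricted to $P\setminus S$ gives a quadratic form with $A$ alone. By Proposition~\ref{prop:deloc-eig}-style reasoning (delocalization of $\varphi_1(X)$, which has mass at most $O(c\alpha)$ on the $\le(c+1)\eps n$ removed coordinates, and $\|X\|_{\op}\le\widetilde\lambda$ to bound the cross terms), we get $\theta \ge \widetilde\lambda(1-O(c\alpha))$, in particular $\theta - 2 \gtrsim \widetilde\lambda-2$ once $\alpha$ is small compared to $(\widetilde\lambda-2)/\widetilde\lambda$ (otherwise the claimed bound is trivial since it exceeds $1$).

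Next, restrict to the rows $Q = P\setminus S$ and solve the eigen-equation as a resolvent: $(\theta I - G_{Q,Q}) w_Q = \frac{\lambda}{n}\langle v_P,w\rangle v_Q + G_{Q,S} w_S$. Since $\|G\|_{\op}\le 2+o(1)$ and $\theta\ge 2 + \Omega(\widetilde\lambda-2)$, the resolvent $R=(\theta I - G_{Q,Q})^{-1}$ has norm $O(1/(\widetilde\lambda-2))$ — more precisely $1/(\theta-2)$, and using $\theta \gtrsim \widetilde\lambda$ one can convert $\lambda/(\theta-2)$ into $O(\widetilde\lambda/(\widetilde\lambda-2))$. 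So $w_Q = a\, R v_Q + R G_{Q,S} w_S$ with $|a| \le \frac{\lambda}{n}\|v\|\le \lambda/\sqrt n$. For a set $U\subseteq Q$ with $|U|\le \eps n$, $\|w_U\|^2 \le 2a^2\|(Rv_Q)_U\|^2 + 2\|(RG_{Q,S}w_S)_U\|^2$.

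The main obstacle is controlling the mass of these two vectors on an arbitrary small set $U$: a bounded operator does not preserve delocalization in general. For the first term, expand $R = \theta^{-1}\sum_k (G_{Q,Q}/\theta)^k$ and argue that $R v_Q$ is close to $\theta^{-1}v_Q$ plus a vector with Gaussian-like coordinates; concretely, I would use the subgaussian tail bound already used for Proposition~\ref{prop:deloc-amp} ($v$ is $(\alpha,\eps)$-delocalized) together with a union bound over the $\binom{n}{\eps n}$ sets $U$ to show that $G_{Q,Q}^k v_Q$ for each fixed $k$ is $O(\alpha)$-delocalized (its coordinates are approximately $N(0,\|v\|^2/n)$ conditionally), summing the geometric series to pay a factor $(\widetilde\lambda/(\widetilde\lambda-2))^2$. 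The subtlety is that $Q$ and $S$ depend on the adversary; I would handle it by a union bound over all possible $P,S$ (entropy $e^{O(\eps n\log(1/\eps))}$, matching the $\log\frac1\eps$ in $\alpha$) or, more cleanly, by bounding $\|(G_{Q,Q}^k v_Q)_U\|$ via $\|G_{U,[n]}\|_{\op}$ for $|U|\le\eps n$, which is $O(\sqrt{\eps\log(1/\eps)})$ uniformly over $U$ after the spectral bulk is subtracted. For the second term, $G_{Q,S}w_S$ similarly: $\|(G_{U,S}w_S)\|\le \|G_{U,S}\|_{\op}\|w_S\|$ and the uniform bound $\max_{|U|,|S'|\le \eps n}\|G_{U,S'}\|_{\op}^2 = O(\eps\log\frac1\eps)$ gives contribution $O(\alpha)$ after the resolvent factor. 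Combining, and noting $a^2\|v\|^2 \lesssim \lambda^2$ which is absorbed into $\theta^2$, yields $\|w_U\|^2 \le Cc\alpha\,(\widetilde\lambda/(\widetilde\lambda-2))^2$.
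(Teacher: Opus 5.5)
Your reduction to the eigen-equation on $Q=P\setminus S$ and your lower bound on $\theta$ are fine. The gap is in the two delocalization steps that carry the proof: neither is established, and the tools you name for them do not work.

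Because you invert only $\theta I-G_{Q,Q}$, the spike enters with the full coefficient $a=\frac{\lambda}{n}\langle v_P,w\rangle$. So $aRv_Q$ has norm of order $\lambda/(\theta-2)$, and you need $Rv_Q$ itself to be delocalized, uniformly over the adversary's $Q$ and over $\theta$.

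\begin{itemize}
\item Your Neumann-series justification is false beyond $k=1$. The coordinates of $G_{Q,Q}^kv_Q$ are not approximately $N(0,\|v\|^2/n)$: already $(G^2v)_i=(G^2)_{ii}v_i+\sum_{l\neq i}(G^2)_{il}v_l\approx v_i+N(0,1)$. You also give no tail bound of order $e^{-\Omega(\eps n\log\frac1\eps)}$, summable over $k$, that would survive the union bound over $Q$ and $U$.
\item The ``cleaner'' route fails outright. For $|U|=\eps n$, $G_{U,[n]}$ is an $\eps n\times n$ Gaussian matrix with entry variance $1/n$, so $\|G_{U,[n]}\|_{\op}\approx 1+\sqrt{\eps}$, not $O(\sqrt{\eps\log\frac1\eps})$. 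Small mass on small sets holds only for vectors independent of $G$, never in operator norm.
\item The noise term has the same defect. The bound $\max_{|U|,|S'|\le\eps n}\|G_{U,S'}\|_{\op}^2=O(\eps\log\frac1\eps)$ is true, but it only shows that $G_{Q,S}w_S$ is delocalized. As you yourself note, $R$ need not preserve that. Since $\|G_{Q,S}\|_{\op}\approx 1$ and $w_S$ may carry constant mass, there is no small norm to fall back on.
\end{itemize}

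The paper sidesteps both problems. It splits $w_T=\beta u+u^\perp$ along $u=\varphi_1(B_{T,T})=\varphi_1(X_{T,T})$, which is delocalized by Proposition~\ref{prop:deloc-top-eig}. It then inverts $\mu I-B_{T,T}$ only on $u^\perp$, where interlacing gives $\|P\|_{\op}\le 1/(\mu-2-o_n(1))$.

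The spike then survives only through the cross block, $\frac{\lambda}{n}\langle v_S,w_S\rangle Pv_T$. Its \emph{entire} norm is $O(\lambda\sqrt{\alpha}/(\mu-2))$, because $\|v_S\|^2= O(\alpha n)$ by delocalization of $v$. No delocalization of a resolvent applied to $v$ is ever needed.

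For $PG_{T,S}w_S$ the key is conditional independence. For fixed $S,T$ and a discretized $\mu$, $P$ is a function of $X_{T,T}$ alone and hence independent of $G_{T,S}$. So each $z^\top PG_{T,S}z'$ is Gaussian with variance at most $1/(n(\mu-2)^2)$. Nets over unit vectors supported on $U$ and on $S$, plus a union bound of entropy $O(\eps n\log\frac1\eps)$, give $\|\Pi_UPG_{T,S}\|_{\op}^2=O(\eps\log\frac1\eps/(\mu-2)^2)$ uniformly.

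That argument would repair your second term verbatim, with $R$ in place of $P$ after discretizing $\theta$. Your first term needs either the $u/u^\perp$ split, or a genuinely new lemma showing $(\theta-G_{Q,Q})^{-1}v_Q$ is delocalized with large-deviation tails (for instance via rotational invariance of $G_{Q,Q}$ for fixed $Q$).
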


\begin{proof}[Proof of Theorem~\ref{thm:fast-robust-pca}]
We will run~\pref{alg:fast-robust-pca} on $Y$ and output the result.

\textbf{Runtime. } We prove below that the number of iterations of the while loop is at most $O(\eps n)$ with high probability, and computing the top eigenvector takes time $O(n^3)$.

\textbf{Analysis. } First, note that when all rows of $E$ have been removed and at most $8\eps n$ total rows have been removed,~\pref{prop:deloc-off-corruption} implies the desired delocalization (since $S = \varnothing$), and therefore the while loop will terminate.
Therefore, a valid stopping condition is that all rows of $E$ get removed.
We prove that, otherwise,
\begin{equation}\frac{\sum_{i\in S\cap S^\ast} w_i^2}{\sum_{i\in S} w_i^2} \ge \frac 12. \label{eq:progress}\end{equation}
As in the analysis of the spectral cleaning algorithm from \cite{IS24}, this implies that as long as delocalization does not hold we have probability at least $\frac{1}{2}$ of removing an element of $S^\ast$, thus by a Martingale argument the number of extra rows we will remove is at most $4\eps n$ with high probability, implying that a total of at most $8\eps n$ rows are removed.

In this scenario, by~\pref{prop:deloc-eig}, it follows that $w^\top \hat{Y} w \ge \widetilde \lambda(1 - 30\alpha)$. Using delocalization, we thus know that
\[w^\top X w = w^\top \hat{Y} w - w^\top E w \ge \widetilde \lambda(1 - 30\alpha) - 6\widetilde \lambda \cdot 2C\alpha \left(\frac{\widetilde \lambda}{\widetilde \lambda - 2}\right)^2 = \widetilde \lambda \left(1 - C\alpha \left(\frac{\widetilde \lambda}{\widetilde \lambda - 2}\right)^2\right).\]
Expanding $w$ in the $\varphi_i(X)$ basis as $w = \sum \beta_i \varphi_i(X)$, we see that
\[w^\top X w = \sum \beta_i^2 \lambda_i(X) \le \beta_1^2 \widetilde \lambda + (1 - \beta_1^2)\cdot 2 = 2 + \beta_1^2(\widetilde \lambda - 2)\]
which upon rearranging yields
\[\beta_1^2 \ge \frac{\widetilde \lambda \left(1 - C\alpha \left(\frac{\widetilde \lambda}{\widetilde \lambda - 2}\right)^2\right) - 2}{\widetilde \lambda - 2} = 1 - C\alpha \left(\frac{\widetilde \lambda}{\widetilde \lambda - 2}\right)^3\]
as desired.

Thus, it remains to prove~\pref{eq:progress}. 
Towards this end, we will denote $\delta = 8C\alpha \left(\frac{\widetilde \lambda}{\widetilde \lambda - 2}\right)^2$: then, we know that $\deloc_{\eps}(w) > 2\delta$ while $\deloc_{\eps}(w_{\overline {S^\ast}}) \le \delta$ by~\pref{prop:deloc-off-corruption}.
This immediately implies that
\[\frac{\sum_{i\in S\cap S^\ast} w_i^2}{\sum_{i\in S} w_i^2} = \frac{\sum_{i\in S} w_i^2 - \sum_{i\in S - S^\ast} w_i^2}{\sum_{i\in S} w_i^2} \ge \frac{\sum_{i\in S} w_i^2 - \delta}{\sum_{i\in S} w_i^2} = 1- \frac{\delta}{\sum_{i\in S} w_i^2} \ge \frac 12\]
as desired. 
\end{proof}

\subsection{Robust AMP for expanding iterations}

In this section, we will prove the following theorem.

\restatetheorem{thm:amp-expanding}

\begin{proof}
 The argument will mimic that of~\cite[Lemma 3.6]{IS25}, with no clipping (recall that clipping is only required for polynomial denoisers), and we will proceed by induction.
 For convenience, we will introduce $\alpha = \max(\sigma^2 \eps \log \frac 1\eps,\eta)$ and assume the inductive hypothesis that
 \[\|y^{t} - x^{t}\|_2^2 \le \left(\frac{10^5 L^3}{c_1 c_2^2}\right)^{2t} \alpha \|x^t\|_2^2.\]
	 By~\pref{lem:spectral-cleaning}, we begin by constructing the $4\eps$-spectral cleaning $\hat{Y}$ of $X$ with $\|\hat Y\|_{\op} \le 5\|X\|_{\op}$.
	 We will further assume without loss of generality that errors are monotonically increasing: that is, $ \|y^{t-1} - x^{t-1}\|_2^2 \le \|y^t - x^t\|_2^2$.
	 It will also be useful to recall that $|B_{t-1}| \le L$, by definition of $B_t$.
	 
	 We may now, by the Almost-Triangle Inequality and the Lipschitzness of $f_t$, write
	 \begin{align*}
	 	\|y^{t+1} - x^{t+1}\|_2^2 &= \left\|f_t(\hat Y y^t - B_{t-1} y^{t-1}) - f_t(Xx^t - B_{t-1} x^{t-1})\right\|_2^2\\
	 	&\le L^2 \|(\hat Y y^t - B_{t-1} y^{t-1}) - (Xx^t - B_{t-1}x^{t-1})\|_2^2\\
	 	&\le 2L^2 \left(\|\hat Yy^t - Xx^t\|_2^2 + |B_{t-1}|^2 \|y^{t-1} - x^{t-1}\|_2^2\right)\\
	 	&\le 2L^2 \|\hat Yy^t - Xx^t\|_2^2 + 2L^4 \|y^{t} - x^{t}\|_2^2 \stepcounter{equation}\tag{\theequation}\label{eq:induction}
	 \end{align*}
	 It remains to handle the first term.
	 To do so, we may expand
	 \[\|\hat Yy^t - Xx^t\|_2^2 = \|\hat Yy^t - \hat Yx^t + \hat Yx^t - Xx^t\|_2^2 \le \|\hat{Y}\|_{\op}^2\|y^t - x^t\|_2^2 + \|(\hat Y - X)x^t\|_2^2.\]
	 wherein we can substitute $\|\hat{Y}\|_{\op} \le 5\widetilde \lambda$.
	 More interesting is the second term, where we have to use the structure of $\hat{Y} - X$, which looks like (up to permutation of rows)
	 \[\hat{Y} - X = \begin{bmatrix}
 	 & F_1 &  \\
\hline
 	F_2 \vline & \hat E & {\bf 0}\\
 	\phantom{F_2} \vline & {\bf 0} & {\bf 0}	
 \end{bmatrix}
\]
where $\hat E$ is the portion of the perturbation $E$ not removed by spectral cleaning, and $F_1$ and $F_2$ represent the non-corrupted entries removed by spectral cleaning and equal $-X$ on their support, and have row or column support at most $4\eps n$, respectively.
By~\pref{prop:deloc-amp}, we know that $x^t$ is $(\frac{32}{c_1^2}\alpha,\eps)$-delocalized and thus also $(\frac{128}{c_1^2}\alpha, 4\eps)$-delocalized. 
Therefore, since $\|\hat E\|_{\op} \le \|\hat Y\|_{\op} + \|X\|_{\op} \le 6\widetilde \lambda$ and $\|F_2\|_{\op} \le \|X\|_{\op} \le \widetilde \lambda$, it follows that
\begin{align*} 
\|(\hat E + F_2)x^t\|_2^2 &\le \|\hat E + F_2\|_{\op}^2 \cdot \frac{128L^2}{c_1^2}\alpha \|x^t\|_2^2 \le (7\widetilde \lambda)^2 \cdot \frac{128L^2}{c_1^2}\alpha \|x^t\|_2^2.
\end{align*}
Finally, we see that, if $R$ is the row-support of $F_1$, then $F_1 x^t = -(Xx^t)_{R} = -(\widetilde x^{t} + B_{t-1} x^{t-1})_{R}$.
By the Almost-Triangle Inequality, we see that
\[\|F_1 x^t\|_2^2 \le 2\left(\|\widetilde x^t_R\|_2^2 + \|B_{t-1} x^{t-1}_R\|_2^2\right) \le 256 \alpha \|\widetilde x^t\|_2^2 + L^2 \cdot \frac{256L^2}{c_1^2}\alpha \|x^{t-1}\|_2^2. \]

Putting everything together into~\pref{eq:induction} and recalling that $\|x^{t-1}\|_2^2 \le \frac 2{c_2^2} \|x^t\|_2^2$ whenever $\lambda \ge \frac 1{\sqrt 2}$, we see that
\begin{align*}
 \|y^{t+1} - x^{t+1}\|_2^2 &\le 2L^2 \|\hat Yy^t - Xx^t\|_2^2 + 2L^4 \|y^{t} - x^{t}\|_2^2\\
 &\le 2L^2 \left((5\widetilde \lambda)^2 \|y^t - x^t\|_2^2 + (7\widetilde \lambda)^2 \cdot \frac{256L^2}{c_1^2}\alpha \|x^t\|_2^2 + 512\alpha \left(\|\widetilde x^t\|_2^2 + \frac{L^4}{c_1^2}\|x^{t-1}\|_2^2\right)\right) + 2L^4 \|y^t - x^t\|_2^2\\
 &\le (2L^2 (5\widetilde \lambda)^2 + 2L^4) \|y^t - x^t\|_2^2 + \left((7\widetilde \lambda)^2\cdot \frac{256L^4}{c_1^2} + \frac{1024L^6}{c_1^2 c_2^2}\right) \alpha \|x^t\|_2^2 + 512L^2\alpha \|\widetilde x^t\|_2^2\\
 &\le 100 L^4 \widetilde \lambda^2 \|y^t - x^t\|_2^2 + 2^{18} \frac{L^6}{c_1^2c_2^2} \widetilde \lambda^2 \alpha \|x^t\|_2^2 + 512L^2 \alpha \|\widetilde x^t\|_2^2.\\
 \intertext{Now we apply induction on $\|y^t - x^t\|_2^2$:}
 &\le 100L^4 \widetilde \lambda^2 \cdot \left(\frac{10^{5} L^3}{c_1c_2^2}\right)^{2t} \alpha \|x^t\|_2^2 +  2^{18} \frac{L^6}{c_1^2c_2^2} \widetilde \lambda^2 \alpha \|x^t\|_2^2 + 512L^2 \alpha \|\widetilde x^t\|_2^2\\
 &\le \left(\frac{10^4 L^3}{c_1c_2}\right)^2 \left(\frac{10^{5} L^3}{c_1c_2^2}\right)^{2t}\cdot \alpha \widetilde \lambda^2 \|x^t\|_2^2 + 512L^2 \alpha \|\widetilde x^t\|_2^2\\
 \intertext{Finally, we use that $\|\widetilde x^t\|_2^2 \le \frac 1{c_1^2} \|x^{t+1}\|_2^2$ and $\widetilde \lambda^2 \|x^t\|_2^2 \le \frac{1}{16c_2^2} \|x^{t+1}\|_2^2$:}
 &\le \left(\left(\frac{10^4 L^3}{c_1c_2}\right)^2\cdot \frac{1}{c_2^2} + \frac{512L^2}{c_1^2}\right) \left(\frac{10^{5} L^3}{c_1c_2^2}\right)^{2t} \alpha \|x^{t+1}\|_2^2\\
 &\le \left(\frac{10^{5} L^3}{c_1c_2^2}\right)^{2(t+1)}\alpha \|x^{t+1}\|_2^2
\end{align*}
as desired.
\end{proof}

\subsection{Robust AMP for controlled iterations}

In this section, we will prove the following theorem.

\restatetheorem{thm:amp-controlled}

At a high level, it turns out that AMP on controlled functions is natively robust to principal minor corruptions, outside of the initialization.

One way to reason about this intuitively is to restrict to the case when $V$ is Rademacher, and $f_t(x) = \tanh(x)$. 
Then, consider doing one iteration of AMP on $X$ versus $Y$, starting from the same initialization. 
Even if $E$ has large norm, the fact that $\tanh$ is bounded implies that after applying it, the total error is bounded by $\eps$.

\begin{proof}[Proof of Theorem~\ref{thm:amp-controlled}]
	We may drop the $c_2 \cdot \frac 1n \|x^t\|_2^2$ term by the definition of a controlled iteration; it suffices to show that
	\begin{equation}\frac 1n\|y^{t} - x^{t}\|_2^2 \le (10^5 B^5 c_1)^{2t} \max(\eps,\eta) \log^{2t} \frac 1\eps \triangleq \delta_t^2  \label{eq:controlled-induction}\end{equation}
	for all $1\le t\le T$. 
	We will require the following claim, which we prove at the end of this section. 
	\begin{claim}[Controlled denoisers are not too expanding]
	\label{clm:controlled-contracting}
		Suppose that $\widetilde y$ is any vector such that $\frac 1n \|\widetilde y - \widetilde x^t\|_2^2 \le (\beta \lambda)^2$. 
		Then, $\frac 1n \| f_t(\widetilde y) - x^{t+1}\|_2^2 \le 10^8(B^8c_1^2) \beta^2 \log^2 \frac 1\eps$. 
	\end{claim}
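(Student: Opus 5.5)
We follow the case analysis sketched in the technical overview, splitting on whether $\lambda$ is small or large compared with $B\sqrt{\log \frac 1\eps}$. Write $d = \widetilde y - \widetilde x^t$, so $\frac 1n\|d\|_2^2 \le (\beta\lambda)^2$.

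\textbf{Small $\lambda$.} Suppose $\lambda \le 10^2 B\sqrt{\log\frac1\eps}$ (this also covers $\lambda < 4$, where the local Lipschitz bound is unavailable). The global Lipschitz constant $B^2\lambda$ gives
\[\tfrac1n\|f_t(\widetilde y)-x^{t+1}\|_2^2 \le B^4\lambda^2\cdot \beta^2\lambda^2 \le 10^8 B^8\beta^2\log^2\tfrac1\eps,\]
which is within the claimed bound since $c_1\ge 1$.

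\textbf{Large $\lambda$.} Now assume $\lambda > 10^2 B\sqrt{\log\frac1\eps}$, so $\lambda\ge 4$ and $\mu_t\ge \lambda/2$. Partition $[n]$ into three sets.
\begin{itemize}
\item $R_1$: coordinates with $|\widetilde x^t_i - \mu_t v_i| > \frac{\mu_t}{10}$.
\item $R_2$: coordinates outside $R_1$ with $|d_i| > \frac{\mu_t}{10}$.
\item The remaining coordinates.
\end{itemize}
On $R_1\cup R_2$ we use boundedness, $|f_t(\cdot)-f_t(\cdot)|\le 2B$, so each such coordinate contributes at most $4B^2$. On the remaining coordinates, both $\widetilde x^t_i$ and $\widetilde y_i$ lie within $\frac{\mu_t}{5}$ of $\mu_t v_i$, i.e.\ in $I_{v_i}$. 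There the local Lipschitz bound gives a contribution of at most $\frac{c_1^2}{\lambda^2}\|d\|_2^2 \le c_1^2\beta^2 n$.

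The set $R_2$ is handled by Markov's inequality:
\[|R_2| \le \frac{\|d\|_2^2}{(\mu_t/10)^2} \le 400\beta^2 n.\]
Its contribution is therefore at most $1600B^2\beta^2 n$, which is acceptable because $B\ge 1$.

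\textbf{The set $R_1$.} We apply state evolution (Definition~\ref{def:state-evolution}) with $\psi$ a PL(2) smoothing of the indicator $\ind[|x-\mu_t v|>\mu_t/10]$. This gives $|R_1|/n \to \Pr[\sigma_t|Z|>\mu_t/10]\le 2\exp(-\mu_t^2/(200\sigma_t^2))$. Since $\sigma_t\le B$ and $\mu_t\ge\lambda/2 > 50B\sqrt{\log\frac1\eps}$, the exponent is at least $12\log\frac1\eps$. So $|R_1|\le \eps n$ with high probability, up to a $o_n(1)$ slack.

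\textbf{Main obstacle.} The issue is converting this $\eps n$-size exceptional set into something proportional to $\beta^2$, since the claim has no additive $\eps$ term. I would try two things.
\begin{itemize}
\item Note that the claim is only invoked with $\beta^2 \gtrsim \max(\eps,\eta)$, because the induction error $\delta_t^2$ already exceeds $\eps$. In that case the bound $4B^2\eps n$ is absorbed into $10^8B^8c_1^2\beta^2\log^2\frac1\eps$.
\item Otherwise, restrict to $\widetilde x^t$-coordinates in $R_1$ and bound their difference by the global Lipschitz constant times $|d_i|$. This again needs the tail estimate, so I would expect to state the claim with the implicit assumption $\beta^2\ge\eps$.
\end{itemize}
Summing the three contributions then gives the claimed $10^8 B^8 c_1^2\beta^2\log^2\frac1\eps$.
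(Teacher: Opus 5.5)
Your proposal is correct and is essentially the paper's argument. It uses the same case split on $\lambda$ against $B\sqrt{\log\frac1\eps}$, with the global Lipschitz bound in the small case; in the large case it pays boundedness ($2B$) on a state-evolution exceptional set and on a Markov set of large $|d_i|$, and uses the local Lipschitz constant $c_1/\lambda$ on the remaining coordinates. The paper likewise needs $\beta^2\ge\eps$ to absorb the exceptional-set term (it writes ``where we used that $\beta^2 \ge \eps$''), and this holds at every invocation in the induction, exactly as you observed.
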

	
	Since $\frac 1n\|\widetilde x^0\|_2^2 = \lambda^2$ by assumption, this claim implies that $\frac 1n \|y^1 - x^1\|_2^2 \le 10^8 (B^8 c_1^2) \max(\eps,\eta) \log^2 \frac 1\eps$.
	The rest of the proof is by induction on $t$, so suppose that we have proven~\pref{eq:controlled-induction} at this particular $t$.
	
	Let us assume that errors only increase: that is $\|y^t - x^t\|_2^2 \ge \|y^{t-1} - x^{t-1}\|_2^2$, and recall that $S\triangleq S^\ast$ is the collection of coordinates on which $E$ is supported.
	
	Then, we can write
	\begin{align*}
	  \frac 1n \|y^{t+1} - x^{t+1}\|_2^2 &= \frac 1n \|y^{t+1}_S - x^{t+1}_S\|_2^2 + \frac 1n \|y^{t+1}_{\overline S} - x^{t+1}_{\overline S}\|_2^2 \\
	  &= \frac 1n \sum_{i\in S} (f_t(\widetilde y^t_i) - f_t(\widetilde x^t_i))^2 + \frac 1n \|y^{t+1}_{\overline S} - x^{t+1}_{\overline S}\|_2^2\\
	  &\le 4B^2 \eps + \frac 1n \|y^{t+1}_{\overline S} - x^{t+1}_{\overline S}\|_2^2 \stepcounter{equation}\tag{\theequation}\label{eq:split-s}
	\end{align*}
	since $f_t$ is bounded by $B$ and $|S|\le \eps n$. 
	To bound this latter term, let us look directly at $\widetilde y$ and $\widetilde x$.
	Since $Y_{\overline S, [n]} = X_{\overline S, [n]}$, by definition of the AMP iterates it follows that
	\begin{align*}
		\frac 1n \|\widetilde y^t_{\overline S} - \widetilde x^t_{\overline S}\|_2^2 &= \frac 1n \left\|(Xy^t - B_{t-1}y^{t-1})_{\overline S} - (Xx^t - B_{t-1} x^{t-1})_{\overline S}\right\|_2^2\\
		&\le \frac 1n \left\|(Xy^t - B_{t-1}y^{t-1}) - (Xx^t - B_{t-1} x^{t-1})\right\|_2^2\\
		&\le 2\left(\frac 1n \|X(y^t - x^t)\|_2^2 + \frac 1n \|B_{t-1}(y^{t-1} - x^{t-1})\|_2^2\right)\\
		&\le 2(\|X\|_{\op}^2 + B_{t-1}^2) \cdot \frac 1n \|y^t - x^t\|_2^2\\
		&\le 2(\widetilde \lambda^2 + B^4\lambda^2) \cdot \delta_t^2\\
		&\le (2B^2 \lambda \delta_t)^2.
	\end{align*}
	We next define $\widetilde y$ to equal $\widetilde y^t$ on $\overline S$ and to equal $\widetilde x^t$ on $S$.
	Then, by Claim~\ref{clm:controlled-contracting} we see that
	\[\frac 1n\|y^{t+1}_{\overline S} - x^{t+1}_{\overline S}\|_2^2 = \frac 1n\|f_t(\widetilde y) - f_t(\widetilde x^t)\|_2^2 \le 10^8 (B^8 c_1^2) (2B^2 \delta_t)^2 \log^2 \frac 1\eps = 4\cdot 10^5 (B^{10} c_1^2) \delta_t^2\log^2 \frac 1\eps.\]
	Plugging this in to~\pref{eq:split-s}, it follows that
	\begin{align*}
	\frac 1n \|y^{t+1} - x^{t+1}\|_2^2 &\le 4B^2 \eps + \frac 1n \|y^{t+1}_{\overline S} - x^{t+1}_{\overline S}\|_2^2\\
	& \le 4B^2\eps + 4\cdot 10^8 (B^{10} c_1^2) \delta_t^2\log^2 \frac 1\eps\\
	&\le 8\cdot 10^8 (B^{10} c_1^2) \delta_t^2 \log^2 \frac 1\eps\\
	&\le \delta_{t+1}^2	
	\end{align*}
	which completes the induction.
\end{proof}

We finish by proving the claim. 
	\begin{proof}[Proof of Claim~\ref{clm:controlled-contracting}]
	We case on whether $\lambda > 40B\sqrt {\log \frac 1\eps}$ or not; recall that $\sigma_t \le B$.
	If $\lambda \le 40B\sqrt {\log \frac 1\eps}$, then we have immediately proved the claim:
	\[\frac 1n \|f_t(\widetilde y) - f_t(\widetilde x^{t})\|_2^2 \le (\lambda B^2)^2 \cdot \frac 1n \|\widetilde y - \widetilde x^t\|_2^2 \le \lambda^4 B^4 \beta^2 \le 40^4B^8 \beta^2  \log^2 \tfrac 1\eps. \]
	
	The brunt of this proof is in the case when $\lambda$ is large. 
	Indeed, let $F_i$ be the event that $|\widetilde x^t_i - \mu_t v_i| > 2\sigma_t\sqrt{\log \frac 1\eps}$. 
	Then, we know by State Evolution that
	\[\frac 1n \sum_{i=1}^{n} \ind[F_i] \rightarrow \Pr\left[(\mu_t V +\sigma_t Z - \mu_t V)^2 > 4\sigma_t^2\log \tfrac 1\eps\right] \le \Pr\left[|Z| \ge 2\sqrt{\log \tfrac 1\eps}\right] \le 2\eps.\]
	Further, let $S^\dagger = \{i\in [n] : (\widetilde y_i - \widetilde x^t_i)^2 > (\lambda/20)^2\}$, and note that $\frac 1n |S^\dagger| \le (20\beta)^2$ by averaging.
		Putting these together, it follows that
	\begin{align*}
		\frac 1n \|f_t(\widetilde y) - x^{t+1}\|_2^2 &= \frac 1n \sum_{i=1}^{n} (f_t(\widetilde y_i) - f_t(\widetilde x^t_i))^2 \ind[F_i \lor i\in S^\dagger] + \frac 1n \sum_{i=1}^{n} (f_t(\widetilde y_i) - f_t(\widetilde x^t_i))^2 \ind[\overline{F_i} \land i\notin S^\dagger]\\
		&\le (2B)^2 \cdot \frac 1n(|S^\dagger| + \sum_{i=1}^{n} \ind[F_i]) + \frac 1n \sum_{i=1}^{n} (f_t(\widetilde y_i) - f_t(\widetilde x^t_i))^2 \ind[\overline{F_i} \land i\notin S^\dagger]\\
		&\le 4B^2 ((20\beta)^2 + 2\eps) + \frac 1n \sum_{i=1}^{n} (f_t(\widetilde y_i) - f_t(\widetilde x^t_i))^2 \ind[\overline{F_i} \land i\notin S^\dagger]\\
		&\le 50^2 B^2 \beta^2 + \frac 1n \sum_{i=1}^{n} (f_t(\widetilde y_i) - f_t(\widetilde x^t_i))^2 \ind[\overline{F_i} \land i\notin S^\dagger]
	\end{align*}
	where we used that $\beta^2 \ge \eps$.
	
	It remains to handle this latter sum. 
	Here, for each $i$, note that by the triangle inequality we know that $|y_i - \mu_t v_i| \le 2\sigma_t \sqrt{\log \frac 1\eps} + \frac \lambda{20} \le \frac \lambda{10} \le \frac {\mu_t}{5}$
	since $\mu_t \ge \frac \lambda 2$ when $\lambda \ge 10B$.

	Therefore, since $\widetilde x^t_i$ satisfies the same guarantees and recalling that the Local Lipschitz constant is at most $\frac{c_1}{\lambda}$, it follows that
	\[\frac 1n \sum_{i=1}^{n} (f_t(\widetilde y_i) - f_t(\widetilde x^t_i))^2 \ind[\overline{F_i} \land i\notin S^\dagger] \le \left(\frac{c_1}{\lambda}\right)^2 \frac 1n \|\widetilde y - \widetilde x^t\|_2^2 \le \left(\frac{c_1}{\lambda}\right)^2 \cdot (\beta\lambda)^2 \le c_1^2 \beta^2.\]
	Putting these two facts together, we have shown that
	\[\frac 1n \|f_t(\widetilde y) - x^{t+1}\|_2^2 \le 50B^2 \beta^2 + c_1^2 \beta^2 \le 40^4 (B^8 c_1^2) \beta^2 \log^2 \frac 1\eps\]
	as desired.\footnote{When $\lambda$ is this large, this analysis implies that we can derive an inductive bound without the extra $\log \frac 1\eps$ factors.}
	\end{proof}

\bibliographystyle{alpha}
\bibliography{main}

 \appendix

 \addcontentsline{toc}{section}{Appendices}
 
 \section{Characteristics of AMP iterates and Spiked Matrices} 

\subsection{Classification of AMP Iterations}\label{sec:appendix-state-evolution}

In this subsection, we prove~\pref{prop:bayes-controlled} and~\pref{prop:amp-classification}.

\restateprop{prop:bayes-controlled}

\begin{remark}
	If we define $\Delta_i = \min_{j\neq i} |v_i - v_j|$ while keeping $\min\Delta_i \ge 1$, we can achieve a slightly stronger bound than the one obtained here: in particular, we will be $\left(B, 28B^2 \cdot \max\frac{1}{p_i \Delta_i^4}, 1/\E[V]^2\right)$-controlled. 
	This can be useful if outlier events are rare, while non-outlier events are fairly clustered together but of non-negligible probability.
\end{remark}

\begin{proof}[Proof of Proposition~\ref{prop:bayes-controlled}]
	We recall that in Bayes-AMP, denoisers are chosen such that $f_t(y) = \E[V \mid \mu_t V + \sigma_t Z = y]$. 
	As a consequence, note that $\mu_{t+1} = \lambda \E[V f_t(\mu_t V + \sigma_t Z)]= \lambda \E[f_t(\mu_t V + \sigma_t Z)^2] = \lambda \sigma_{t+1}^2$.
	
	Certainly, $|f_t(y)| \le B$ (it is never optimal to exceed the support of $V$), which gives the first condition.
	Furthermore, we may compute $f_t'(y) = \frac{\mu_t}{\sigma_t^2} \Var(V\mid \mu_t V + \sigma_t Z = y)$. 
	By Popoviciu's Inequality, this variance is bounded by $B^2$, so $|f_t'(y)|\le \lambda B^2$.
	
		Now, we show that $\mu_t \ge \frac \lambda 2$ when $\lambda \ge 4$; recall that $\mu_0 = \sqrt{\lambda^2 - 1} \ge \frac \lambda 2$ for a spectral initialization.
	We claim that this continues to hold inductively.
	Begin by noting that $\E[(\mu_t V + \sigma_t Z)f_t(\mu_t V + \sigma_t Z)] = \E[\mu_t V^2] = \mu_t$. 
	Thus, by Cauchy Schwarz,
	\[\mu_t^2 = \E[(\mu_t V + \sigma_t Z)f_t(\mu_t V + \sigma_t Z)]^2 \le \E[(\mu_t V + \sigma_t Z)^2] \sigma_{t+1}^2 = (\mu_t^2 + \sigma_t^2) \sigma_{t+1}^2\]
	Note that $\sigma_0^2 = 1$, and for $t\ge 1$, $\sigma_t^2 = \frac{\mu_t}{\lambda}$ by optimality of $f_t$.
	Then, we can write
	\[\sigma_{t+1}^2 \ge \frac{\mu_t^2}{\mu_t^2 + \sigma_t^2} \ge \frac{\mu_t}{\mu_t + 2} \ge \frac 12.\]
	We finish by using once more that $\mu_{t+1} = \lambda \sigma_{t+1}^2$.\\
	
	Next, we prove the local Lipschitz condition.
	For a given $v_i\in \supp(V)$, let us rewrite our interval as $I = \mu_t [v_i - \frac 15, v_i + \frac 15]$, and fix some $s\in [v_i - \frac 15, v_i + \frac 15]$.
	Suppose that we prove that
	\[\Pr[V \neq v_i \mid \mu_t V + \sigma_t Z = \mu_t s] \le \frac{1}{p_i} \exp\left(-\frac{3\mu_t \lambda \Delta^2}{10}\right).\]
	If this is true, then it follows that
	\begin{align*}
		f_t'(\mu_t s) &= \lambda \Var(V\mid \mu_t V + \sigma_t Z = \mu_t s)\\
		&\le \lambda \E[(V - v_i)^2 \mid \mu_t V + \sigma_t Z = \mu_t s] \\
		&\le 4\lambda B^2 \Pr[V \neq v_i \mid \mu_t V + \sigma_t Z = \mu_t s] \\
		&\le \frac{4\lambda B^2}{p_i} \exp\left(-\frac{3\mu_t \lambda\Delta^2}{10}\right)
	\end{align*}
	Let $g(\lambda) = \lambda^2 \exp(-C \lambda)$. 
	We see that $g'(\lambda) = (2\lambda - C\lambda^2) \exp(-C\lambda)$, so $g$ is maximized at $\lambda = \frac{2}{C}$, where it attains value $\left(\frac{2}{Ce}\right)^2$.
	Plugging back in the appropriate value of $C$, it follows that
	\[f_t'(\mu_t s)\le \frac{4B^2}{p_i \lambda}\cdot \left(\frac{20}{3e\mu_t \Delta^2}\right)^2 \le \frac 1\lambda \cdot \frac{28B^2}{\Delta^4 p_i}\]
	as desired (we also used that $\mu_t \ge \frac \lambda 2\ge 1$).
	
	It suffices to prove the probability statement.
	By Bayes' Theorem, we can express
	\[\frac{\Pr[V = v_j \mid \mu_t V + \sigma_t Z = \mu_t s]}{\Pr[V = v_i \mid \mu_t V + \sigma_t Z = \mu_t s]} = \frac{p_j\Pr[\mu_t v_j + \sigma_t Z = \mu_t s]}{p_i\Pr[\mu_t v_i + \sigma_t Z = \mu_t s]} = \frac{p_j}{p_i} \exp\left(-\frac{\mu_t^2}{2\sigma_t^2} \left((s - v_j)^2 - (s - v_i)^2\right)\right).\]
	Since $|s - v_i| \le \frac 15$, it follows that $(s - v_j)^2 - (s - v_i)^2 \ge (\Delta - \frac 15)^2 - (\frac 15)^2 = \Delta(\Delta - 0.4) \ge \frac{3\Delta^2}{5}$. Furthermore, we may simplify $\frac{\mu_t^2}{\sigma_t^2} = \mu_t \lambda$. On net, it follows that
	\[\Pr[V\neq v_i \mid \mu_t V + \sigma_t Z = \mu_t S] = \sum_{j\neq i} \Pr[V = v_j \mid \mu_t V + \sigma_t Z = \mu_t s] \le \frac{1 - p_i}{p_i}\exp\left(-\frac{3\mu_t\lambda\Delta^2}{10}\right)\]
	as desired.\\
	
	Finally, we prove nontriviality of iterates.
	Towards this, recall that $\frac 1n \|x^T\|_2^2 \rightarrow \E[f_{T-1}(\mu_{T-1} V + \sigma_{T-1} Z)^2]$ by State Evolution.
	By Jensen's inequality, it then follows that
	\[\E[f_{T-1}(\mu_{T-1} V + \sigma_{T-1} Z)^2] \ge \E[f_{T-1}(\mu_{T - 1} V + \sigma_{T-1} Z)]^2 = \E[V]^2\]
	where we used the optimality of $f_{T-1}$.
\end{proof}

\restateprop{prop:amp-classification}

\begin{proof}
	First, we show that the Nonnegative PCA iteration is $(1,1/\sqrt{2},\E[V] \sqrt{2/\pi})$-expanding (the first of these immediately follows from the Lipschitz constant of $\ReLu$). 
	We recall that $\frac 1n \|x^t\|_2^2 \rightarrow \sigma_t^2$.
	Further, recall that by convention, $\mu_0 = \lambda\E[V]$ and $\sigma_0 = 1$.
	
	First, note that
	\[\sigma_{t+1}^2 = \E[\ReLu(\mu_t V + \sigma_t Z)^2]\ge \frac 12 \E[\ReLu(\mu_t V + \sigma_t Z)^2 \mid Z\ge 0] \ge \frac 12 (\mu_t^2 + \sigma_t^2)\]
	which gives our value of $c_1$.
	Now, define $\rho_t = \frac{\mu_t}{\sigma_t}$ and note that $c_2^2 = \min \frac{\sigma_{t+1}^2}{\sigma_{t}^2} \ge \frac 12 \min_t \rho_t^2$ by the above. Note that
	\[\rho_{t+1}^2 = \frac{\mu_{t+1}^2}{\sigma_{t+1}^2} = \lambda^2 \frac{\E[V\ReLu(\mu_t V + \sigma_t Z)]^2}{\E[\ReLu(\mu_t V + \sigma_t Z)^2]} = \lambda^2 \frac{\E[V\ReLu(\rho_t V + Z)]^2}{\E[\ReLu(\rho_t V + Z)^2]}.\]
	We prove that this latter quantity is increasing in $\rho_t$. 
	Indeed, letting the numerator be $a^2(\rho)$ and the denominator $b(\rho)$, this is equivalent to showing that $2a'(\rho) b(\rho) \ge a(\rho) b'(\rho)$.
	We see that
	\[a'(\rho) = \E_V [V \cdot \E_Z[\ReLu'(\rho V + Z)]] = \E [V^2 \ind[\rho V + Z \ge 0]]\]
	and
	\[b'(\rho) = 2\E[V\ReLu(\rho V + Z)\cdot \ind[\rho V + Z\ge 0]].\]
	Thus, $2a(\rho) = b'(\rho)$ and so by Cauchy Schwarz we may now write
	\begin{align*}
		a(\rho) b'(\rho) &= 2\E[V\ind[\rho V + Z\ge 0] \ReLu(\rho V + Z)]^2\\
		&\le 2 \E[V^2 \ind[\rho V + Z\ge 0]] \cdot \E[\ReLu(\rho V + Z)^2]\\
		&= 2a'(\rho)\cdot b(\rho)
	\end{align*}
	as desired.
	Now, consider taking $\rho = 0$. Then, it follows that
	\[\rho_{t+1}^2 \ge \lambda^2 \frac{\E[V \ReLu(Z)]^2}{\E[\ReLu(Z)]^2} = \lambda^2 \E[V]^2 \cdot \frac{2}{\pi}\]
	which completes the proof of nnPCA being expanding.\\
	
	Next, let us recall that $\tanh(\lambda x)$ is the Bayes-AMP denoiser for $\Z_2$ synchronization, after the first iteration. 
	Then, applying~\pref{prop:bayes-controlled} and taking $p_i = \frac 12$, $B = 1$, and $\Delta = 2$ we see that we are $(1,\frac 72,\infty)$-controlled.
	
	To improve this latter guarantee, let us examine the structure of the denoiser $\tanh(\lambda x)$. 
	Recall that $\frac 1n \|x^{t+1}\|_2^2 \rightarrow \E[f_{t}(\mu_{t} V + \sigma_{t} Z)^2]$.
	
	We claim that for any $x\ge 0$, 
	\[\E[\tanh(x + Z)^2]\ge \E[\tanh(Z)^2]\]
	whenever $Z\sim N(0, \sigma^2)$ for some $\sigma > 0$. If so, it follows by symmetry that
	\[\E[f_{t}(\mu_{t} V + \sigma_{t} Z)^2] \ge \E[f_{t}(\sigma_{t} Z)^2] = \E[\tanh(\lambda \sigma_{t} Z)^2]. \]
	To prove this, let $g(x) = \E[\tanh(x + Z)^2]$ and note that $(\tanh(u)^2)'$ is an odd function.
	Then, we see that
	\[g'(x) = \int_{-\infty}^{\infty} \frac{\mathrm d}{\mathrm dx}\tanh(x + z)^2 \cdot \varphi_\sigma(z)\,\mathrm dz = \int_{-\infty}^{\infty} (\tanh(u)^2)' \varphi_\sigma (u -x)\,\mathrm du = \int_0^{\infty} (\tanh(u)^2)' \left(\varphi_\sigma (u - x) - \varphi_\sigma (u + x)\right)\,\mathrm du\]
	where $\varphi_\sigma$ is the PDF of a $N(0,\sigma^2)$.
	For positive $u$ and $x$, this quantity is non-decreasing (looking pointwise at the Gaussian PDF), which implies that $g(x) \ge g(0)$.

	Now, certainly, since $\tanh(\lambda \sigma_t Z)^2 \ge \tanh(\sigma_t Z)^2$, we may take $\lambda = 1$ to derive a bound, and we may assume that $\sigma_t \le 1$.
	
	Next, we see that $s^2 \tanh(x)^2 \le \tanh(s x)^2$ pointwise when $0 \le s \le 1$, since $\tanh$ is concave.
	Thus, it follows that
	\[\sigma_{t+1}^2 = \E[f_{t}(\mu_{t} V + \sigma_{t} Z)^2] \ge \sigma_t^2 \E[\tanh(Z)^2]\ge \frac{\sigma_t^2}{5}\]
	by analytically computing the latter expectation. 
	Since $\sigma_0 = 1$, it follows by induction that $\sigma_T^2 \ge \frac 1{5^T}$, yielding $(1,4,5^T)$-controlledness.\\

	For Bayes-optimal Sparse PCA, we have $B = \Delta = \frac{1}{\sqrt{\delta}}$, $\max \frac{1}{p_i} = B^2$, and $\E[V] = \sqrt \delta$. Thus, the iteration is $(\frac{1}{\sqrt{\delta}}, 28, \frac 1\delta)$-controlled.
\end{proof}

\subsection{Delocalization of AMP Iterations}\label{sec:amp-deloc}

We begin by proving~\pref{prop:deloc-amp}.

\restateprop{prop:deloc-amp}

\begin{proof}
	Let $\mu_t$ and $\sigma_t$ denote the signal strength and noise components, respectively. 
	We begin by proving the $\widetilde x^t$ statement.
	Recall by state evolution that
	\[\plim_{n\rightarrow\infty} \frac 1n \|\widetilde x^{t}\|_2^2 = \E[(\mu_t V + \sigma_t Z)^2] = \mu_t^2 + \sigma_t^2.\]
	Also note that for any $\theta > 0$, the test function $\psi(x) = x^2 \ind[x^2 \ge \theta]$ can be approximated arbitrarily well by $\PL(2)$ functions, which implies that state evolution holds for it (in particular, we can add a small linear component on $[\theta - \delta, \theta]$ for vanishing $\delta$).
	Let $F_i$ be the event that $|\widetilde x^{t}_i|$ is in the top $\eps n$ coordinates of $\widetilde x^{t}$ (breaking ties arbitrarily), and define $W = \mu_t V + \sigma_t Z$. 
	Then, we may write
	\begin{align*}
	 \frac 1n\max_{|S|=\eps n} \sum_{i\in S} (\widetilde x^{t}_i)^2 &= \frac 1n\sum_{i=1}^{n} (\widetilde x^{t}_i)^2 \ind[F_i]\\
	 &= \frac 1n\sum_{i=1}^{n} (\widetilde x^{t}_i)^2\ind[F_i \land (\widetilde x^{t}_i)^2 < \theta] + \frac 1n\sum_{i=1}^{n} (\widetilde x^{t}_i)^2\ind[F_i \land (\widetilde x^{t}_i)^2 \ge \theta]\\
	 &\le \frac 1n\sum_{i=1}^{n} \theta \ind[F_i] + \frac 1n\sum_{i=1}^{n} (\widetilde x^{t}_i)^2 \ind[(\widetilde x^{t}_i)^2 \ge \theta]\\
	 &\rightarrow  \eps \theta + \E[W^2 \ind[W^2 \ge \theta]].
	\end{align*}
	Next, note that $W - \mu_t \E[V]$ is centered and subgaussian with parameter $\sigma \sqrt{\mu_t^2 + \sigma_t^2}$. 
	To bound the expectation, we may directly evaluate
	\[\E[W^2 \ind[W^2 \ge \theta]] = \int_0^{\infty}\Pr[W^2 \ind[W^2 \ge \theta] \ge t]\,\mathrm dt = \int_0^{\theta} \Pr[W^2 \ge \theta]\,\mathrm dt + \int_{\theta}^{\infty} \Pr[W^2 \ge t]\,\mathrm dt.\]
	We will choose $\theta = 8\sigma^2 (\mu_t^2 + \sigma_t^2) \cdot \log \frac 1\eps$, and notice that $\sqrt \theta - \mu_t \E[V]\ge \frac{\sqrt\theta}{2}$ (note that $|\E[V]| \le 1$ by Cauchy Schwarz). This implies that for any $t\ge \theta$, by the subgaussianity of $W$, we know that
	\[\Pr[W^2 \ge t] \le 2\exp\left(-\frac{(\sqrt t - \mu_t \E[V])^2}{2\sigma^2 (\mu_t^2 + \sigma_t^2)}\right) \le 2\exp\left(-\frac{t}{8\sigma^2(\mu_t^2 + \sigma_t^2)}\right).\]
	For simplicity, we let $\gamma = 8\sigma^2 (\mu_t^2 + \sigma_t^2)$, and so $\theta = \gamma \log \frac 1\eps$.
	Plugging this back in to the integral, it follows that
	\[\E[W^2 \ind[W^2 \ge \theta]] \le 2\theta \exp\left(-\frac{\theta}{\gamma}\right) + 2\gamma\cdot \left(\left.-\exp\left(-\frac{t}{\gamma}\right)\right|_{t=\theta}^{\infty}\right) = 2(\gamma\log \frac 1\eps + \gamma)\exp\left(-\frac{\gamma \log \frac 1\eps}{\gamma}\right) \le 3\gamma \eps \log \frac 1\eps.\]
	Therefore, on aggregate, it follows that
	\[\frac 1n \max_{|S|=\eps n} \sum_{i\in S} (\widetilde x^{t}_i)^2 \le \gamma \eps \log \frac 1\eps + 3\gamma\eps \log \frac 1\eps = 32\sigma^2 (\mu_t^2 + \sigma_t^2)\eps \log \frac 1\eps = 32\sigma^2 \eps \log \frac 1\eps \cdot \left(\frac 1n \|\widetilde x^t\|_2^2\right).\]
	Next, note that for any set $S$, $\sum_{i\in S} (x^{t+1}_i)^2 \le L^2 \sum_{i\in S} (\widetilde x^t_i)^2$, and thus
	\[\frac 1n \max_{|S|=\eps n} \sum_{i\in S} (x^{t+1}_i)^2 \le L^2 \cdot \frac 1n \max_{|S|=\eps n} \sum_{i\in S} (\widetilde x^{t}_i)^2 = 32L^2 \sigma^2 \eps \log \frac 1\eps \cdot \left(\frac 1n \|\widetilde x^t\|_2^2\right).\]
	We finish by noting that $\|x^{t+1}\|_2^2 \ge c_1^2 \|\widetilde x^t\|_2^2$.
\end{proof}

Next, we prove that large enough principal minors of $X$ continue to enjoy properties of the top eigenvector such as delocalization. Recall that $\alpha \triangleq C\sigma^2 \eps \log \frac 1\eps$.

\begin{proposition}[Top eigenvector of principal minor is delocalized and close to $\varphi_1(X)$]\torestate{
\label{prop:deloc-top-eig}
	Suppose that $X$ is spiked Gaussian, and the signal distribution $V$ is $\sigma$-subgaussian. Then, with high probability over $X$, the following statements are true for \emph{all} principal minors $A$ of $X$ with at least $(1 - c\eps)n$ rows:
	\begin{itemize}
		\item $A$ has large quadratic form with $\varphi_1(X)$: $\varphi_1(X)^\top A \varphi_1(X) \ge \widetilde \lambda (1 - 3c\alpha)$.
		\item $A$ has large top eigenvalue: $\|A\|_{\op} \ge \widetilde \lambda(1 - 3c\alpha)$.
		\item The top eigenvector of $A$ is delocalized: $\deloc_{\eps}(\varphi_1(A)) \le 27\max(c, 1)\alpha \frac{\widetilde \lambda}{\widetilde \lambda - 2}$.
	\end{itemize}
}
\end{proposition}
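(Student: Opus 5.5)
The argument rests on one basic fact: $\varphi_1(X)$ is itself delocalized with high probability. By the BBP picture, $\varphi_1(X) = \sqrt{1-\lambda^{-2}}\,v/\|v\| + \sqrt{\lambda^{-2}}\,u$ up to vanishing error, where $u$ is (conditionally) a uniformly random unit vector orthogonal to $v$. Both pieces are delocalized.

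\begin{itemize}
\item For the spike piece, $v$ has i.i.d.\ $\sigma$-subgaussian entries, so the same truncation computation as in the proof of \pref{prop:deloc-amp} (with $\theta \asymp \sigma^2\log\frac1\eps$) gives $\deloc_\eps(v) \lesssim \sigma^2\eps\log\frac1\eps\,\|v\|_2^2$.
\item For the noise piece, $u$ has approximately Gaussian entries of variance $1/n$, so the largest $\eps n$ of them carry mass $O(\eps\log\frac1\eps)$.
\end{itemize}

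By the Almost-Triangle Inequality for $\deloc_\eps$, this gives $\deloc_{c\eps}(\varphi_1(X)) \le c\alpha$ after absorbing constants into $C$ in $\alpha$. The bound must hold for all subsets $R$ with $|R|\le c\eps n$ at once, but that is automatic, since we bound the top-$c\eps n$ mass directly.

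\textbf{First two bullets.} Fix $x=\varphi_1(X)$ and let $A = X_{T,T}$ with $R=\overline T$ and $|R|\le c\eps n$. Write
\[x^\top A x = x_T^\top X x_T = x^\top X x - 2x_R^\top X x + x_R^\top X x_R .\]
Since $Xx = \widetilde\lambda x$, the middle term equals $2\widetilde\lambda\|x_R\|_2^2$. The last term is at least $-2\|x_R\|_2^2$, because the bulk of $X$ lies above $-2-o(1)$; more crudely it is at least $-\widetilde\lambda\|x_R\|_2^2$. Hence
\[x^\top Ax \ge \widetilde\lambda(1-3\|x_R\|_2^2) \ge \widetilde\lambda(1-3c\alpha).\]
The second bullet follows by the variational characterization, since $\|x_T\|\le 1$ and so $\|A\|_{\op}\ge x^\top Ax$.

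\textbf{Third bullet.} Let $w=\varphi_1(A)$, extended by zero to $\R^n$. Then $w^\top X w = \lambda_1(A) \ge \widetilde\lambda(1-3c\alpha)$. Expand $w=\sum_i\beta_i\varphi_i(X)$ and use the eigenvalue gap $\lambda_2(X)\le 2+o(1)$, exactly as in the proof of \pref{thm:robust-pca}. This gives
\[1-\beta_1^2 \le 3c\alpha\,\frac{\widetilde\lambda}{\widetilde\lambda-2}.\]
Now write $w=\beta_1 x + r$ with $\|r\|_2^2=1-\beta_1^2$. For any set $U$ with $|U|\le\eps n$,
\[\|w_U\|_2^2 \le 2\|x_U\|_2^2 + 2\|r\|_2^2 \le 2\alpha + 6c\alpha\,\frac{\widetilde\lambda}{\widetilde\lambda-2},\]
which is within the stated $27\max(c,1)\alpha\frac{\widetilde\lambda}{\widetilde\lambda-2}$, because $\frac{\widetilde\lambda}{\widetilde\lambda-2}\ge1$.

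\textbf{Main obstacle.} The delicate step is the base delocalization of $\varphi_1(X)$ uniformly over all sets. This needs a clean handle on the noise component $u$. I would first try to justify it by rotational invariance of the GOE conditioned on $v$: the eigenvector's component orthogonal to $v$ is Haar-distributed on that sphere. Concentration of order statistics of $n$ Gaussians then bounds its top-$\eps n$ mass by $O(\eps\log\frac1\eps)$ with high probability. The $o_n(1)$ eigenvalue fluctuations are absorbed using $\eps=\widetilde\omega(n^{-1/2})$, as elsewhere in the paper.
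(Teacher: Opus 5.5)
Your proof is correct. The first two bullets match the paper's argument; your expansion via $x_T = x - x_R$ is a tidier form of the paper's identity $x_T^\top X x_T = \widetilde\lambda(2\|x_T\|_2^2 - 1) + x_{\overline T}^\top X x_{\overline T}$. You depart from the paper in two places.

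For the base delocalization of $\varphi_1(X)$, the paper invokes \cite[Proposition 3.4]{FVRS22} to say that the spectral initialization obeys state evolution with $\mu_0 = \sqrt{1-\lambda^{-2}}$ and $\sigma_0 = \lambda^{-1}$. It then reuses the truncation argument of \pref{prop:deloc-amp}. You instead use invariance of the GOE under orthogonal maps fixing $v$ to write $\varphi_1(X) = a\,v/\|v\|_2 + \sqrt{1-a^2}\,u$ exactly, with $u$ Haar-uniform on the unit sphere of $v^\perp$. You then bound the two pieces by elementary concentration. Your route is self-contained and gives a direct high-probability bound rather than a limit-in-probability statement. It also does not need the BBP value of $a$, only $a^2 \le 1$. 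The paper's route avoids relying on exact rotational invariance of the GOE and keeps everything inside the state-evolution framework it uses for all other iterates.

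For the third bullet, the paper argues by contradiction. It bounds $\langle x,y\rangle^2 \le \bigl(\|x_S\|_2\|y_S\|_2 + \sqrt{(1-\|x_S\|_2^2)(1-\|y_S\|_2^2)}\bigr)^2$ and optimizes over the two norms, which requires $\alpha \le \frac16$ and produces the factor $27 = 9\cdot 3$. Your decomposition $w = \beta_1 x + r$ together with the Almost-Triangle Inequality is shorter. It also gives the sharper bound $2\alpha + 6c\alpha\frac{\widetilde\lambda}{\widetilde\lambda-2}$.
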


\begin{proof}
Let $x = \varphi_1(X)$ for ease of notation, and suppose that $T$ is the collection of rows contained in $A$.
	By~\cite[Proposition 3.4]{FVRS22}, we know that $x$ satisfies state evolution, with $\mu_0 = \sqrt{1 - \lambda^{-2}}$ and $\sigma_0 = \frac 1\lambda$ (or scaled versions of these, but delocalization is scale-invariant). 
	Thus, by~\pref{prop:deloc-amp}, it follows that $x$ is $(\alpha, \eps)$-delocalized.
	
	From here, recall that $Xx = \widetilde \lambda x$, which implies that $X_{T,T} x_T + X_{T, \overline T} x_{\overline T} = \widetilde \lambda x_T$ and thus $x_T^\top X x_T + x_T^\top X x_{\overline T} = \widetilde \lambda \|x_T\|_2^2$.
	
	Therefore, we may write 
	\[x_T^\top X x_T = x^\top X x - 2x_T^\top X x_{\overline T} - x_{\overline T}^{\top} X x_{\overline T} = x^\top X x - 2(\widetilde \lambda \|x_T\|_2^2  - x_T^\top X x_T) - x_{\overline T}^\top X x_{\overline T}.\]
	
	Rearranging to move $w_T^\top X w_T$ to the same side, we have that
	\[x_T^\top X x_T = 2\widetilde \lambda \|x_T\|_2^2 - x^\top X x + x_{\overline T}^{\top} X x_{\overline T} \ge \widetilde \lambda (2\|x_T\|_2^2 - 1 - c\alpha) \ge \widetilde \lambda (2(1 - c\alpha) - 1 - c\alpha) = \widetilde \lambda (1 - 3c\alpha).\]
	Since $x^\top A x = x_T^\top X x_T$, this completes the first claim.
	The second claim is trivial given the first, since $\|A\|_{\op} \ge x^\top A x$.

	For the third claim, let $\delta = 3c\alpha \frac{\widetilde \lambda}{\widetilde \lambda - 2}$ and $y = \varphi_1(A)$, so $y^\top X y \ge \widetilde \lambda (1 - 3c\alpha)$.
	Expanding $y$ in the $\varphi_i(X)$ basis as $y = \sum \beta_i \varphi_i(X)$, it follows that
	\[\widetilde \lambda (1 - 3c\alpha) \le y^\top X y = \sum \beta_i^2 \lambda_i(X) \le \beta_1^2 \widetilde \lambda + (1 - \beta_1^2)\cdot 2 = 2 + \beta_1^2 (\widetilde \lambda - 2).\]
	Rearranging, it follows that $\langle x, y\rangle^2 = \beta_1^2\ge 1 - 3c\alpha \frac{\widetilde \lambda}{\widetilde \lambda - 2} = 1 - \delta$. 
	
	Now, suppose for contradiction that $y$ was not $(9\delta,\eps)$-delocalized: then, there must exist some subset $S$ such that $\|y_S\|_2^2 > 9\delta$.
	For this subset, we can write
	\begin{align*}
	\langle x,y\rangle^2 &= \left(\langle x_S, y_S\rangle + \langle x_{\overline S}, y_{\overline S}\rangle \right)^2 \\
	&\le \left(\|x_S\|_2 \|y_S\|_2 + \sqrt{1 - \|x_S\|_2^2}\sqrt{1 - \|y_S\|_2^2}\right)^2
	\end{align*}
	Note that the equality condition of Cauchy Schwarz is achieved when $\|x_S\|_2 = \|y_S\|_2$, and therefore this value is maximized when $\|x_S\|_2$ and $\|y_S\|_2$ are as close as possible. 
	We can bound
	\begin{align*}
	\langle x,y\rangle^2 &< \left(\sqrt{9\alpha\delta} + \sqrt{(1 - \alpha)(1 - 9\delta)}\right)^2\\
	&=9\alpha \delta + (1 - \alpha)(1 - 9\delta) + 2\sqrt{9\alpha \delta(1 - \alpha)(1 - 9\delta)}\\
	& \le 1 -\alpha - 3\delta + 18\alpha \delta + 2\sqrt{9\alpha \delta}\\
	& = 1 + 6\alpha\delta - (3\sqrt \delta - \sqrt{\alpha})^2\\
	&\le 1 + 6\alpha \delta - 4\delta\\
	&\le 1 - 3\delta
	\end{align*}
	whenever $\alpha \le \frac 16$.
	Chaining these two inequalities then gives that $\langle x, y\rangle^2 < 1 - 3\delta < 1 - \delta \le \langle x,y\rangle^2$, which is the desired contradiction.
\end{proof}

This enables us to extend the above proposition to corruptions of $X$.

\restateprop{prop:deloc-eig}

\begin{proof}[Proof]
Let $x = \varphi_1(X)$ and write $\hat{Y} = X_{T,T} + E_{T,T}$.
Note that $\|E_{T,T}\|_{\op} \le \|\hat Y\|_{\op} + \|X_{T,T}\|_{\op} \le 6\widetilde \lambda + o_n(1)$.
By~\pref{prop:deloc-top-eig} and the fact that $\deloc_{\eps}(x) \le \alpha$, we see that
\[x^\top \hat{Y} x = x^\top X_{T,T} x + x^\top E_{T,T} x \ge \widetilde \lambda (1 - 3c\alpha) - \|E_{T,T}\|_{\op} \|x_{S^\ast\cap T}\|_2^2 \ge \widetilde \lambda (1 - 3c\alpha - 6\alpha)\]
as desired.	
\end{proof}

We also show the following toy bound on the delocalization of a $\sigma$-subgaussian vector.

\begin{proposition}
\label{prop:toy-deloc}
	Suppose that $V$ is a $\sigma$-subgaussian, mean $0$, random variable. Then, If $v$ is such that $v_i \sim V$ i.i.d, it follows that $v$ is $(C\sigma^2 \eps \log \frac 1\eps,\eps)$-delocalized with high probability.
\end{proposition}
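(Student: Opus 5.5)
The plan is a truncation argument, mirroring the proof of \pref{prop:deloc-amp}, with concentration over the $n$ i.i.d.\ coordinates in place of state evolution. Fix a threshold $\theta = \gamma\log\frac 1\eps$ with $\gamma = C_0\sigma^2$ for a suitable absolute constant $C_0$. For any set $S$ with $|S|\le \eps n$ we split each coordinate according to whether $v_i^2<\theta$:
\[
\sum_{i\in S} v_i^2 \le \sum_{i\in S} v_i^2\ind[v_i^2<\theta] + \sum_{i\in S} v_i^2\ind[v_i^2\ge\theta] \le \eps n\,\theta + \sum_{i=1}^{n} v_i^2\ind[v_i^2\ge\theta].
\]
The first term is already $C_0\sigma^2\eps\log\frac1\eps\cdot n$, uniformly over all $S$. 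It therefore remains to show that the random sum $Q=\sum_{i=1}^n v_i^2\ind[v_i^2\ge\theta]$ is $O(\sigma^2\eps\log\frac1\eps)\,n$ with high probability. We also need $\|v\|_2^2\ge (1-o(1))n$, which follows from the law of large numbers, or from the paper's standing assumption, since $\E[V^2]=1$.

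To bound $\E[Q]/n=\E[V^2\ind[V^2\ge\theta]]$, we convert the moment definition of $\sigma$-subgaussianity into a tail bound $\Pr[|V|\ge s]\le 2\exp(-s^2/(c\sigma^2))$, via the standard Markov bound on $\E[|V|^k]$ with $k\approx s^2/\sigma^2$. Then we integrate exactly as in \pref{prop:deloc-amp}:
\[
\E[V^2\ind[V^2\ge\theta]] = \theta\Pr[V^2\ge\theta]+\int_\theta^\infty \Pr[V^2\ge t]\,dt \le 2(\theta+\gamma)e^{-\theta/\gamma}\le 3\gamma\eps\log\tfrac1\eps,
\]
taking $C_0\ge c$. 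The mean-zero hypothesis is used here so that no shift term $\E[V]$ appears.

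The main obstacle is concentration of $Q$ around its mean, since the summands are heavy-ish (squares of subgaussians, hence subexponential) and the target $\eps\log\frac1\eps\, n$ may be small. We would use a variance bound:
\[
\Var(Q)\le n\,\E[V^4\ind[V^2\ge\theta]]=O(n\sigma^4\eps\log^2\tfrac1\eps),
\]
by the same tail integration. Chebyshev then gives $Q\le 2\E[Q]$ with probability at least $1-O\bigl(1/(n\eps)\bigr)$. This is $1-o_n(1)$ under the standing assumption $\eps=\widetilde\omega(n^{-1/2})$. If a stronger probability is wanted, we would instead apply Bernstein's inequality for subexponential variables to the summands $v_i^2\ind[v_i^2\ge\theta]$.

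Combining the pieces, with high probability, for every $|S|\le\eps n$,
\[
\sum_{i\in S}v_i^2\le (C_0+6C_0)\sigma^2\eps\log\tfrac1\eps\cdot n\le C\sigma^2\eps\log\tfrac1\eps\,\|v\|_2^2,
\]
which is the claimed $(C\sigma^2\eps\log\frac1\eps,\eps)$-delocalization.
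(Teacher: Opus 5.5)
Your proof is correct, but it takes a different route from the paper's. The paper fixes a set $S$ with $|S|=\eps n$ and applies Bernstein's inequality to the $\eps n$ subexponential summands $v_i^2$. This gives $\Pr[\sum_{i\in S} v_i^2 \ge C\sigma^2\eps\log\frac1\eps\, n]\le \exp(-cC\eps n\log\frac1\eps)$. It then union bounds over the at most $\binom{n}{\eps n}\le\exp(\eps n\log\frac e\eps)$ sets, so the factor $\log\frac1\eps$ is exactly the price of the union bound's entropy.

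You instead use the deterministic inequality $\sum_{i\in S}v_i^2\le \eps n\theta+Q$. It handles every $S$ at once and reduces the problem to a single scalar sum. The $\log\frac1\eps$ then comes from truncating at the level where the tail mass is $\eps$, which is the same mechanism as the paper's proof of Proposition~\ref{prop:deloc-amp}.

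What the paper's route buys is a failure probability of $\exp(-\Omega(\eps n\log\frac1\eps))$. Your Chebyshev version only gives $1-O(1/(n\eps))$. That is still enough under the standing assumption $\eps=\widetilde\omega(n^{-1/2})$, and for the one place this proposition is used (bounding $\deloc_\eps(v)$ in the proof of Proposition~\ref{prop:deloc-off-corruption}). In exchange, you need no union bound, only the subgaussian tail and the first two moments of the truncated summands. You also handle the normalization $\|v\|_2^2\ge(1-o(1))n$ explicitly, which the paper's proof leaves implicit.

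One wording fix: Chebyshev does not give $Q\le 2\E[Q]$ in general, because nothing lower-bounds $\E[Q]$. What it gives, with deviation $a=3\gamma\eps\log\frac1\eps\, n$, is $Q\le \E[Q]+a\le 6\gamma\eps\log\frac1\eps\, n$ with probability at least $1-\Var(Q)/a^2=1-O(1/(n\eps))$. That is exactly the constant $6C_0$ you use in your final display, so the argument stands once the sentence is corrected.
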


\begin{proof}
	We directly use a union bound over all $\binom{n}{\eps n}$ possible subsets.
	Then, for any particular set $S$, we know by Bernstein's inequality (since $v_i^2$ is $2\sigma^2$-subexponential)
	\[\Pr\left[\sum_{i\in S} v_i^2 \ge C\sigma^2 \eps \log \frac 1\eps n\right ] \le \exp\left(-\frac{c(C\sigma^2 \eps \log \frac 1\eps n)}{2\sigma^2}\right) = \exp\left(-cCn\eps \log \frac 1\eps\right). \]
	Since $\binom{n}{\eps n} \le \exp(n\eps \log \frac e\eps)$, we can choose $C\ge \frac 2c$ to finish.
\end{proof}

With these done, we are now ready to prove~\pref{prop:deloc-off-corruption}.

\restateprop{prop:deloc-off-corruption}

\begin{proof}
Let $S$ be a collection of at most $\eps n$ (nonempty) rows and let $T$ be the remaining nonempty rows (note that $|T| \ge (1 - (c + 1)\eps)n$). We will assume that $|S| > 0$, since in fact~\pref{prop:deloc-top-eig} proves the claim when $|S| = 0$.

Let $\mu$ be the top eigenvalue of $B$.
By~\pref{prop:deloc-top-eig}, we know that $\mu \ge \lambda_1(B_{T,T}) \ge \widetilde \lambda (1 - (6 + 6c)\alpha)$.
We may write $Bw = \mu w$, which we can expand to $B_{T,T} w_T + B_{T,S} w_S = \mu w_T$. 
Next, write $u = \varphi_1(B_{T,T})$ and $w_T = \beta u + u^\perp$. 
Let $\Pi_{u^\perp} = I - uu^\top$. Then, we may write
\begin{align*}
	\Pi_{u^\perp}(B_{T,T} w_T + B_{T,S} w_S) &= \Pi_{u^\perp}\mu w_T & \implies \\
	\Pi_{u^\perp} B_{T,T} \Pi_{u^\perp} u^\perp + \Pi_{u^\perp} B_{T,S} w_S &= \mu \Pi_{u^\perp} w_T &\implies\\
	u^\perp &= \Pi_{u^\perp} (\mu I - B_{T,T})^{-1} \Pi_{u^\perp} B_{T,S} w_S.
\end{align*}
Let $P = \Pi_{u^\perp} (\mu I - B_{T,T})^{-1} \Pi_{u^\perp}$ for simplicity.
By eigenvalue interlacing, we know that $\lambda_2(B_{T,T}) \le \lambda_2(X) = 2 + o_n(1)$.
Thus, it follows that $\|P\|_{\op} \le \frac{1}{\mu - 2 - o_n(1)}$ for all $T$ simultaneously.

Next, we expand $B_{T,S} = \frac \lambda n v_T v_S^\top + G_{T,S}$, and write $u_1 = \frac \lambda n P v_T v_S^\top w_s$ and $u_2 = P G_{T,S} w_s$.
We see that 
\[\deloc_{\eps}(w_T) \le 3(\deloc_{\eps}(u) + \deloc_{\eps}(u_1) + \deloc_{\eps}(u_2))\]
by the Almost-Triangle Inequality, so it suffices to show that each of these terms is properly bounded.
The first term follows from~\pref{prop:deloc-top-eig} applied to $X_{T,T}$, and thus $\deloc_{\eps}(u) \le 27 (c + 1) \alpha \frac{\widetilde \lambda}{\widetilde \lambda - 2}$.

For the second term, we can naively bound $\deloc_{\eps}(u_1) \le \deloc_1(u_1)$ and apply~\pref{prop:toy-deloc} to see that
\begin{align*}
	\deloc_{\eps}(u_1) &\le \frac{\lambda^2}{n^2} \|P v_T v_S^\top w_S\|_2^2\\
	& = \frac{\lambda^2}{n^2} \langle v_S, w_S\rangle^2 \|P v_T\|_2^2\\
	& \le \frac{\lambda^2}{n^2 }\deloc_{\eps}(v) \cdot \|w_S\|_2^2 \cdot \|P\|_{\op}^2 \cdot \|v_T\|_2^2\\
	&\le C\alpha \left(\frac{\lambda}{\mu - 2}\right)^2.
\end{align*}

What remains is to handle $\deloc_{\eps}(u_2)$. We do so by union bounding over all possible $S$ and $T$, and additionally discretizing $\mu$ up to error $\frac{\widetilde \lambda}{n^2}$ (note that $\mu \le \|E\|_{\op} + \|X\|_{\op} \le 7\widetilde \lambda$, so the number of such $\mu$ is at most $O(n^2)$).
Doing this discretization gives $\|\hat{P}\|_{\op} = (1\pm o_n(1)) \|P\|_{\op}$.

Given a fixed $S, T, \mu$, it follows that $G_{T,S}$ is independent of $P$.  
Furthermore, note that we can, without loss of generality, assume that $\|w_S\|_2^2 = 1$. 
Now, take a $1/4$-net $\calN$ over possible $w_S$: this net has size $9^{|S|} \le \exp(c|S|)$.
We are interested in $\max_{|U|\le \eps n} \|\Pi_U P G_{T,S} w_S\|_2^2$ (where $\Pi_U$ is the projector to the coordinates in $U$), so fix a particular such set and let $\calM$ be a $1/4$-net over $z$ supported only on $U$, thus of size $9^{\eps n}$. Then, we see that
\[\|\Pi_U P G_{T,S} w_S \|_2^2 \le \|\Pi_U P G_{T,S}\|_{\op}^2 \le 4\max_{z_1\in \calN, z_2\in \calM} (z_1^\top PG_{T,S} z_2)^2.\]

Recalling that $z_1^\top PG_{T,S} z_2\sim N(0, \frac 1n \|P z_1\|_2^2 \|z_2\|_2^2)$ and noting that this variance term is at most $\frac{1}{(\mu - 2)^2}$, it follows that
\[\Pr[(z_1^\top PG_{T,S} z_2)^2 > t] \le 2\exp\left(-\frac{n(\mu - 2)^2 t}{2}\right).\]
The total size of our union bound is at most (using script symbols to denote collections of their respective sets)
\[|\calR| \cdot |\calS|\cdot |\calU|\cdot |\calN|\cdot |\calM| \cdot O(n^2) \le \exp\left(3c\eps n \log \frac 1\eps  + \eps n + C\log n\right) \le \exp\left(5c\eps n \log \frac 1\eps\right).\]

Buoyed by this, we choose $t = 20c \eps \log \frac 1\eps \cdot \frac{1}{(\mu - 2)^2} \le 20c\alpha \frac{1}{(\mu - 2)^2} $, which implies that $\deloc_{\eps}(u_2) \le 4t$ with high probability.

Therefore, it follows that
\[\deloc_{\eps}(w_T) \le 3\left(27(c + 1) \alpha \frac{\widetilde \lambda}{\widetilde \lambda - 2} + C\alpha \left(\frac{\lambda}{\mu - 2}\right)^2 + 20c\alpha \frac{1}{(\mu - 2)^2}\right) \le Cc\alpha \left(\frac{\lambda}{\mu - 2}\right)^2\]
for an appropriate constant $C$ and noting that $\lambda > 1$ (thus $\widetilde \lambda \le 2\lambda$).
Now, recall that $\mu \ge \widetilde \lambda (1 - (6 + 6c)\alpha)$. Thus, we can write
\[\mu - 2\ge \widetilde \lambda (1 - (6 + 6c)\alpha) - 2 = (\widetilde \lambda - 2)(1 - (6 + 6c)\alpha) - (12 + 12c)\alpha.\]
Recalling that $\frac{1}{x - \delta} \le \frac{2}{x}$ whenever $x \ge 2\delta$ and $\delta \le \frac 12$ (since $2(x-\delta) \ge x$), it follows that for small enough $\eps \ll \widetilde \lambda - 2$ (and thus, $\alpha$),
\[\frac{\lambda}{\mu - 2} \le \frac{\lambda}{\widetilde \lambda - 2}\cdot \frac{2}{1 - (6 + 6c)\alpha} \le 4\frac{\widetilde \lambda}{\widetilde \lambda - 2}.\]
This completes the proof as it implies that $\deloc_{\eps}(w_T) \le Cc\alpha \cdot \left(\frac{\widetilde \lambda}{\widetilde \lambda - 2}\right)^2$.
\end{proof}

\end{document}